\newcommand{\bone}{\mathds{1}}
\DeclareMathOperator*{\argmax}{argmax}
\theoremstyle{plain}
\newtheorem{theorem}{Theorem}
\newtheorem{lemma}[theorem]{Lemma}
\newtheorem{corollary}[theorem]{Corollary}
\newtheorem{proposition}[theorem]{Proposition}
\theoremstyle{definition}
\title{Envy-Free and Pareto-Optimal Allocations for \\ Agents with Asymmetric Random Valuations}
\author{
Yushi Bai$^1$
\And
Paul G\"olz$^2$
\affiliations
$^1$Institute for Interdisciplinary Information Sciences, Tsinghua University, China.\\
$^2$Computer Science Department, Carnegie Mellon University, Pittsburgh, PA, USA.\\
\emails
bys18@mails.tsinghua.edu.cn, pgoelz@cs.cmu.edu
}
\newcommand{\new}[1]{#1}
\newif\iffullversion
\newcommand{\full}[2]{#1}
\newcommand{\full}[2]{#2}
\begin{document}

\maketitle

\begin{abstract}
We study the problem of allocating $m$ indivisible items to $n$ agents with additive utilities.
It is desirable for the allocation to be both fair and efficient, which we formalize through the notions of \emph{envy-freeness} and \emph{Pareto-optimality}. 
While envy-free and Pareto-optimal allocations may not exist for arbitrary utility profiles, previous work has shown that such allocations exist with high probability assuming that all agents' values for all items are independently drawn from a common distribution.
In this paper, we consider a generalization of this model where each agent's utilities are drawn independently from a distribution \emph{specific to the agent}.
We show that envy-free and Pareto-optimal allocations are likely to exist in this asymmetric model when $m=\Omega\left(n\, \log n\right)$, which is tight up to a log log gap that also remains open in the symmetric subsetting.
Furthermore, these guarantees can be achieved by a polynomial-time algorithm.
\end{abstract}

\section{Introduction}
\label{sec:intro}
Imagine that the neighborhood children go trick-or-treating and return successfully, with a large heap of candy between them.
They then try to divide the candy amongst themselves, but quickly reach the verge of a fight:
Each has their own conception of which sweets are most desirable, and, whenever a child suggests a way of splitting the candy, another child feels unfairly disadvantaged.
As a (mathematically inclined) adult in the room, you may wonder:
Which allocation of candies should you suggest to keep the peace? And, is it even possible to find such a fair distribution?

In this paper, we study the classic problem of fairly dividing $m$ items among $n$ agents~\citep{BCM16}, as exemplified by the scenario above.
We assume that the items we seek to divide are goods (i.e., receiving an additional piece of candy never makes a child less happy), that items are indivisible (candy cannot be split or shared), and that the agents have \emph{additive} valuations (roughly: a child's value for a piece of candy does not depend on which other candies they receive).

We will understand an allocation to be fair if it satisfies two axioms: \emph{envy-freeness} (EF) and \emph{Pareto-optimality} (PO). 
First, fair allocations should be envy-free, which means that no agent should strictly prefer another agent's bundle to their own.
After all, if an allocation violates envy-freeness, the former agent has good reason to contest it as unfair.
Second, fair allocations should be Pareto-optimal, i.e., there should be no reallocation of items making some agent strictly better off and no agent worse off.
Not only does this axiom rule out allocations whose wastefulness is unappealing; it is also arguably necessary to preserve envy-freeness:
Indeed, if a chosen allocation is envy-free but not Pareto-optimal, rational agents can be expected to trade items after the fact, which might lead to a final allocation that is not envy-free after all.
Unfortunately, even envy-freeness alone is not always attainable. For instance, if two agents like a single item, the agent who does not receive it will always envy the agent who does.

Motivated by the fact that worst-case allocation problems may not have fair allocations, a line of research in fair division studies asymptotic conditions for the existence of such allocations, under the assumption that the agents' utilities are random rather than adversarially chosen~\cite[e.g.][]{DGK+14,MS19}.
Specifically, these papers assume that all agents' utilities for all items are independently drawn from a common distribution $\mathcal{D}$, a model which we call the \emph{symmetric model}.
Among the algorithms shown to satisfy envy-freeness in this setting, only one is also Pareto-optimal: the (utilitarian) \emph{welfare-maximizing algorithm}, which assigns each item to the agent who values it the most.
This algorithm is Pareto-optimal, and it is also envy-free with high probability as the number of items $m$ grows in $\Omega\left(n \, \log n\right)$.\footnote{In fact, \citet{DGK+14} prove this result for a somewhat more general model than the one presented above (certain correlatedness between distributions is also allowed), but their model assumes the key symmetry between agents that we discuss below.}
Since envy-free allocations may exist with only vanishing probability for $m \in \Theta\left(n \, \log n / \log \log n\right)$ in the symmetric model~\citep{MS19}, the above result characterizes almost tightly when envy-free and Pareto-optimal allocations exist in this model.

Zooming out, however, this positive result is unsatisfying in that, outside of this specific random model, the welfare-maximizing algorithm can hardly be called ``fair'':
For example, if an agent~A tends to have higher utility for most items than agent~B, the welfare-maximizing algorithm will allocate most items to agent~A, which can cause large envy for agent~B.
In short, the welfare-maximizing algorithm leads to fair allocations only because the model assumes each agent to be equally likely to have the largest utility, an assumption that limits the lessons that can be drawn from this model.

Motivated by these limitations of prior work, this paper investigates the existence of fair allocations in a generalization of the symmetric model, which we refer to as the \emph{asymmetric model}.
In this model, each agent $i$ is associated with their own distribution $\mathcal{D}_i$, from which their utility for all items is independently drawn.
\begin{figure}
    \centering
    \includegraphics[width=0.9\linewidth]{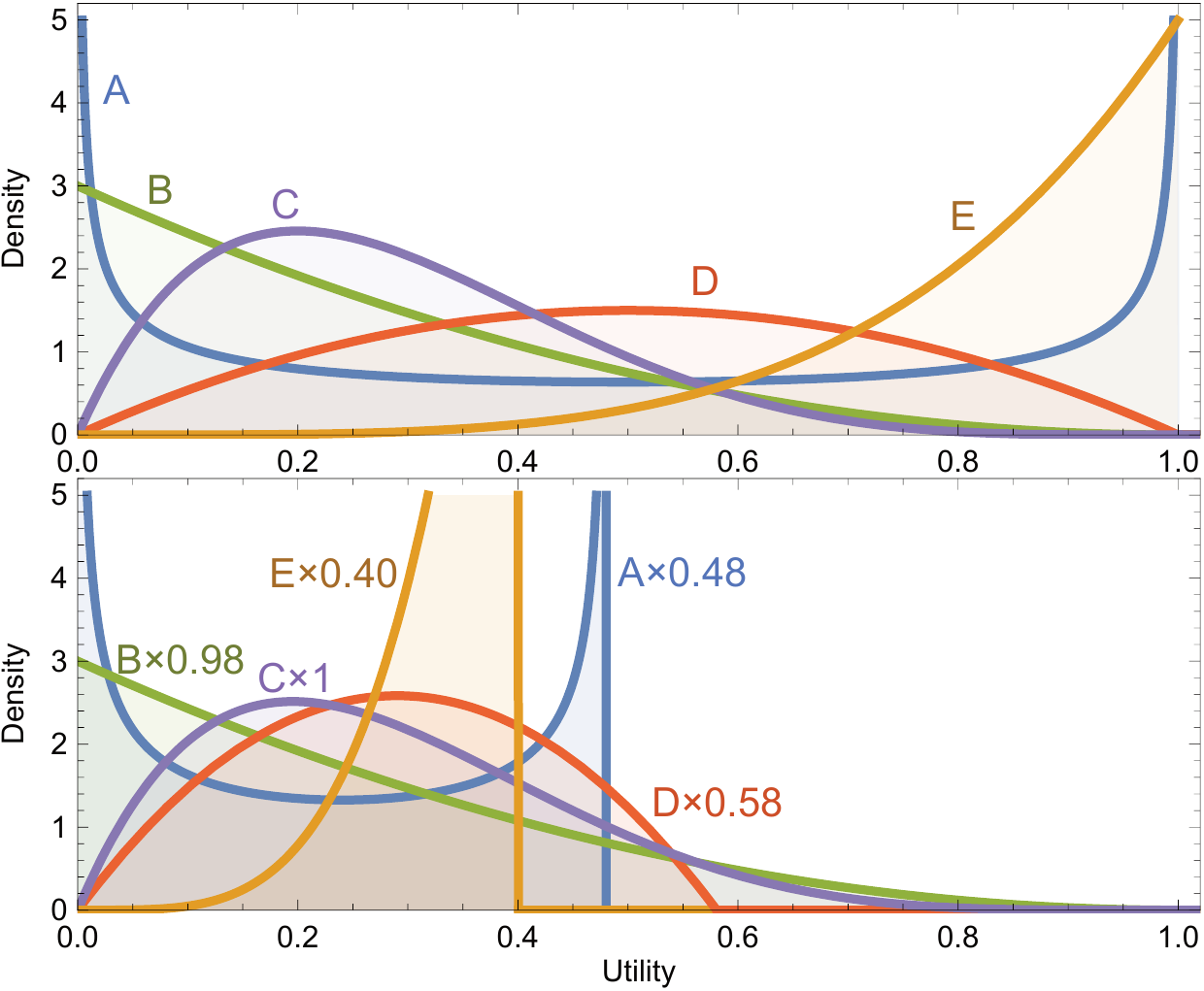}
    \caption{The top panel shows probability density functions of five agents' utility distributions. The bottom panel shows densities after scaling distributions by the given multipliers. When drawing an independent sample from each scaled distribution, each sample is the largest with probability $1/5$.}
    \label{fig:scaling}
\end{figure}
Within this model, we aim to answer the question: \emph{When do envy-free and Pareto-optimal allocations exist for agents with asymmetric valuations?}

\subsection{Our Techniques and Results}
\label{sec:results}
\begin{figure}
\centering
\includegraphics[width=.95\linewidth]{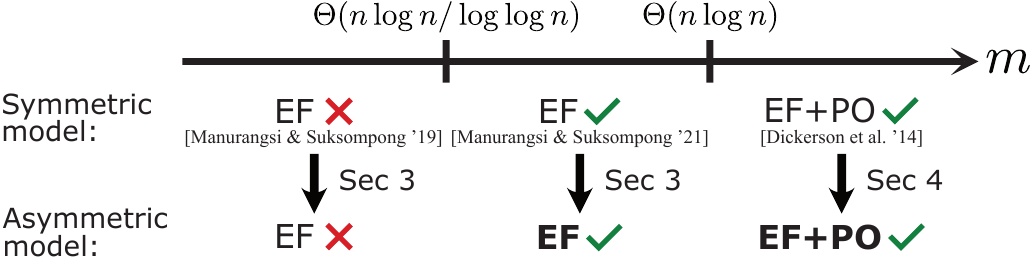}
\caption{Existing and new results on when EF and EF+PO allocations are guaranteed to exist in both models. Bold results are new.}
\label{fig:result}
\end{figure}
In \cref{sec:symm}, we study which results in the symmetric model generalize to the asymmetric model.
In particular, we apply an analysis by \citet{MS20} to the asymmetric model in a black-box manner to prove envy-free allocations exist when $m \in \Omega\left(n \, \log n / \log \log n\right)$, which is tight with existing impossibility results on envy-freeness. However, this approach does not preserve Pareto-optimality.

Using a new approach, we prove in \cref{sec:existence} that generalizing the random model from symmetric to asymmetric agents does not substantially decrease the frequency of envy-free and Pareto-optimal allocations.
The key idea is to find a multiplier $\beta_i > 0$ for each agent $i$ such that, when drawing an independent sample $u_i$ from each utility distribution $\mathcal{D}_i$, each agent $i$ has an equal probability of $\beta_i u_i$ being larger than the $\beta_{j} u_{j}$ of all other agents $j \neq i$, which we call the agent's \emph{resulting probability} from these multipliers.
\cref{fig:scaling} illustrates how five utility distributions can be rescaled in this way.
If all resulting probabilities of a set of multipliers equal $1/n$, we call these multipliers \emph{equalizing}.

A set of equalizing multipliers defines what we call its \emph{multiplier allocation}, which assigns each item to the agent $i$ whose utility weighted by $\beta_i$ is the largest.
Put differently, the multiplier allocation simulates the welfare-maximizing algorithm in an instance in which each agent $i$'s distribution is scaled by $\beta_i$.
Just like the welfare-maximizing allocations, the multiplier allocations are Pareto-optimal by construction, and the similarity between both allocation types allows us to apply proof techniques developed for the welfare-maximizing algorithm and the symmetric setting to show envy-freeness.

The core of our paper is a proof that equalizing multipliers always exist, which we show using Sperner's lemma.
Since an algorithm based on this direct proof would have exponential running time, we design a polynomial-time algorithm for computing approximately equalizing multipliers, i.e., multipliers whose resulting probabilities lie within $[1/n - \delta, 1/n + \delta]$ for a $\delta > 0$ given in the input.

Having established the existence of equalizing multipliers, we go on to show that the multiplier allocation is envy-free with high probability.
To obtain this result, we demonstrate a constant-size gap between each agent's expected utility for an item conditioned on them receiving the item and the agent's expected utility conditioned on another agent receiving the item, and then use a variant of the argument of \citet{DGK+14} to show that multiplier allocations are envy-free with high probability when $m \in \Omega\left(n \, \log n\right)$.
This guarantee extends to the case where we allocate based on multipliers that are sufficiently close to equalizing, which means that our polynomial-time \emph{approximate multiplier algorithm} is Pareto-optimal and envy-free with high probability.

In \cref{sec:experiments}, we empirically evaluate how many items are needed to guarantee envy-free and Pareto-optimal allocations for a fixed collection of agents.
We find that the approximate multiplier algorithm needs relatively large numbers of items to ensure envy-freeness; that the round robin algorithm violates Pareto-optimality in almost all instances; and that the Maximum Nash Welfare (MNW) algorithm achieves both axioms already for few items but that its running time limits its applicability.
For larger numbers of items, the approximate multiplier algorithm satisfies both axioms and excels by virtue of its running time.

\subsection{Related work}
The question of when fair allocations exist for random utilities was first raised by \citet{DGK+14}, whose main result we have already discussed.
Our paper also builds on work by \citet{MS19,MS20}, who prove the lower bound on the existence of envy-free allocations mentioned in the introduction and that the classic round robin algorithm produces envy-free allocations in the symmetric model for slightly lower $m$ than the welfare-maximizing algorithm.
A bit further afield, \citet{Suksompong16} and \citet{AMN+17} study the existence of proportional and maximin-share allocations (two relaxations of envy-freeness) in the symmetric model, and \citet{MS17a} study envy-freeness when items are allocated to groups rather than to individuals.
None of these papers consider Pareto-optimality, perhaps because fair division yields few tools for simultaneously guaranteeing envy-freeness and Pareto-optimality.

The asymmetric model we investigate has been previously used, for example, by \citet{KPW16} to study the existence of maximin-share allocations.
While part of their proof applies the results by \citeauthor{DGK+14} to construct envy-free allocations in the asymmetric model, as do we, their allocation algorithm is not Pareto-optimal (see \cref{sec:symefpo}).
\citet{FGH+19} also consider maximin-share allocations in the asymmetric model, for agents with weighted entitlements.
\citet{ZP20} study allocation problems in the asymmetric model, when items arrive online.
While they do consider and achieve Pareto-optimality, they only obtain approximate notions of envy-freeness.
\new{Finally, \citet{smoothed} study the existence of envy-free allocations and of both proportional and Pareto-optimal allocations in an expressive utility model based on smoothed analysis.}

\section{Preliminaries}
\label{sec:model}
\paragraph{General Definitions.}
We consider a set $M$ of $m$ indivisible items being allocated to a group $N=\{1, \dots, n\}$ of $n$ agents. Each agent $i\in N$ has a \emph{utility} $u_i(\alpha)\geq0$ for each item $\alpha\in M$, indicating their degree of preference for the item. The collection of agent--item utilities make up a \emph{utility profile}. An \emph{allocation} $\mathcal{A}=\{A_i\}_{i\in N}$ is a partition of the items into $n$ \emph{bundles}: $M = A_1 \cup \cdots \cup A_n$, where agent $i$ gets the items in bundle $A_i$. Under our assumption that the agents' utilities are \emph{additive}, agent $i$'s \emph{utility} for a subset of items $A\subseteq M$ is $u_i\left(A\right)=\sum_{\alpha\in A}u_i\left(\alpha\right)$. 

An allocation $\mathcal{A}=\{A_i\}_{i\in N}$ is said to be \emph{envy-free} (EF) if $u_i(A_i)\geq u_i(A_j)$ for all $i,j \in N$, i.e., if each agent weakly prefers their own bundle to any other agent's bundle. 
We say that an allocation $\mathcal{A}=\{A_i\}_{i\in N}$ is \emph{Pareto dominated} by another allocation $\mathcal{A'}=\{A'_i\}_{i\in N}$ if $u_i(A_i)\leq u_i(A'_i)$ for all $i\in N$, with at least one inequality holding strictly. 
An allocation is \emph{Pareto-optimal} (PO) if it is not Pareto dominated by any other allocation.
An allocation is called \emph{fractionally Pareto-optimal} (fPO) if it is not even Pareto dominated by any ``fractional'' allocation of items.
For our purposes, it suffices to note that an allocation is fPO iff there exist multipliers $\{\beta_i > 0\}_{i \in N}$ such that each item $\alpha$ is allocated to an agent $i$ with maximal $\beta_i u_i(\alpha)$~\citep{Negishi60}.

\paragraph{Asymmetric Model.}
In our asymmetric model, each agent $i$ is associated with a \emph{utility distribution} $\mathcal{D}_i$, a nonatomic probability distribution over $[0, 1]$.
The model assumes that the utilities $u_i(\alpha)$ for all $\alpha\in M$ are independently drawn from $\mathcal{D}_i$. 
For simplicity, we just write $u_i$ as a random variable for $u_i(\alpha)$ if we are not talking about a specific item $\alpha$, where $u_i\sim \mathcal{D}_i$.
Let $f_i$ and $F_i$ denote the probability density function (PDF) and cumulative distribution function (CDF) of $\mathcal{D}_i$.
For our main result, we make the following assumptions on utility distributions:
\emph{(a) Interval support}: The support of each $\mathcal{D}_i$ is an interval $[a_i, b_i]$ for $0 \leq a_i < b_i \leq 1$. 
\emph{(b) $\left(p, q\right)$-PDF-boundedness}: For constants $0<p<q$, the density of each $\mathcal{D}_i$ is bounded between $p$ and $q$ within its support. These two assumptions are weaker than those by \citet{MS20}, who additionally require all distributions to have support $[0,1]$.
A random event occurs \emph{with high probability} if the event's probability converges to 1 as $n \to \infty$.

\section{Takeaways From the Symmetric Model}
\label{sec:symm}
We quickly review results obtained in the symmetric model, and to which degree they carry over to the asymmetric model.

\paragraph{Non-Existence of EF Allocations:}
Since the symmetric model is a special case of the asymmetric model\,---\,in which all $\mathcal{D}_i$ are equal\,---\,this negative result immediately applies:
\begin{proposition}[\citealt{MS19}]\footnote{Here, we present a special case; the original result holds for different choices of distribution and leaves some flexibility in $m$.}
There exists $c > 0$ such that, if $m = (\lfloor c \, \log n / \log \log n \rfloor + 1/2) \, n$ and all utility distributions are uniform on $[0,1]$, then, with high probability, no envy-free allocation exists.
\end{proposition}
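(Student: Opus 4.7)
The approach is the probabilistic method: bound the expected number of envy-free allocations by a sum over allocations of their envy-freeness probabilities, show this expectation tends to zero, and invoke Markov's inequality. The choice $m = (k + 1/2)\, n$ for $k := \lfloor c \log n / \log \log n \rfloor$ is engineered precisely so that no allocation distributes items evenly, forcing every allocation to contain bundles of at least two distinct sizes.

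I would first restrict attention to \emph{balanced} allocations in which every agent receives either $k$ or $k+1$ items (hence exactly $n/2$ of each, since $m = (k+1/2)\, n$). Any allocation with some agent $i$ having $|A_i| \leq k-1$ and some $j$ having $|A_j| \geq k+1$ satisfies $\mathbb{E}[u_i(A_j) - u_i(A_i)] \geq 1$, which dwarfs the $O(\sqrt{k \log n})$ typical fluctuation; a union bound over such unbalanced allocations rules them out with high probability. For a fixed balanced allocation and an agent $i$ with $|A_i| = k$, conditioning on $u_i(A_i) = s$ makes the $u_i(A_j)$ for $j \neq i$ mutually independent Irwin--Hall variables on disjoint items, so non-envy at $i$ factorises conditionally as $\prod_{j \neq i} F_{|A_j|}(s)$. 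Integrating against the density of $u_i(A_i)$, the dominant contribution comes from $s$ very close to $k$, where one exploits the boundary behaviour of the Irwin--Hall CDF ($\Pr[S_\ell > \ell - \epsilon] = O(\epsilon^\ell / \ell!)$) with the optimal $\epsilon$ turning out to be of order $1/\log\log n$. Since the non-envy events for the $n/2$ small-bundle agents live on disjoint rows of the utility matrix, they are mutually independent and the per-allocation probability is their product. Multiplying by the count of balanced allocations $\binom{n}{n/2}\cdot m!/((k!)^{n/2} ((k+1)!)^{n/2}) \approx \exp(\Theta(n k \log n))$ (by Stirling) and choosing $c$ small enough makes the expectation vanish, completing the argument.

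The main obstacle is calibrating the per-agent non-envy bound precisely enough to extract the correct $\log \log n$ improvement in the exponent. A naive Gaussian or Hoeffding tail for a sum of $k$ uniforms gives only $n^{-\Theta(1)}$ per agent and $\exp(-\Theta(n))$ per allocation, which falls well short of cancelling the $\exp(\Theta(n k \log n))$ allocation count for any $k \to \infty$. The tighter bound requires working in the moderate-deviation regime near the upper endpoint of the Irwin--Hall distribution, where the finite support of $[0,1]$ produces a much thinner tail than the Gaussian approximation predicts; this is the source of the $\log\log n$ factor in the threshold. A secondary subtlety is the reduction to balanced allocations: one must verify that the envy-freeness probability for very unbalanced allocations decays fast enough with imbalance to survive a union bound over all imbalance profiles, since in principle a smaller-bundle agent could be envy-free if they happen to receive extraordinarily favourable items.
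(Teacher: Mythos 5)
The paper does not prove this proposition at all; it is stated purely as a citation to Manurangsi and Suksompong (2019), and no proof appears anywhere in the body or the appendix. So there is no ``paper proof'' to compare against, and I am evaluating your reconstruction on its own.

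Your overall strategy is the right one and matches the approach in the cited work: bound the expected number of envy-free allocations, exploit the fact that different agents' envy events live in independent rows of the utility matrix, and drive the per-agent non-envy probability down by exploiting the extremely thin upper tail of the Irwin--Hall distribution at the bundle endpoint. Your identification of the count of balanced allocations as $\exp\big(\Theta(nk\log n)\big)$ is correct, and your remark that a naive Gaussian tail gives only $\exp(-\Theta(n))$ per allocation and cannot cancel this count is also correct; the $\log\log n$ factor in the threshold really does trace back to the factorial decay $\Pr[S_{k+1} > k] = 1/(k+1)!$, since $(k+1)! \approx e^{k\log k}$ forces $k \approx \log n/\log\log n$ at the crossover.

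The concrete gap is your reduction to balanced allocations. You claim that an allocation with $|A_i| \leq k-1$ and $|A_j| \geq k+1$ has $\mathbb{E}[u_i(A_j) - u_i(A_i)] \geq 1$, ``which dwarfs the $O(\sqrt{k\log n})$ typical fluctuation.'' This is backwards: the expected gap is $\Theta(1)$ while the standard deviation of $u_i(A_j) - u_i(A_i)$ is $\Theta(\sqrt{k})$, which \emph{grows}. For a mildly unbalanced allocation (say $|A_i| = k-1$, $|A_j| = k+1$), the probability that $i$ does not envy $j$ is $\Phi(-\Theta(1/\sqrt{k})) \to 1/2$, not small, and there are roughly as many such allocations as balanced ones, so the proposed union bound over unbalanced allocations simply does not close. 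What actually works is to apply the same endpoint bound uniformly: for any agent $i$ with bundle size $k_i$ and any agent $j$ with $k_j > k_i$, one has $\Pr[u_i(A_j) \leq u_i(A_i)] \leq \Pr[u_i(A_j) \leq k_i] \leq F_{k_i+1}(k_i) = 1 - 1/(k_i+1)!$, and this bound only improves as $k_i$ shrinks. So no separate reduction to balanced allocations is needed, and the one you propose would need a different justification anyway. A secondary, smaller issue is the claim that the ``optimal $\epsilon$'' in the Irwin--Hall tail is of order $1/\log\log n$; the relevant quantity is simply the constant $\Pr[S_{k+1} > k] = 1/(k+1)!$, and the per-agent bound $(1 - 1/(k+1)!)^{n/2}$ already suffices without any delicate optimization over $\epsilon$.
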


\paragraph{Existence of EF Allocations:}
In the symmetric model, \citet{MS20} give an allocation algorithm, round robin, that satisfies EF with high probability.
An interesting property of this algorithm is that an agent's allocation given a utility profile depends not on the \emph{cardinal} information of the agents' utilities, but only on their \emph{ordinal} preference order over items.
Using this property, we prove in \full{\cref{app:rr}}{the full version} that their result generalizes to the asymmetric model since, in a nutshell, an agent $i$'s envy of the other agents is indistinguishable between the asymmetric model and a symmetric model with common distribution $\mathcal{D}_i$.
\begin{restatable}{proposition}{proprr}
\label{prop:rr}
When distributions have interval support and are $(p, q)$-PDF-bounded, if $m \in \Omega\left(n \, \log n / \log \log n\right)$, an envy-free allocation exists with high probability.
\end{restatable}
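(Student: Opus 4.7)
The plan is to leverage the ordinal nature of round robin to reduce the envy-freeness analysis in the asymmetric model, one agent at a time, to the symmetric-model guarantee established by \citet{MS20}, and then to take a union bound over agents.

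Fix an arbitrary agent $i$. I would show that the probability that $i$ envies some other agent under round robin in the asymmetric model equals the corresponding probability in an auxiliary \emph{$i$-mirror} symmetric model in which every agent's utility distribution is $\mathcal{D}_i$. The reasoning is twofold: (a) round robin's output is determined by each agent's ordinal ranking of the items alone; and (b) because each $\mathcal{D}_j$ is nonatomic, every agent's ranking is a uniformly random permutation of $M$, independent across agents, in both models. Consequently, the joint distribution of the allocation $(A_1, \dots, A_n)$ together with agent $i$'s utilities $(u_i(\alpha))_{\alpha \in M}$ is identical in the two models, and so is the distribution of the envy quantities $\bigl\{u_i(A_i) - u_i(A_j)\bigr\}_{j \neq i}$.

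With this distributional equivalence in hand, I would apply the per-agent tail bound implicit in the \citet{MS20} proof to conclude that, in the $i$-mirror model, $i$ envies \emph{some} other agent with probability $o(1/n)$ whenever $m \in \Omega(n \log n / \log \log n)$. Transferring this bound to the asymmetric model via the equivalence and union bounding over $i \in N$ then yields envy-freeness with probability $1 - o(1)$.

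The main obstacle is that \citet{MS20} state their theorem for distributions supported on the entire interval $[0,1]$, whereas each $\mathcal{D}_i$ may be supported only on a subinterval $[a_i, b_i]$. I would overcome this in one of two ways: either by inspecting their proof to verify that only interval support and $(p,q)$-PDF-boundedness enter the quantitative estimates, or, if a shift cannot be avoided, by applying the affine rescaling $x \mapsto (x - a_i)/(b_i - a_i)$ to $\mathcal{D}_i$ and noting that the resulting additive change to the envy gap is at most $|a_i| \cdot \bigl||A_i| - |A_j|\bigr| \leq 1$, since round robin's bundles differ in size by at most one. This is negligible compared to the envy margin that \citet{MS20}'s concentration argument furnishes for $m \in \Omega(n \log n / \log \log n)$, so the rescaled guarantee transfers cleanly back to the original distribution.
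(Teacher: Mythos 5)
Your reduction from the asymmetric model to a per-agent symmetric ``mirror'' model via the ordinal structure of round robin is exactly the paper's approach, and the union-bound step is fine (the paper instead proves a per-pair bound of $O(1/m^3)$ and union-bounds over pairs, but either bookkeeping works). The gap is in your handling of the support interval $[a_i, b_i]$. After the affine rescaling $x \mapsto (x - a_i)/(b_i - a_i)$, agent $i$'s envy for $j$ in the original units equals $(b_i - a_i)$ times the envy in transformed units plus the offset $a_i(|A_j| - |A_i|)$. When $|A_j| = |A_i| + 1$, this offset is $a_i > 0$, so positive envy in original units corresponds to transformed envy exceeding the strictly \emph{negative} threshold $-a_i/(b_i - a_i)$. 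The result of \citet{MS20} only bounds the probability that the envy is positive (equivalently, exceeds $0$), not the probability that it exceeds a strictly negative constant. You assert there is ``an envy margin that \citet{MS20}'s concentration argument furnishes'' which swallows the offset, but no such margin comes for free from their theorem statement; their argument is calibrated so that, on the good event, the bound on the envy lands at exactly $0$.

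The paper closes this gap by reopening the \citet{MS20} proof and strengthening the bad event: replacing (E1) $\sum_{t \le T} Y_t^{i,i'} \ge T-2$ with (E1') $\sum_{t \le T} Y_t^{i,i'} \ge T - 2c$ for a suitable constant $c > 1$, while enlarging $T$ by the same factor $c$. This yields ``envy $\le -(c-1)$'' with probability $1 - O(1/m^3)$ rather than merely ``envy $\le 0$,'' and choosing $c = b_i/(b_i - a_i)$ (which is $O(1)$ because $(p,q)$-PDF-boundedness forces $b_i - a_i \ge 1/q$) gives a margin of exactly $a_i/(b_i - a_i)$, precisely what is needed to absorb the offset. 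So your high-level plan is right, but the step you treat as ``cleanly transferring'' actually requires a nontrivial modification of the cited proof, and in the form you state it the argument does not go through.
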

To our knowledge, we are the first to observe that the analysis by \citeauthor{MS20} generalizes in this way, which improves on the previously best known upper bound of $m \in \Omega\left(n \, \log n\right)$ in the asymmetric model due to \citet{KPW16}.
\paragraph{Existing Approaches Do Not Provide EF+PO:}
\label{sec:symefpo}
Generalizing the existence result for EF and PO allocations by \citet{DGK+14} to the asymmetric model is more challenging than the round robin result above, since cardinal information is crucial for the PO property.
In \full{\cref{app:maxpercentile}}{the full version of the paper}, we illustrate this point by considering how \citet{KPW16} apply the theorem of \citeauthor{DGK+14} to prove the existence of EF allocations in the asymmetric model; namely, they assign each item to the agent for whom the item is in the highest percentile of their utility distribution.
On an example, we show that this approach fundamentally violates PO, and that assigning items based on multipliers is the most natural way to guarantee PO.
\new{\full{In \cref{app:normalize},}{In the full version,} we also give an example showing that normalizing each agent's values to add up to one\,---\,perhaps the most obvious way to obtain multipliers\,---\,is not sufficient to provide EF.}

\section{Existence of EF+PO Allocations}
\label{sec:existence}

We now prove our main theorem:
\begin{theorem}
\label{thm:existence}
Suppose that all utility distributions have interval support and are $(p,q)$-PDF-bounded for some $p,q$.
If  $m \in \Omega\left(n \, \log n\right)$ as $n \to \infty$,\footnote{Alternatively, we may assume $n \in O\left(m / \log m\right)$ as $m \to \infty$ to avoid the assumption that $n \to \infty$, as do \citet{DGK+14}.} then, with high probability, an envy-free and (fractionally) Pareto-optimal allocation exists and can be found in polynomial time.
\end{theorem}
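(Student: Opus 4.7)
The plan is to follow the three-pronged strategy outlined in \cref{sec:results}: construct a set of \emph{equalizing multipliers} $\{\beta_i^*\}_{i \in N}$, argue that the \emph{multiplier allocation} they induce is both fPO and envy-free with high probability, and exhibit a polynomial-time algorithm that computes sufficiently good approximate multipliers. Pareto-optimality is essentially immediate: by the Negishi characterization noted in the preliminaries, assigning each item $\alpha$ to an agent $i$ maximizing $\beta_i u_i(\alpha)$ is fractionally Pareto-optimal for any positive multipliers, so fPO (and hence PO) holds by construction of the multiplier allocation.

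For the existence of equalizing multipliers, I would work on the probability simplex $\Delta = \{\beta \in \mathbb{R}_{\geq 0}^n : \sum_i \beta_i = 1\}$ (scaling all multipliers leaves the allocation invariant) and consider the continuous map $\Phi : \Delta \to \Delta$ sending $\beta$ to its vector of resulting probabilities $\Phi_i(\beta) = \Pr[\beta_i u_i \geq \beta_j u_j \text{ for all } j]$. The interval-support and $(p,q)$-PDF-boundedness assumptions ensure that $\Phi$ is continuous, and the boundary behavior is well controlled: $\beta_i \to 0$ forces $\Phi_i(\beta) \to 0$, while at the vertex $e_i$ one has $\Phi_i(e_i) = 1$. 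A Sperner labeling on $\Delta$ that assigns to each point $\beta$ some coordinate $i$ with $\Phi_i(\beta) \geq 1/n$ (and that respects the boundary constraint coming from the first property) then yields, via Sperner's lemma and compactness, a point $\beta^*$ with $\Phi(\beta^*) = (1/n, \dots, 1/n)$. I expect this existence argument, and specifically the boundary analysis legitimizing the Sperner labeling, to be the most delicate part of the proof.

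Fixing such $\beta^*$, I would analyze envy-freeness pairwise. For each ordered pair $(i,j)$ and each item $\alpha$, the random variables $u_i(\alpha) \, \bone[\alpha \in A_i]$ and $u_i(\alpha) \, \bone[\alpha \in A_j]$ are i.i.d.\ across items, and each item goes to $i$ or $j$ with probability $1/n$ by the equalizing property. The key structural step is to use PDF-boundedness to establish a constant-size gap $\mathbb{E}[u_i(\alpha) \mid \alpha \in A_i] - \mathbb{E}[u_i(\alpha) \mid \alpha \in A_j] \geq \Omega(1)$, adapting the argument of \citet{DGK+14} to the weighted-max setting: conditioning on $\alpha \in A_i$ biases $u_i(\alpha)$ upward, and PDF-boundedness forces this bias to be bounded away from zero uniformly in $\beta^*$. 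A Hoeffding-style concentration bound then yields $\Pr[u_i(A_j) > u_i(A_i)] \leq \exp(-\Omega(m/n))$, and a union bound over the $n(n-1)$ ordered pairs delivers global envy-freeness whenever $m/n = \omega(\log n)$, which is precisely the hypothesis $m \in \Omega(n \log n)$.

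For the polynomial-time claim, the exponential-time Sperner argument must be replaced by an efficient search. I would compute $\Phi$ numerically via quadrature against the densities $f_i$ and use a subdivision/binary-search procedure on $\Delta$ to locate multipliers $\beta$ with $\|\Phi(\beta) - (1/n, \dots, 1/n)\|_\infty \leq \delta$ for some sufficiently small $\delta$. The final verification is a robustness check: since both the conditional-expectation gap and the per-item probabilities $\Phi_i(\beta)$ depend continuously on $\beta$ under the PDF-boundedness, a small enough (still inverse-polynomial) $\delta$ preserves the constant gap and the $\Theta(1/n)$-scale allocation sizes, so the concentration argument above goes through unchanged. The obstacles I anticipate are therefore the Sperner boundary analysis and this quantitative robustness step; the concentration itself is a straightforward adaptation of existing techniques.
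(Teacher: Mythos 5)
Your high-level roadmap matches the paper's: equalizing multipliers exist via a Sperner argument plus compactness; the induced multiplier allocation is fPO by the Negishi characterization; a constant gap in conditional expectations plus Chernoff/Hoeffding concentration and a union bound over the $n(n-1)$ pairs yields envy-freeness whenever $m \in \Omega(n \log n)$. The existence and envy-freeness sketches are essentially the paper's.

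However, the polynomial-time step is a genuine gap. You propose to compute $\Phi$ by quadrature and run ``a subdivision/binary-search procedure on $\Delta$,'' but binary search has no obvious analogue in an $(n-1)$-dimensional simplex, and the paper explicitly shows that the natural algorithmic realization of the Sperner argument runs in time $\mathit{poly}(n)\,n!\,\bigl(\log(2q)(1 + 4nq/\delta)\bigr)^n$, i.e., exponential in $n$. The paper's actual algorithm is structurally different: an iterative multiplicative-update scheme (Algorithm~1) that in each round multiplies the multipliers of all agents with resulting probability $\le 1/n$ by a common factor $1+\epsilon$, and whose polynomial convergence hinges on three nontrivial structural properties of the map $\vec\beta \mapsto p_i(\vec\beta)$: local monotonicity, a Lipschitz-type bound ($p_i$ changes by at most $2q\epsilon$ when $\beta_i$ is scaled by $1+\epsilon$), and a global bound ($\beta_i/\beta_j \ge 2q$ forces $p_i > p_j$). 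None of these structural lemmas appear in your sketch, and without them the ``efficient search'' claim is unsupported. A secondary, smaller issue: you attribute the constant gap in conditional expectations to PDF-boundedness alone, but the paper shows PDF-boundedness does not suffice without interval support (there is a counterexample with disjoint two-piece supports where the gap is exactly zero), and a key intermediate step is to prove that the approximately-equalizing multipliers force the scaled supports to overlap on an interval of length $\Omega(1/q^2)$.
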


In \cref{sec:existence_multipliers}, we prove that we can always find multipliers $\{\beta_i\}_{i \in N}$ that equalize each agent's probability of receiving a random-utility item in the multiplier allocation (which allocates item $\alpha$ to the agent with maximal $\beta_i u_i(\alpha)$ and is trivially fPO).
We also discuss how to efficiently find multipliers leading to approximately equalizing probabilities.
Next, in \cref{sec:existence_gap}, we show that an agent's expected utility for an item allocated to themselves is larger by a constant than their expected utility for an item allocated to another agent.
In \cref{sec:puttingtogetheref}, we combine these properties to prove envy-freeness.

\subsection{Existence of Equalizing Multipliers}
\label{sec:existence_multipliers}
For a set of multipliers $\vec{\beta} \in \mathbb{R}^n_{>0}$ and an agent $i$, we denote $i$'s resulting probability by
\begin{align}
    p_i(\vec{\beta}) \coloneqq{}& \mathbb{P}[\beta_i u_i = \textstyle{\max_{j\in N}\beta_j u_j}]& \notag \\
    ={}& \int_0^1 f_i(u) \, \textstyle{\prod_{j \in N \setminus \{i\}}}  F_j(\beta_i / \beta_j \, u) \, du. \label{eq:piint}
\end{align}

\subsubsection{Existence Proof Using Sperner's Lemma}
The existence of equalizing multipliers can be established quite easily using Sperner's lemma:
\begin{theorem}
\label{thm:equalizing}
For any set of utility distributions, there exists a set of equalizing multipliers.
\end{theorem}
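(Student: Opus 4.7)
The plan is to exploit the scale-invariance $p_i(\lambda \vec{\beta}) = p_i(\vec{\beta})$ for $\lambda > 0$ to restrict attention to the closed simplex $\Delta = \{\vec{\beta} \in \mathbb{R}_{\geq 0}^n : \sum_j \beta_j = 1\}$, and then to obtain equalizing multipliers via Sperner's lemma. First I would extend each $p_i$ continuously from the interior of $\Delta$ to all of $\Delta$ using the direct probabilistic definition $p_i(\vec{\beta}) = \mathbb{P}\bigl[\beta_i u_i \geq \beta_j u_j \text{ for all } j\bigr]$, which is well-defined everywhere (unlike the integral form~(\ref{eq:piint}), whose ratios $\beta_i/\beta_j$ break down on the boundary). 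Continuity on $\Delta$ would follow from dominated convergence, since by nonatomicity the indicator integrand is discontinuous in $\vec{\beta}$ only on a set of $\vec{u}$-measure zero. Two boundary observations would be crucial: $p_k(\vec{e}_k) = 1$ at each corner, and $p_k(\vec{\beta}) = 0$ whenever $\beta_k = 0$, since then $\beta_k u_k = 0$ while $\beta_j u_j > 0$ almost surely for any $j$ with $\beta_j > 0$.

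Next I would set up a Sperner labeling $L : \Delta \to \{1, \ldots, n\}$ by assigning to each $\vec{\beta}$ some index $i$ with $p_i(\vec{\beta}) \geq 1/n$; such an index always exists because $\sum_i p_i(\vec{\beta}) = 1$. The two boundary observations make $L$ admissible: the corner $\vec{e}_k$ can be labeled $k$, and no point on the face $\{\beta_k = 0\}$ can be labeled $k$, since $p_k = 0 < 1/n$ there. Hence on every sub-face of $\Delta$ only the labels of its own vertices appear, which is the Sperner condition.

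Applying Sperner's lemma to a sequence of triangulations of $\Delta$ whose mesh tends to zero yields, for each triangulation $T$, a fully labeled cell with vertices $\vec{\beta}^{(1,T)}, \ldots, \vec{\beta}^{(n,T)}$ such that $p_i\bigl(\vec{\beta}^{(i,T)}\bigr) \geq 1/n$. By compactness of $\Delta$, a subsequence of these cells shrinks to a common point $\vec{\beta}^{\star} \in \Delta$, and continuity of each $p_i$ then gives $p_i(\vec{\beta}^{\star}) \geq 1/n$ for every $i$; since the $p_i$ sum to $1$, all inequalities must in fact be equalities. Finally, $\vec{\beta}^{\star}$ must lie in the relative interior of $\Delta$, for if $\beta_k^{\star} = 0$ we would get $p_k(\vec{\beta}^{\star}) = 0 \neq 1/n$. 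Thus $\vec{\beta}^{\star}$ is a strictly positive vector of equalizing multipliers, as required.

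The main obstacle I anticipate is justifying the continuity of $p_i$ up to the boundary of $\Delta$, and in particular the vanishing of $p_k$ on the face $\{\beta_k = 0\}$. Working from the probabilistic definition and invoking nonatomicity to kill both ties (so that $\geq$ versus $>$ is immaterial) and the event $\{u_j = 0\}$ (so that $\beta_j u_j > 0$ almost surely whenever $\beta_j > 0$) should resolve this; notably, the $(p,q)$-PDF-boundedness assumption is not needed for this theorem, as only nonatomicity enters the argument.
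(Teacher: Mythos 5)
Your proof is correct and follows essentially the same strategy as the paper: restrict to the simplex, build a Sperner labeling from the resulting probabilities, refine the triangulation, and pass to a limit using compactness and continuity of the $p_i$. The only minor differences are cosmetic — you label by any index with $p_i \geq 1/n$ rather than by $\arg\max_i p_i$, and you extend continuity to the closed simplex via dominated convergence rather than proving continuity only on the interior and separately arguing that the limit point is interior; both choices are valid and lead to the same conclusion.
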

\begin{proof}[Proof sketch]
Since scaling all multipliers by the same factor does not change the resulting probabilities, we may restrict our focus to multipliers within the $(n-1)$-dimensional simplex $S = \{ \vec{\beta} \in \mathbb{R}_{\geq0}^n \mid \sum_{i \in N} \beta_i = 1\}$.
We define a coloring function $f: S \rightarrow N$, which maps each set of multipliers in the simplex to an agent with maximum resulting probability.
Clearly, points on a face $\beta_i = 0$ are not colored with color belonging to agent $i$ since some other agent has a positive multiplier and must thus have a greater scaled utility than $i$.

Now, consider a simplicization of $S$, i.e., a partition of $S$ into small simplices meeting face to face (generalizing the notion of a triangulation in the 2-D simplex).
Sperner's lemma shows the existence of a small simplex that is panchromatic, i.e., whose $n$ vertices are each colored with a different agent.
This small simplex constitutes a neighborhood of multipliers such that, for each agent $i$, there is a set of multipliers $\vec{\beta}$ in this neighborhood such that agent $i$'s resulting probability is larger than that of any other agent, and, as a consequence, such that $i$ has a resulting probability of at least $1/n$.

By successively refining the simplicization, we can make these neighborhoods arbitrarily small.
In \full{\cref{app:sperner}}{the full version}, we prove the existence of a set of exactly equalizing multipliers, which follows from the Bolzano-Weierstraß Theorem and continuity of the functions $p_i$ on $\mathbb{R}_{>0}^n$\full{\ (\cref{app:continuous})}{}.
\end{proof}

\subsubsection{An Approximation Algorithm for Equalizing Multipliers}
The proof above is succinct, but not particularly helpful in finding equalizing multipliers computationally.\footnote{When measuring running time, we assume that the algorithm has access to an oracle allowing it to compute the $p_i(\vec{\beta})$ for a given $\vec{\beta}$ in constant time. This choice abstracts away from the distribution-specific cost and accuracy of computing the integral in \cref{eq:piint}.}
Though the application of Sperner's lemma can be turned into an approximation algorithm, using it to find multipliers such that all resulting probabilities lie within $\delta > 0$ of $1/n$ requires $\textit{poly}(n) \, n! \, \left(\log(2 \, q) \, \left(1 + \frac{4 \, n \, q}{\delta}\right)\right)^n$ time (\full{\cref{app:sperner}}{see full version}).
This large runtime complexity points to a more philosophical shortcoming of our proof of \cref{thm:equalizing}, namely, that it does very little to elucidate the structure of how multipliers map to resulting probabilities.
Given that the proof barely made use of any properties of the $p_i$ other than continuity, it is natural that the resulting algorithm resembles a complete search over the space of multipliers.

By contrast, our polynomial-time algorithm for finding approximately-equalizing multipliers will be based on three structural properties of the $p_i$ (proved in \full{\cref{app:boundmult}}{the full version}):
\paragraph{Local monotonicity:} If we change the multipliers from $\vec{\beta}$ to $\vec{\beta}'$, and if agent $i$'s multiplier increases by the largest factor ($\beta_i'/\beta_i = \max_{j \in N} \beta_j' / \beta_j$), then $i$'s resulting probability weakly increases\full{\ (\cref{lem:localmonotonicity})}{}.
\paragraph{Bounded probability change:} If we change a set of multipliers by increasing $i$'s multiplier by a factor of $(1 + \epsilon)$ for some $\epsilon > 0$ and leaving all other multipliers equal, then $i$'s resulting probability increases by at most $2 \, q \, \epsilon$\full{\ (\cref{lem:lipschitzmultipliers})}{}.
\paragraph{Bound on multipliers:} If $i$'s multiplier is at least $2\,q$ times as large as $j$'s multiplier, then $i$ must have a strictly larger resulting probability than $j$\full{\ (\cref{lem:boundmult})}{}.\medskip

\noindent Crucially, we can combine the first two properties to control how the resulting probabilities evolve while changing the multipliers in a specific ``step'' operation, which is the key building block of our approximation algorithm:
\paragraph{Step guarantee:} If we change a set of multipliers by increasing the multipliers of a subset $S$ of agents by a factor of $(1 + \epsilon)$ while leaving the other multipliers unchanged, then (a) the resulting probabilities of all $i \in S$ weakly increase, but at most by $2 \, q \, \epsilon$, and (b) the resulting probabilties of all $i \notin S$ weakly decrease, also by at most $2 \, q \, \epsilon$ (\full{\cref{lem:smallstep}}{proof in full version}).\medskip

\begin{algorithm}[tb]
\SetKwInOut{Input}{Input} 
\SetKwInOut{Output}{Output} 
\Input{An oracle to compute resulting probabilities, a constant $0 < \delta \leq 1$, and a PDF upper bound $q$}
\Output{A vector of multipliers $\vec{\beta} \in \mathbb{R}_{>0}^n$}
$\epsilon \leftarrow \delta / (2 \, q)$\;
$\vec{z} \leftarrow \vec{0}$\;
\While{$\exists i \in N.\; \left|p_i\big((1+\epsilon)^{\vec{z}}\big) - 1/n\right| > \delta$}{
$S \leftarrow \{i \in N \mid p_i\big((1+\epsilon)^{\vec{z}}\big) \leq 1/n\}$\;
$\vec{z} \leftarrow \vec{z} + \bone_S$\;
}
\Return $(1+\epsilon)^{\vec{z}}$
\caption{Equalizing Multipliers}
\label{alg:multipliers}
\end{algorithm}

\noindent Algorithm~\ref{alg:multipliers} keeps track of a set of multipliers $(1 + \epsilon)^{\vec{z}} = ((1 + \epsilon)^{z_1}, \dots, (1 + \epsilon)^{z_n})^\mathsf{T}$.
In each loop iteration, we use the step operation to increase all resulting probabilities originally below $1/n$ and decrease all resulting probabilities originally above $1/n$, both by a bounded amount so that they cannot overshoot $1/n$ by too much.
After polynomially many steps, all resulting probabilities lie within a band around $1/n$, which means that the multipliers are approximately equalizing.

\begin{theorem}
\label{thm:equalizingpoly}
In time $\mathcal{O}(n^2 \, q \, \log (q) \, \delta^{-1})$, Algorithm~\ref{alg:multipliers} computes a vector of multipliers $\vec{\beta}$ such that, for all $i \in N$,
$ 1/n - \delta \leq p_i(\vec{\beta}) \leq 1/n + \delta$.
\end{theorem}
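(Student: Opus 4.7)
The plan is to split the argument into two pieces. Correctness of the output is essentially immediate given termination: the while-loop condition is the negation of the target guarantee, so once the loop exits, the returned multipliers $\vec\beta = (1+\epsilon)^{\vec z}$ automatically satisfy $1/n - \delta \leq p_i(\vec\beta) \leq 1/n + \delta$ for every $i$. All the real work is in bounding the number of loop iterations, since each iteration does only $\mathcal{O}(n)$ work (reading $n$ oracle values, comparing them to $1/n\pm\delta$, and updating $n$ entries of $\vec z$).

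The backbone of the iteration count is the invariant $z_{\max}-z_{\min}\leq K$ throughout execution, where $K := \lceil\log(2q)/\log(1+\epsilon)\rceil$. Plugging in $\epsilon = \delta/(2q)$ and using $\log(1+\epsilon)\geq\epsilon/2$ for $\epsilon\leq 1$ gives $K = \mathcal{O}(q\log q/\delta)$. I would prove the invariant by induction on iterations: the base case is vacuous since $\vec z = \vec 0$, and for the inductive step, suppose the invariant holds with $z_{\max}-z_{\min}=K$ going into an iteration and assume toward contradiction that the iteration widens the gap to $K+1$. Then $z_{\max}$ must increase, which means some agent $i$ with $z_i = z_{\max}$ is in $S$ (so $p_i \leq 1/n$), while $z_{\min}$ must stay, which means some $j$ with $z_j = z_{\min}$ is outside $S$ (so $p_j > 1/n$). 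But $\beta_i/\beta_j = (1+\epsilon)^K \geq 2q$, and the bound-on-multipliers property immediately gives $p_i > p_j$, contradicting $p_i \leq 1/n < p_j$.

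With the invariant in hand, I would convert it to an iteration bound by working with the shift-invariant state $\vec d := \vec z - z_{\min}\,\bm{1} \in \{0,\ldots,K\}^n$, which fully determines $\vec p$ since $p_i$ depends on $\vec\beta$ only through ratios. I would partition the execution into phases demarcated by increments of $z_{\min}$. Within a phase, each $d_i$ is nondecreasing (since $z_{\min}$ is fixed) and capped at $K$, because any attempt to push $d_i$ past $K$ would force $z_{\min}$ to increment simultaneously (by the very argument used in the invariant proof), ending the phase. Since each iteration performs at least one $d$-increment ($|S|\geq 1$) and no $d_i$ can increment more than $K$ times, each phase lasts at most $nK$ iterations.

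The main obstacle I foresee is bounding the number of phases by $\mathcal{O}(1)$ (or bounding total iterations directly by $\mathcal{O}(nK)$). The natural attack is a potential-function argument, for instance with $\Phi := \sum_i |p_i-1/n|$ or $\Phi := \max_i |p_i-1/n|$, combining the step guarantee $|\Delta p_i|\leq\delta$ with the conservation law $\sum_i p_i = 1$ to certify that the algorithm exits shortly after its first full phase. A subtlety worth flagging is that the step guarantee is only an \emph{upper} bound on $|\Delta p_i|$, and the distribution-independent runtime forbids leaning on the PDF lower bound $p$ for a per-iteration lower bound on progress; the argument therefore likely has to compare states across phases rather than reason within individual iterations. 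Assembling the pieces—a constant number of phases, each of at most $\mathcal{O}(nK)$ iterations, each iteration costing $\mathcal{O}(n)$—yields the claimed runtime $\mathcal{O}(n^2\,q\,\log q/\delta)$.
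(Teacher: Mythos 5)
Your correctness claim (read off the loop guard), your runtime bookkeeping ($\mathcal{O}(n)$ work per iteration, $K=\lceil\log(2q)/\log(1+\epsilon)\rceil=\mathcal{O}(q\log q/\delta)$), and the invariant $z_{\max}-z_{\min}\leq K$ are all correct; the inductive proof of the invariant via the bound-on-multipliers lemma is sound and uses the same lemma the paper relies on. But, as you yourself flag, the invariant alone does not bound the iteration count: the vector $\vec z$ could in principle drift upward forever while keeping a bounded spread, and neither your ``phase'' decomposition (you never bound the number of phases) nor the potential functions you float are worked out. That gap is genuine, and your own diagnosis of it (the step guarantee is only an upper bound on $|\Delta p_i|$, so one cannot argue per-iteration progress) is accurate.

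The ingredient you are missing is a simple \emph{trajectory monotonicity} observation that the paper extracts from the step guarantee and that makes phases and potentials unnecessary. Partition agents each round into $Z_\ell,Z_m,Z_h$ according to whether $p_i$ lies below, inside, or above $[1/n-\delta,\,1/n+\delta]$. The step guarantee says each $p_i$ moves \emph{toward} $1/n$ by at most $2q\epsilon=\delta$, which gives two facts: (a) once an agent enters $Z_m$ it never leaves; (b) no agent crosses from $Z_\ell$ to $Z_h$ or vice versa in a single step. Together, (a) and (b) imply that any agent outside $Z_m$ after $t$ iterations has been on the \emph{same} side for all $t$ iterations, hence has $z_i=0$ (always in $Z_h$, never in $S$) or $z_i=t$ (always in $Z_\ell$, always in $S$). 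The paper then picks such an agent with extremal $p_i$, pigeonholes on $\sum_j z_j$ using $1\leq|S|\leq n-1$, and applies bound-on-multipliers to a pair at $z$-distance $\geq K$, contradicting extremality after $T=(n-1)K$ rounds. Note that (a) and (b) would also let your invariant finish cleanly: if $z_i=0$ for all rounds then $z_{\max}\leq K$, so $\sum_j z_j\leq nK$, and since $|S|\geq 1$ each round there are at most $nK$ iterations; the $z_i=t$ case is symmetric using $|S|\leq n-1$. So the fix is to add (a) and (b), after which you can drop the phase machinery entirely.
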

\begin{proof}
At the beginning of an iteration of the loop, partition the agents into three sets $Z_{\ell}$, $Z_{m}$, and $Z_{h}$ depending on whether $p_i\big((1+\epsilon)^{\vec{z}}\big)$ is smaller than $1/n - \delta$, is in $[1/n - \delta, 1/n + \delta]$, or is larger than $1/n + \delta$, respectively.
We make two observations: \textbf{(a)} Once an agent is in $Z_m$, they will always stay there since, by the step guarantee, their probability moves by at most $\delta=2 \, q \, \epsilon$ per iteration and moves up whenever it was below $1/n$ and down whenever it was above $1/n$. \textbf{(b)} Agents cannot move between $Z_\ell$ and $Z_h$ within one iteration, since the probabilities belonging to $Z_\ell$ and $Z_h$ are separated by a gap of size $2\,\delta$, whereas the step guarantee shows that an agents' probability moves by at most $\delta$. 

Next, we show that the algorithm terminates; specifically, that it exits the loop after at most $T \coloneqq \lceil\frac{\log(2 \, q)}{\log(1 + \epsilon)}\rceil \cdot (n - 1)$ iterations.
For the sake of contradiction, suppose that at the beginning of the $(T+1)$th iteration of the loop, some agent was not yet in $Z_m$.
For now, say that such an $i$ was in $Z_h$ and let $i$ have maximal $p_i\big((1+\epsilon)^{\vec{z}}\big)$.
Then, since $i$ has always been in $Z_h$, $z_i$ has never been increased and it still holds that $z_i = 0$.
At the same time, in each round, the multipliers of some $|S| \geq 1$ other agents get increased, from which it follows that some other agent $j$ must have $z_j \geq \lceil\frac{\log(2 \, q)}{\log(1 + \epsilon)}\rceil$.
Then, $\beta_j / \beta_i = (1 + \epsilon)^{z_j-z_i} \geq 2\,q$, which implies that $p_j > p_i$ by the bounds-on-multipliers property, which contradicts our choice of $i$.
The case where $i \in Z_\ell$ is symmetric:
This time, choose an $i$ with minimal $p_i \, \big((1+\epsilon)^{\vec{z}}\big)$.
Since $i \in Z_\ell$, $z_i$ must have been increased in every round and equal $T$.
Furthermore, in each previous round, $|S| \leq n-1$, since the algorithm would not have re-entered the loop if all probabilities were $1/n$, suggesting that some agent's probability is larger than $1/n$ and is thus not included in $S$.
Hence, there must be another agent $j$ with $z_j \leq T - \lceil\frac{\log(2 \, q)}{\log(1 + \epsilon)}\rceil$.
This implies that $p_j < p_i$, contradicting our choice of $i$.

It follows that the loop is executed at most $T$ times. Taking into account that each iteration requires $\mathcal{O}(n)$ oracle queries, the total time complexity is in
    $T \cdot \mathcal{O}(n) \in \mathcal{O}(n^2 \, q \, \log(q)/\delta)$.\footnote{$T \cdot \mathcal{O}(n) = \lceil \log(2 \, q) / \log(1 + \delta /(2 \,q))\rceil \, (n-1) \, \mathcal{O}(n)
    \leq \lceil \log(2 \, q) \, ((4 \, q)/\delta) \rceil \, (n-1) \, \mathcal{O}(n) \in \mathcal{O}(n^2 \, q \, \log(q)/\delta)$, where the inequality holds since $\delta/(2\,q)\leq 1$.}
The bound on the resulting probabilities follows from the fact that $Z_m = N$ when the algorithm exits.
\end{proof}
As another demonstration of the rich structure in the $p_i$, we show in \full{\cref{app:unique}}{the full version} that the equalizing multipliers are unique, using a strengthened local-monotonicity property.
\full{\new{The algorithmic proof above yields an alternative proof of the existence of perfectly equalizing multipliers, by applying a limit argument similar to the one at the end of \cref{thm:equalizing}.}}{}

\subsection{Gap between Expected Utilities}
\label{sec:existence_gap}
Having established the existence of (approximately) equalizing multipliers $\vec{\beta}$, we will now analyze the corresponding multiplier allocation, which assigns each item $\alpha$ to the agent $i$ with maximal $\beta_i \, u_i(\alpha)$.
By definition, this allocation satisfies fPO and thus PO, so it remains to show EF.
\new{In our exposition, we will focus on exactly equalizing multipliers, but all observations extend to multipliers that are ``sufficiently close'' to equalizing, which we make explicitly in \cref{lem:boundgap}.}

As sketched in \cref{sec:results}, we now prove that an agent $i$'s expected utility for an item they receive themselves is strictly larger than $i$'s expected utility for an item that another agent receives in the multiplier allocation.
In fact, we will prove that there is a \emph{constant} gap between these conditional expectations i.e., a constant $C_{p,q} > 0$ such that, for all $i \neq j \in N$, $\mathbb{E}\left[u_i\, |\, \beta_iu_i=\max_{k\in N} \beta_k u_k \right] \geq C_{p, q} + \mathbb{E}\left[u_i\, |\, \beta_ju_j=\max_{k\in N} \beta_k u_k \right]$.
Bounding this gap is the main idea of the proof by \citeauthor{DGK+14}
Their proof approach is applicable since, by scaling the utilities by equalizing multipliers, we bring a key property exploited by \citeauthor{DGK+14}\ to the asymmetric model: as does the welfare-maximizing algorithm in the symmetric setting, the multiplier allocation gives a random item to each agent with equal probability.
Thus, by concentration, all agents receive similar numbers of items.
A positive gap $C_{p,q}$ furthermore ensures that agents prefer the average item in their own bundle to the average item in another bundle.
The last two statements imply that the allocation is likely to be envy-free.

\iffullversion
Before we go into the bounds, it is instructive to see why the interval support property is required for the multiplier approach.
Consider a case with two agents: Agent A's utility is uniformly distributed on $[1/4, 3/4]$, whereas agent B's distribution is uniform on $[0,1/4]\cup[3/4, 1]$, which means that $\mathcal{D}_B$'s support is not an interval.
It is easy to verify that setting both multipliers to 1 is equalizing.
But $\mathbb{E}[u_A\, |\, u_A\geq u_B] = \mathbb{E}[u_A]$, since the event $u_A\geq u_B$ only tells us that $u_B$ is taken from the left interval in its support ($[0,1/4]$) but $u_A$ is still distributed uniformly in $[1/4, 3/4]$.
Hence, without assuming interval support, the gap we aim to bound may be zero.

For a fixed set of distributions, interval support is enough to provide a positive gap (\full{\cref{app:positivegap}}{see full version}).
However, if we want to add new agents along the infinite sequence of instances as $m \to \infty$, we require a uniform lower bound on this gap.
In \full{\cref{app:pdfboundeddisc}}{the full version}, we \full{}{also\ }give examples showing that both very high probability densities and very low probability densities can make the gap arbitrarily small, which motivates our assumption of $(p, q)$-PDF-boundedness.
\else
\new{In the full version, we show that some utility distributions do not yield a constant gap, and we motivate our assumptions, interval support and $(p,q)$-PDF-boundedness, using such distributions.}
\fi
In \full{\cref{app:boundgap}}{the full version}, we derive the desired\full{\ constant}{} gap:
\begin{restatable}{proposition}{lemboundgap}
\label{lem:boundgap}
For any collection of agents whose utility distributions are $(p,q)$-PDF-bounded and have interval support, given a set of multipliers $\vec{\beta}$ such that $|p_i-1/n|<1/(2\,n)$ for all $i\in N$, it holds that for any $i\ne j\in N$,
\begin{align*}
    &\mathbb{E}\left[u_i \bigg| \beta_i u_i = \max_{k\in N} \beta_k u_k \right]
    - \mathbb{E}\left[u_i \bigg| \beta_j u_j = \max_{k\in N} \beta_k u_k \right] \\
    &\geq C_{p, q}
\end{align*}
for a constant $C_{p, q} \in (0, 1]$ that only depends on $p$ and $q$.
\end{restatable}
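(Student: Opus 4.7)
The plan is to exploit the monotone likelihood ratio structure of the conditional densities. Conditional on $u_i = u$, the probability of $W_i \coloneqq \{i \text{ wins}\}$ equals $A(u) \coloneqq \prod_{\ell \ne i} F_\ell(\beta_i u/\beta_\ell)$, which is nondecreasing in $u$; the probability of $W_j$ equals $B(u) \coloneqq \int_{\beta_i u/\beta_j}^{b_j} f_j(s) \prod_{\ell \ne i, j} F_\ell(\beta_j s/\beta_\ell)\,ds$, which is nonincreasing. Hence $u_i \mid W_i$ has density $g^i(u) = f_i(u)\,A(u)/p_i$ (a monotone tilt of $f_i$ upward), and $u_i \mid W_j$ has density $g^j(u) = f_i(u)\,B(u)/p_j$ (a tilt downward), giving the qualitative stochastic dominance $\mathbb{E}[u_i \mid W_i] \geq \mathbb{E}[u_i] \geq \mathbb{E}[u_i \mid W_j]$ by Chebyshev's sum inequality. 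Rewriting the gap as a sum of two covariances,
\begin{equation*}
\mathbb{E}[u_i \mid W_i] - \mathbb{E}[u_i \mid W_j] = \frac{\mathrm{Cov}_{f_i}\!\left(u,\,A(u)\right)}{p_i} + \frac{\mathrm{Cov}_{f_i}\!\left(u,\,-B(u)\right)}{p_j},
\end{equation*}
reduces the task to lower-bounding each covariance by a constant multiple of $p_i$ (resp.\ $p_j$).

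To obtain such a bound, I would first use the hypothesis $|p_k - 1/n| < 1/(2n)$ together with the bound-on-multipliers property from \cref{sec:existence_multipliers} to rule out arbitrarily lopsided multiplier ratios, and then argue that this equalizing condition forces $A(u)$ to grow by $\Omega(p_i)$ across the support $[a_i, b_i]$ (and symmetrically $B(u)$ to shrink). Heuristically: $A$ has $f_i$-average $p_i$ and lies in $[0, 1]$, and since $p_i$ is small, a nondecreasing function with small average cannot be close to constant without vanishing almost everywhere; so $A$ must concentrate its mass near the top of the support. Combined with $(p,q)$-PDF-boundedness, which forces the support length into $[1/q, 1/p]$, a quantitative Chebyshev-type argument (e.g., applied to the top and bottom thirds of $[a_i, b_i]$, both of which carry $f_i$-mass at least $p/(3q)$) would then yield $\mathrm{Cov}_{f_i}(u, A(u)) \geq \Omega_{p,q}(1)\cdot p_i$, giving a constant $C_{p,q}$.

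The main obstacle is the quantitative variation of $A$ across $[a_i, b_i]$. Since $A$ is a product of potentially many CDF factors $F_\ell(\beta_i u/\beta_\ell)$, its rate of growth depends on which competitors' scaled supports $\beta_\ell[a_\ell, b_\ell]$ overlap $\beta_i[a_i, b_i]$. The equalizing hypothesis forces every pair of scaled supports to overlap nontrivially\,---\,otherwise one agent would either always beat or always lose to the other, contradicting the approximate equality of the $p_k$\,---\,but the overlap length could in principle shrink with $n$. I would handle this by tracking how many competitors contribute meaningfully to the growth of $A$ at each level and exploiting the collective effect, which is the main technical piece of the proof and where the explicit form of $C_{p,q}$ will emerge.
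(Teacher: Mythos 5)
Your decomposition is correct and matches the qualitative part of the paper's argument: the paper also proves $\mathbb{E}[u_i \mid \beta_i u_i = \max_k \beta_k u_k] \geq \mathbb{E}[u_i] \geq \mathbb{E}[u_i \mid \beta_j u_j = \max_k \beta_k u_k]$, and your covariance rewriting is a clean restatement of this. You also correctly identify all the ingredients (interval support, $(p,q)$-PDF-boundedness, the multiplier ratio bound, the pairwise overlap of scaled supports). But the quantitative step\,---\,lower-bounding $\mathrm{Cov}_{f_i}(u, A(u))$ by $\Omega_{p,q}(1)\,p_i$\,---\,is the entire content of the proposition and is left open. The heuristic you offer to close it is also off: the constraint $p_i \approx 1/n$ does \emph{not} by itself force $A$ to ``concentrate near the top.'' Since you normalize by $p_i$, the right question is how much $A$ \emph{varies} relative to its mean, and that is not controlled by the mean being small (a nearly flat $A$ of tiny value would make the covariance vanish while $p_i$ also shrinks, leaving the ratio indeterminate). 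You correctly flag that the obstacle is tracking the collective growth of the $n-1$ factors in $A$, but you do not resolve it, and this is where your route diverges from the paper's.

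The paper sidesteps the many-factor product entirely. Its key move (your sketch lacks it) is to replace the global event $\{\beta_i u_i = \max_k \beta_k u_k\}$ with the single pairwise event $\{\beta_i u_i \geq \beta_j u_j\}$ via the elementary conditioning inequality $\mathbb{E}[u_i \mid W_i] \geq \mathbb{E}[u_i \mid \beta_i u_i \geq \beta_j u_j]$ (condition on the max over the other $n-2$ agents, integrate out). This collapses $A(u)$ to a single factor $F_j(\beta_i u/\beta_j)$. Then the multiplier-ratio corollary ($\beta_i/\beta_j \leq 4q$ under the $|p_k - 1/n| < 1/(2n)$ hypothesis) gives a derivative lower bound $\tfrac{d}{du} F_j(\beta_i u/\beta_j) \geq p/(4q)$ wherever $f_j$ is supported; the overlap lemma shows the relevant interval $\mathrm{supp}(f_j(\beta_i\cdot/\beta_j)) \cap \mathrm{supp}(f_i)$ has length at least $1/(16q^2)$; and an explicit integral computation on each side of the unique zero of $F_j(\beta_i u/\beta_j) - \mathcal{P}$ gives the constant $C_{p,q} = p^2/(2^{20}q^7)$. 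The second covariance term $\mathrm{Cov}_{f_i}(u, -B)/p_j$ is simply discarded as nonnegative\,---\,the paper only extracts the constant from the first term. In short, the pairwise reduction is the missing idea; without it, your route requires a delicate collective argument over all competitors that the paper never needs.
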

\iffullversion
\fi
\subsection{The Multiplier Allocation Satisfies EF}
\label{sec:puttingtogetheref}
In \full{\cref{app:combine}}{the full version}, we combine the existence of equalizing multipliers and the positive gap to prove \cref{thm:existence}, i.e., that the multiplier allocation is EF with high probability, and that this even holds when assigning based on approximately equalizing multipliers.
Here, we sketch the argument: First, we run Algorithm~\ref{alg:multipliers} to find approximately equalizing multipliers $\vec{\beta}$ with an accuracy $\delta \coloneqq C_{p, q}/(4\,n)$, which requires $\mathcal{O}(n^2/\delta) \subseteq \mathcal{O}(n^3)$ time by \cref{thm:equalizingpoly}.
Second, we allocate all items based on $\vec{\beta}$, in $\mathcal{O}(m \, n)$ time).
This yields the \emph{approximate multiplier allocation}, which is fPO by construction.
It remains to show EF:
\cref{lem:boundgap} applies to $\vec{\beta}$ since it satisfies the proposition's precondition: $|p_i - 1/n| \leq \delta \leq \frac{1}{4 \, n} < \frac{1}{2 \, n}$.
By arithmetic, the positive gap guaranteed by the proposition implies that, for any two agents $i \neq j$,
    $\mathbb{E}[u_i(A_i)] -\mathbb{E}[u_i(A_j)] \geq \frac{m}{2 \,n} \, C_{p,q}$.
Assuming that $m \in \Omega(n \, \log n)$, we prove that, with high probability, all $u_i(A_j)$ stay within a distance of $\frac{m}{4n} C_{p,q}$ from their expectations by concentration. Then,
    \begin{align*}
        u_i(A_i) &\geq \mathbb{E}[u_i(A_i)] - \tfrac{m}{4n} C_{p,q} \\
        &\geq \big(\mathbb{E}[u_i(A_j)] + \tfrac{m}{2n} C_{p,q}\big) - \tfrac{m}{4n} C_{p,q} \\
        &\geq \mathbb{E}[u_i(A_j)] + \tfrac{m}{4n} C_{p,q} \geq u_i(A_j),
    \end{align*}
which implies that $i$ does not envy $j$ for any $i$ and $j$, i.e., the allocation is envy-free (and Pareto optimal) as claimed.

\section{Empirical Results}
\label{sec:experiments}
\begin{figure}
    \centering
    \includegraphics[width=.97\linewidth]{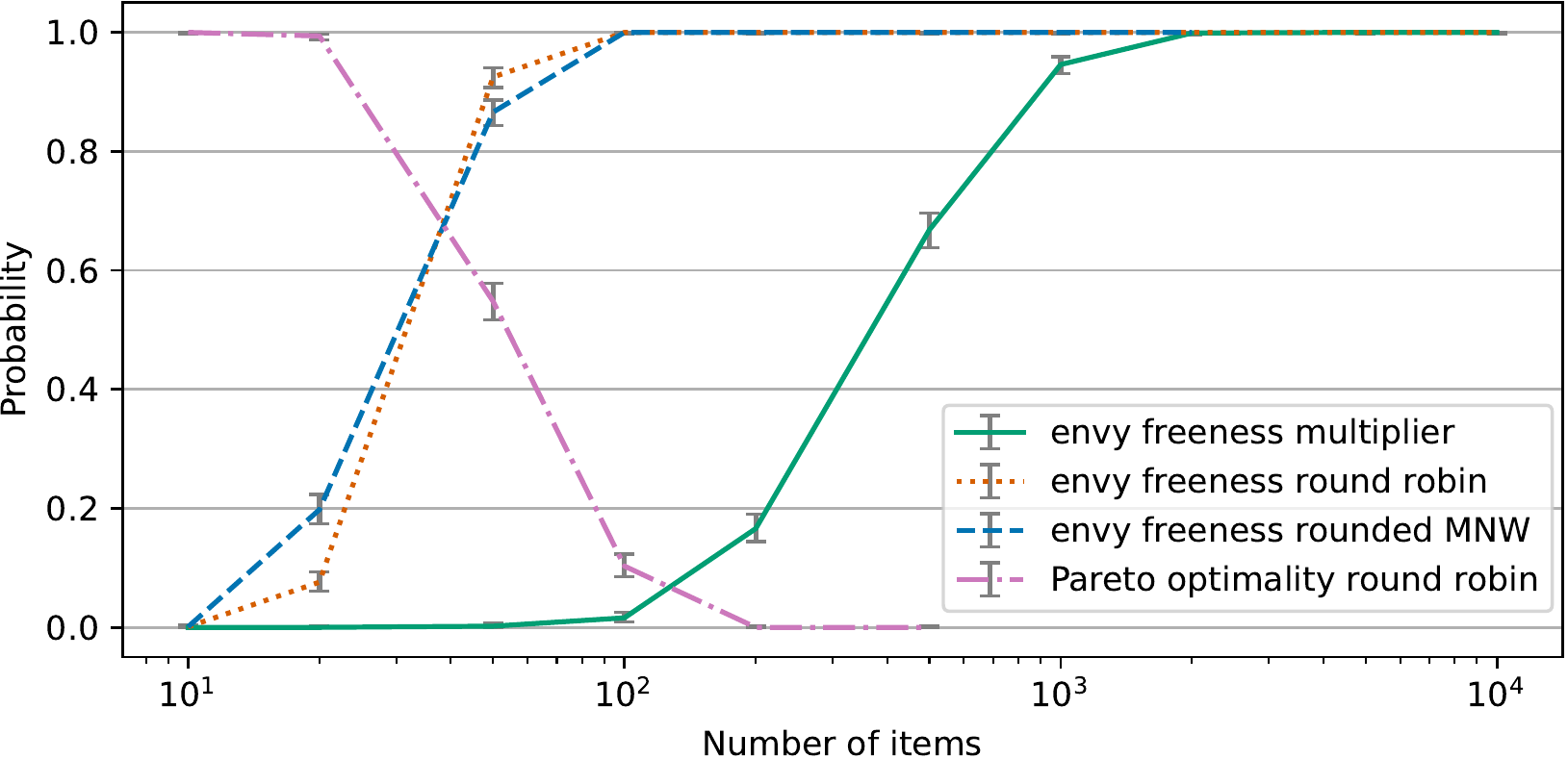}
    \caption{Probability of different algorithms satisfying EF and PO for $n=10$ distributions.
    The multiplier and rounded MNW algorithms are always PO and therefore not shown.
    Each datapoint corresponds to 1\,000 random instances, and flies indicate 95\% confidence intervals. PO and MNW data not available for large $m$ due to computational cost.}
    \label{fig:experiments}
\end{figure}
After characterizing the existence of EF and PO allocations from an asymptotic angle, we now empirically investigate allocation problems for a concrete set of agents.
We use a set of ten agents with utility distributions from a simple parametric family of $(0.1, 1.9)$-PDF-bounded distributions.\footnote{\full{\Cref{app:experiment1}}{The full version} contains all details on the experiments.
Since EF allocations exist for smaller $m$ if $n$ divides $m$, we repeat the experiment with shifted values of $m$\full{\ in \cref{app:experiment2}}{}, which does not change the major trends.
We also repeat the experiments with the five distributions from \cref{fig:scaling}, showing that our observations generalize to extremely heterogeneous distributions that are not $(p,q)$-PDF-bounded.}
We compute multipliers for these ten distributions by implementing a variant of Algorithm~\ref{alg:multipliers}.
Specifically, we repeatedly run the algorithm with exponential decreasing $\delta$, starting each iteration from the last set of multipliers, which allows the algorithm to change multipliers faster in the first rounds and empirically leads to a sublinear running time in $\delta^{-1}$.
For a requested accuracy of $\delta = 10^{-5}$, our algorithm runs in 30 seconds on consumer hardware, and we verify analytically that the resulting multipliers indeed lie within this tolerance, undisturbed by numerical inaccuracies in the computation.

As shown by the solid line in \cref{fig:experiments}, the multiplier allocation requires large numbers of items to be reliably EF:
When allocating $m=500$ items to the ten agents, the allocation is EF in only 67\% of instances, and it requires $m=2\,000$ items for this probability to reach 99\%.
This slow speed of convergence seems to be inherent to the argument of \citet{DGK+14} since, in an instance with ten copies of one of our distributions and 500 random items, the welfare-maximizing algorithm is also only EF with 87\% probability.

In contrast to the approximate-multiplier algorithm, the round robin algorithm (dotted line) reliably obtains EF allocations already for $m \geq 100$, but its allocations are essentially never PO unless $m$ is very small (dash-dotted line). This lack of PO matches our theoretical predictions \full{(\cref{app:neg})}{in the full version}.

If one searches for an algorithm that satisfies both EF and PO for small numbers of items, variants of the Maximum Nash Welfare (MNW) algorithm appear promising in our experiments\,---\,unless their computational complexity is prohibitive.
In experiments in \full{\cref{app:betaexperiments}}{the full version} with only five agents, the optimization library BARON can reliably find the (discrete) MNW allocation for small $m \leq 200$.
The MNW allocation is automatically PO, and it satisfies EF as reliably as round robin in our experiments.
For our 10 agents, however, and as little as $m=20$ items, BARON often takes multiple minutes to compute a single allocation, making this algorithm intractable for our analysis.
In \full{\cref{app:experiment1}}{the full version}, we discuss approaches to this intractability, and propose to round the \emph{fractional} MNW allocation instead.
This approach is still guaranteed to be PO, and yields EF allocations already for small $m$ (dashed line).
In a sense, this rounded MNW algorithm complements the approximate multiplier algorithm:
For small $m$, rounded MNW already provides EF and its runtime is acceptable. For large $m$, the approximate multiplier algorithm guarantees EF while its runtime scales blazingly fast in $m$, since almost all work happens in the determination of the multipliers, independently of $m$.

\section{Discussion}
In this paper, we show that EF and PO allocations are likely to exist for random utilities even if different agents' utilities follow different distributions.
Given that the known asymptotic bounds for the existence of EF+PO allocations are equal in the asymmetric and in the symmetric model, we see no evidence that the asymmetry of agent utilities would make EF+PO allocations substantially rarer to exist, up to, possibly, a $\log \log n$ gap that remains open in both models.

The most interesting idea coming out of this paper is the technique of finding equalizing multipliers, which might be of use in wider settings.
Notably, the existence proof based on Sperner's lemma mainly uses the continuity of the function mapping multipliers to probabilities, and in particular does not use the independence between the agents' utilities.
Thus, the multiplier technique might apply to random models where the agents' utilities exhibit some correlation, as long as the gap in expected utilities can still be bounded.
In the limit of infinitely many items, we can think of the multiplier technique as a way to find an allocation of divisible goods that is Pareto-optimal and \emph{balanced}, i.e., where every agent receives an equal amount of items.
In future work, we hope to explore if this construction extends to arbitrary sets of divisible items.

\section*{Acknowledgments}
We thank Bailey Flanigan and Ariel Procaccia
for valuable comments and suggestions on the paper. We also
thank Dravyansh Sharma, Jamie Tucker-Foltz, Ruixiao Yang,
Xingcheng Yao, and Zizhao Zhang for helpful technical discussions.

\bibliographystyle{named}
\bibliography{ijcai22}

\iffullversion
\newpage
\onecolumn
\appendix

\section*{\LARGE{Appendix}}
\section{Proof of \cref{prop:rr}: Existence of Envy-Free Allocations}
\label{app:rr}
\proprr*
\begin{proof}
For any pair of agents $i, j\in N$, we will prove the probability of the event that $i$ envies $j$ in our asymmetric model is in $O\left(1/m^3\right)$.
Consider a symmetric model with distribution $\mathcal{D}_i$ for all agents. Let $\mathbb{P}^S\left[T\right]$ be the probability that some event $T$ occurs in this symmetric model, and $\mathbb{P}^A\left[T\right]$ be the probability that some event $T$ occurs in the asymmetric model.

For every utility profile, define its ordinal profile as $\{\mathcal{O}_k\}_{k\in N}$ where $\mathcal{O}_k$ is a permutation of items $M$, in descending order according to $u_k(\alpha)$ for $\alpha\in M$.
Let $\mathbb{O}$ be the set of all possible ordinal profiles, which contains $(m!)^n$ elements. Since in both models, all agents are independent and the utility for all items are drawn independently, all ordinal profiles have the same probability to appear, that is, 
\begin{align*}
    \forall\, \Tilde{O}\in \mathbb{O}, \,
    \mathbb{P}^S\left[\{\mathcal{O}_k\}_{k\in N}=\Tilde{O}\right] = \mathbb{P}^A\left[\{\mathcal{O}_k\}_{k\in N}=\Tilde{O}\right] = \frac{1}{(m!)^n}.
\end{align*}
Since the allocation allocated by round robin algorithm is uniquely determined by the ordinal profile, let $\{A_k(\Tilde{O})\}_{k\in N}$ denote the resulting allocation given ordinal profile $\Tilde{O}$.
We can express $\mathbb{P}^S\left[i\text{ envies }j\right]$ as
\begin{align*}
    \mathbb{P}^S\left[i\text{ envies }j\right]  &=\sum_{\Tilde{O}\in\mathbb{O}}\mathbb{P}^S\left[\{\mathcal{O}_k\}_{k\in N}=\Tilde{O}\right]\cdot\mathbb{P}^S\left[i\text{ envies }j \, \bigg| \, \{\mathcal{O}_k\}_{k\in N}=\Tilde{O}\right] \\
    &= \frac{1}{(m!)^n}\sum_{\Tilde{O}\in\mathbb{O}}\mathbb{P}^S\left[i\text{ envies }j\, \bigg|\, \{\mathcal{O}_k\}_{k\in N}=\Tilde{O}\right].
\end{align*}
Similarly, we can express $\mathbb{P}^A\left[i\text{ envies }j\right]$ as
\begin{align*}
    \mathbb{P}^A\left[i\text{ envies }j\right] 
    = \frac{1}{(m!)^n}\sum_{\Tilde{O}\in\mathbb{O}}\mathbb{P}^A\left[i\text{ envies }j\, \bigg|\, \{\mathcal{O}_k\}_{k\in N}=\Tilde{O}\right].
\end{align*}
For all $\Tilde{O}\in \mathbb{O}$, 
\begin{align*}
    \mathbb{P}^S\left[i\text{ envies }j\, \bigg|\, \{\mathcal{O}_k\}_{k\in N}=\Tilde{O}\right] 
    &= \mathbb{P}^S\left[\{u_i(\alpha)\sim \mathcal{D}_i\}_{\alpha\in M}: \sum_{\alpha\in A_i(\Tilde{O})}u_i(\alpha)<\sum_{\alpha\in A_j(\Tilde{O})}u_i(\alpha)\, \bigg|\,\mathcal{O}_i=\Tilde{O}_i\right] \\
    &= \mathbb{P}^A\left[i\text{ envies }j\, \bigg|\, \{\mathcal{O}_k\}_{k\in N}=\Tilde{O}\right].
\end{align*}
Hence we have $\mathbb{P}^A\left[i\text{ envies }j\right]=\mathbb{P}^S\left[i\text{ envies }j\right]$.
The following lemma is implied by the proof for Thm 3.1 by \citet{MS20}.
\begin{lemma}[\citealt{MS20}]
In the symmetric model, if $m\in\Omega\left(n\,\log n/\log\log n\right)$ and the common distribution $\mathcal{D}$ is $(p, q)$-PDF-bounded on $[0, 1]$, then for any pair of agents $i, i'$, the probability that $i$ envies $i'$ in the round robin allocation is at most $O(1/m^3)$.
\label{lem:MSenvy}
\end{lemma}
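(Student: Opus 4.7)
The plan is to adapt the round robin envy analysis to the $(p,q)$-PDF-bounded symmetric setting, pushing the tail bound down to $O(1/m^3)$. Without loss of generality, assume $i$ precedes $i'$ in the fixed round robin order, and let $x_r$ and $y_r$ denote the items chosen by $i$ and $i'$, respectively, in round $r$, for $r = 1, \dots, k$ with $k = \lfloor m/n \rfloor$. The standard round robin rule implies $u_{i'}(x_{r+1}) \leq u_{i'}(y_r)$, since $x_{r+1}$ was still available to $i'$ when $i'$ picked $y_r$. Telescoping this gives the EF1-type bound
\[
u_{i'}(A_i) - u_{i'}(A_{i'}) \leq u_{i'}(x_1) - u_{i'}(y_k),
\]
so $i'$ envies $i$ only when $u_{i'}(x_1) > u_{i'}(y_k)$.

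I would next bound each side of this inequality separately. Because $\{u_i(\alpha)\}_\alpha$ and $\{u_{i'}(\alpha)\}_\alpha$ are independent, the identity $x_1 = \arg\max_\alpha u_i(\alpha)$ is a function of $u_i$ alone, so marginally $u_{i'}(x_1) \sim \mathcal{D}$, giving $\mathbb{P}[u_{i'}(x_1) > 1-\varepsilon] \leq q\,\varepsilon$. For $u_{i'}(y_k)$, a pigeonhole argument shows that at the moment $i'$ picks in round $k$, at most $(k-1)n + (p-1)$ items have been removed, where $p$ is $i'$'s position, so at least one of the top $(k-1)n + p$ items in $i'$'s own ranking of $M$ is still available; hence $u_{i'}(y_k)$ is at least the $((k-1)n + p)$-th largest of $m$ i.i.d.\ samples from $\mathcal{D}$. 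Using $(p,q)$-PDF-boundedness, the top order statistics of $\mathcal{D}$ concentrate tightly around their expected quantile values with exponentially small tails.

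The main obstacle is that when $i'$ is near the end of the order, $(k-1)n + p$ is close to $m$, so the pigeonhole bound on $u_{i'}(y_k)$ is too weak to beat a generic $\mathcal{D}$-draw. To handle this, instead of collapsing envy into a single-round comparison I would work with the raw per-round decomposition $u_{i'}(A_i) - u_{i'}(A_{i'}) = \sum_r [u_{i'}(x_r) - u_{i'}(y_r)]$, lower-bounding $u_{i'}(y_r)$ by the $((r-1)n+p)$-th order statistic (near $1$ in early rounds) and upper-bounding $u_{i'}(x_r)$ using that, by independence of $u_i$ and $u_{i'}$, it is stochastically close to a fresh $\mathcal{D}$-draw given the pool at round $r$. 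Averaged over the $k = \Omega(\log n/\log\log n)$ rounds, this yields a positive constant-size expected advantage per round for $i'$.

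Finally, I would apply a Bernstein- or Hoeffding-type concentration inequality to the sum $\sum_r [u_{i'}(y_r) - u_{i'}(x_r)]$ to obtain an $e^{-\Omega(k)} \subseteq O(1/m^3)$ tail bound, where the assumption $m = \Omega(n\log n/\log\log n)$ enters precisely to ensure $k$ is large enough for the concentration to reach $1/m^3$. The trickiest technical point is controlling the dependence between rounds introduced by the shared pool; I expect this is best handled either by a martingale decomposition along the sequence of picks, or by first conditioning on $i'$'s full ranking of items and reducing the envy comparison to an i.i.d.\ analysis over rank-colored items, which is essentially the route taken by \citet{MS20} in their Thm~3.1.
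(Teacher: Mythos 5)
Your opening is correct and matches what \citet{MS20} do: the round robin chain $u_{i'}(x_{r+1}) \leq u_{i'}(y_r)$ and its telescoped consequence $u_{i'}(A_i) - u_{i'}(A_{i'}) \leq u_{i'}(x_1) - u_{i'}(y_k)$ (the EF1 bound), together with your observation that this alone is too weak, are exactly the starting point of their Theorem~3.1, which this lemma extracts from. Your instinct to go back to the per-round terms is also right. But the specific refinement you propose has a gap, and it is exactly the gap that \citeauthor{MS20}'s ratio trick is designed to close.

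The issue is your claim of ``a positive constant-size expected advantage per round.'' Work additively, as you do: over $k = \lfloor m/n \rfloor$ rounds, both $u_{i'}(y_r)$ and $u_{i'}(x_r)$ drift down towards $0$ as the pool depletes, and in later rounds the gap $u_{i'}(y_r) - u_{i'}(x_r)$ is not bounded below by a constant. (This is particularly acute when $m$ is near the threshold $n\log n/\log\log n$, where the first $T\sim \log n/\log\log n$ rounds already exhaust a constant fraction of the pool.) You cannot therefore apply Hoeffding to $\sum_r [u_{i'}(y_r) - u_{i'}(x_r)]$ with a mean of order $k$ and terms of bounded range to get $e^{-\Omega(k)}$. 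What \citeauthor{MS20} do instead is keep the \emph{shifted} comparison $X^i_t - X^{i,i'}_{t+1}$ (which is pointwise nonnegative) and factor it as $X^i_t(1 - Y^{i,i'}_{t+1})$ with $Y^{i,i'}_{t+1} := X^{i,i'}_{t+1}/X^i_t \in [0,1]$. The multiplicative variable $Y$ has, after conditioning on the earlier picks, a clean distribution (their Lemma~3.2: it is a maximum of order $n$ i.i.d.\ samples truncated below the previous pick) and its expectation is bounded away from $1$ \emph{uniformly in the round} --- the ratio normalizes away the pool depletion that breaks your additive argument. Concentration is then applied to $\sum_t (1 - Y^{i,i'}_{t+1})$ over the first $T \sim \log n/\log\log n$ rounds, together with the event that $X^i_t \geq 1/2$ on those rounds, yielding the $O(1/m^3)$ tail. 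This ratio device is the one missing ingredient in your proposal; without it, the dependency across rounds and the shrinking pool are not actually controlled, and your final paragraph acknowledges as much by deferring to \citeauthor{MS20}. You also do not address the two cases of who precedes whom, though the earlier agent trivially never envies the later, so only one direction is substantive.
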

\paragraph{PDF-bounded on $\mathbf{[0, 1]}$}
Here we follow the assumptions of \citet{MS20} on the distributions: PDF-bounded on $[0,1]$.
Since $\mathcal{D}_i$ is $(p, q)$-PDF-bounded, the lemma indicates that $\mathbb{P}^S\left[i\text{ envies }j\right]=O\left(1/m^3\right)$. Thus, by our earlier arguments, the probability that agent $i$ envies agent $j$ in our asymmetric model is also in $O\left(1/m^3\right)$.
Applying a union bound over all pairs $i, j$, we know that the allocation is envy-free in the asymmetric model with probability at least $1-O\left(1/m\right)$ when $m\in\Omega\left(n\,\log n/\log\log n\right)$.

\paragraph{Interval support and PDF-bounded}
Moreover, we make slight modification (on constant level) to the proof by \citet{MS20} to generalize \cref{lem:MSenvy} and get bounded envy probability when only assuming interval support and $(p,q)$-PDF-bounded. 

We first review the main idea of the proof for Thm 3.1 in their paper. 
For two agents $i, i'$, let $X^{i, i'}_t$ denote $i$'s value for the item that $i'$ gets in the $t$th round, and $X^{i}_t$ denote $i$'s value for her own item in the $t$th round. While the maximum possible envy can be (when $i'$ chooses before $i$ in each round and gets 1 more item than $i$)
\begin{equation}
    u_i(M_{i'}) - u_i(M_{i}) = X_1^{i,i'}-\sum_{t}\left(X^i_t-X^{i,i'}_{t+1}\right)
    \leq 1-\sum_{t=1}^{T}X^i_t\cdot\left(1-Y^{i,i'}_{t+1}\right).
\label{eq:ms21}
\end{equation}
where the last inequality follows from $Y^{i,i'}_{t+1}<1$ for all $t\geq1$, the gap in the first $T$ rounds is sufficient for the envy to be negative with high probability.\footnote{Note that we allow negative envy, whereas some works define envy to be $\max\{0, u_i(A_j)-u_i(A_i)\}$. When envy is $-e$, we can also say the negative envy is $e$.} \citeauthor{MS20} choose such $T$ that with high probability ($1-O\left(1/m^3\right)$), events 
(E1) $\sum_{t=1}^{T}Y_t^{i,i'}\geq T-2$ and 
(E2) $X^i_t<1/2$ for all $t\leq T$, do not happen.
Then it can be seen that when neither E1 nor E2 happen, the envy in \cref{eq:ms21} is non-positive.

For constant $c>1$, consider changing E1 to E1':
(E1') $\sum_{t=1}^{T}Y_t^{i,i'}\geq T-2c$, then E1' and E2 give us negative envy of at least $c-1$.
We can still bound the probability that E1' or E2 occur in $O\left(1/m^3\right)$, by multiplying the value of $T$ set in Manurangsi and Suksompong's proof by a factor of $c$, while keeping other valuations as they did.
Then by Lemma 2.4 in their paper, the upper bound of the probability that E1' occurs is the upper bound for E1 to the power of $c$, which is still in $O\left(1/m^3\right)$. Meanwhile, the upper bound of the probability that E2 occurs is still in $O\left(1/m^3\right)$, for $m$ that is sufficiently large.
Hence we show that for some constant $c>1$, when the distribution is PDF-bounded on $[0,1]$, the probability that $i$ envies $i'$ more than $1-c$ in the round robin allocation is at most $O\left(1/m^3\right)$.

Now we use such result to further prove the envy probability is bounded by $O\left(1/m^3\right)$ when the distribution $\mathcal{D}$ is PDF-bounded and has interval support instead of $[0,1]$ support.
The method is to use affine transformation to transform $\mathcal{D}$'s support interval $[a, b]$ into $[0, 1]$, mapping the original utility $u$ to $u'=(u-a)/(b-a)$ and the original distribution $\mathcal{D}$ to $\mathcal{D}'$. The PDF $f_\mathcal{D}$ now becomes $f_\mathcal{D'}(u)=(b-a)\cdot f_\mathcal{D}((b-a)u+a)$. 
Since $\mathcal{D}$ is $(p, q)$-PDF-bounded, the length of its support, $b-a$, must be at least $1/q$. Thus for any $u\in[0, 1]$, $p/q\leq f_\mathcal{D'}(u)\leq q$, indicating that $\mathcal{D}'$ is PDF-bounded on $[0, 1]$.
Since the affine transform does not change the ordinal profile, the round robin allocation under the transformed utility profile, where all original utilities are transformed by the affine transformation: $u\mapsto (u-a)/(a-b)$, is the same as the one under the original utility profile. 
For the same round robin allocation, suppose the envy that $i$ holds for $i'$ in the transformed utility profile is $e$, then the envy in the original utility profile becomes $e'$:
\begin{displaymath}
    e' = \begin{cases}
    (b-a)\, e & \text{if $i$ and $i'$ get same number of items}\\
    (b-a)\, e+a & \text{if $i'$ gets one more item than $i$} \\
    (b-a)\, e-a & \text{if $i$ gets one more item than $i'$}
    \end{cases}
\end{displaymath}
Then for $i$ to envy $i'$ in the original utility profile, i.e., $e'>0$, it must be true that $e>-a/(b-a)$. Since the utilities in the transformed utility profile can be considered as drawn randomly from $\mathcal{D}'$, which is PDF-bounded on $[0, 1]$, by our previous result, the probability that $e>-a/(b-a)=1-c$ is in $O\left(1/m^3\right)$.
Hence the probability that $i$ envies $i'$ in the round robin allocation for distribution $\mathcal{D}$ is still in $O\left(1/m^3\right)$, generalizing the result \cref{lem:MSenvy} to only assuming interval support and PDF-bounded for the distributions.
Finally, similarly, we get $\mathbb{P}^S\left[i\text{ envies }j\right]=O\left(1/m^3\right)$ and that the allocation is envy-free in the asymmetric model with high probability, when distributions have interval support and are $(p,q)$-PDF-bounded.

\end{proof}

\section{Discussion of Maximum-Percentile Algorithm}
\label{app:maxpercentile}
\begin{figure}
    \centering
    \includegraphics[width=.3\linewidth]{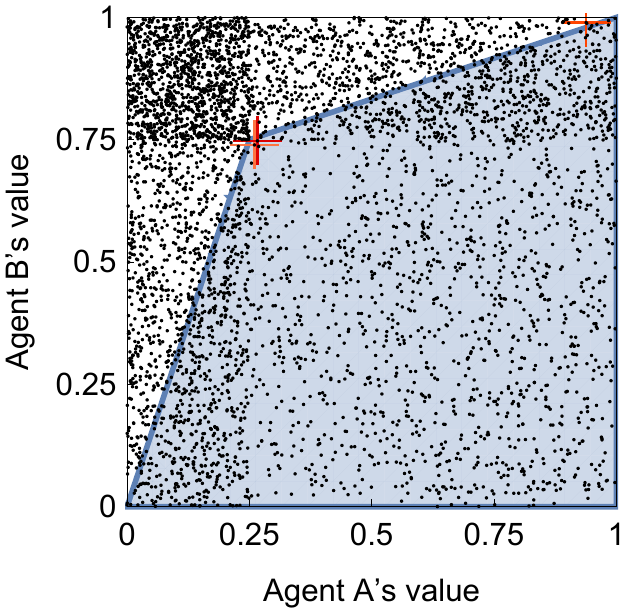}
    \includegraphics[width=.3\linewidth]{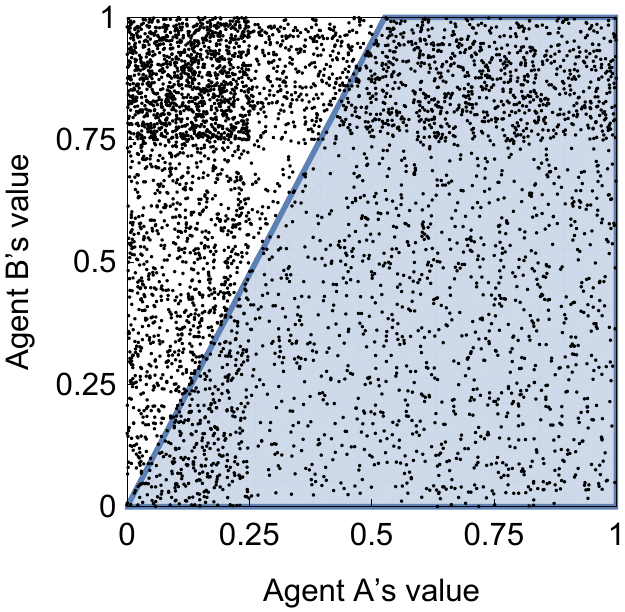}
    \caption{Illustrations of the example in the text. Dots represent utilities of 5\,000 random items. Shaded regions delineate items allocated to agent~A, for the maximum-percentile algorithm (left) and for the multiplier allocation (right). Three marked items (two have near-identical utilities) show that the maximum-percentile allocation is not PO.}
    \label{fig:percentile}
\end{figure}

As we stated in the body of the paper, \citet{KPW16} apply the proof by \citet{DGK+14} to show the existence of EF allocations in the asymmetric setting.
The core idea of their algorithm is to allocate each item to the agent for whom the item is in the highest percentile of their utility distribution, which we will call the \emph{maximum-percentile algorithm}.
It is easy to see that each agent has a probability $1/n$ of receiving each item, and it is not too hard to show that agents have higher expected utility for items they receive than for items allocated to other agents. This implies envy-freeness with high probability as $m \in \Omega\left(n\, \log n\right)$ following the proof by \citeauthor{DGK+14}

Unfortunately, this construction is unlikely to generate PO allocations:
Consider a setting with two asymmetric agents, in which agent A's utility is drawn, with 50\% probability, uniformly between 0 and $1/4$, and, with 50\% probability, uniformly between $1/4$ and 1; and in which agent B's utility is drawn either uniformly between $0$ and $3/4$ or uniformly between $3/4$ and 1, each with 50\% probability.
Conceptually, A's utility distribution skews towards lower values, whereas B's skews towards higher values.
The black dots in \cref{fig:percentile} show random samples of these utilities, and the shaded region in the left plot marks the range of utilities in which the maximum-percentile algorithm allocates items to A.
The left plot also highlights three specific items: two lie around the median utility for both agents and are given to A, and one of them lies around the top percentile for both agents and is given to B.
The fact that the ratio ``$u_B(\alpha)/u_A(\alpha)$'' is strictly greater for the two ``median items'' (at roughly $(3/4)/(1/4) = 3$) than for the ``top item'' (roughly $1$) immediately implies that the allocation is not fPO: Agent B would profit from trading half of the top item against one of A's median items (roughly, since $3/4 > 1/2$), and A would also profit from this trade (since $1/2 > 1/4$).
In fact, a similar trade of whole items, which exchanges both median items against the top item, shows that the maximum-percentile allocation violates Pareto-optimality proper.

The most promising way to avoid violations of PO is to construct fPO allocations, since the characterization of fPO using multipliers in \cref{sec:model} provides useful structure that is not available for PO.
As shown in the right panel in \cref{fig:percentile}, this corresponds to choosing a line through the origin, allocating items below the line to A, and allocating items above the line to B.
In fact, the plot shows the unique such line with the added property that a random-utility item is equally likely to be given to either agent.
In the next section, we generalize this kind of allocation to arbitrary numbers of agents.

\section{Discussion of Normalizing Multipliers}
\label{app:normalize}
As discussed in previous section, the most promising way to construct PO allocation is to utilize multiplier-based maximum-welfare allocation.
One natural choice is the set of normalizing multipliers that normalize each agent's values to add up to 1.
However, we show by the following counter-example that the set of normalizing multipliers may violate EF.

Consider a setting with $|N|=n$ agents where agent 1's utility is drawn uniformly from $[0, 1]$ while the rest of the agents' utilities are drawn uniformly from $[1/3, 2/3]$. When the number of items is sufficiently large, the normalizing multipliers for all agents will be close. With dominating probability, by Chernoff bound, all multipliers for agents $N/\{1\}$ are less than $5/4$ of agent 1's multiplier. A Chernoff bound will also guarantee high probability that agent 1 receives more than $1/12=1/2\, (1-2/3\cdot5/4)$ of all items, while there must be some agent $i$ receiving less than $1/n$ of all items. Agent $i$ will surely envy agent 1 when $1/n\cdot2/3<1/12\cdot1/3$.

\section{Proofs Used in Existence Result of Envy-free and Pareto-optimal Allocations}

\subsection{Continuity of Probability Function $p_i(\cdot)$}
\label{app:continuous}
\begin{lemma}
\label{thm:continuous}
For all $i\in N$, the probability function $p_i(\cdot)$ defined by 
\begin{align*}
    p_i\left(\beta_1, \dots, \beta_n\right) = \mathbb{P}\left[\beta_i u_i = \max_{j\in N}\beta_j u_j\right]
\end{align*}
is continuous on $(\beta_1, \dots, \beta_n)\in \mathbb{R}^n_{>0}$.
\end{lemma}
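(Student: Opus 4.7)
The plan is to exploit the integral representation
\[
p_i(\vec{\beta}) \;=\; \int_0^1 f_i(u) \prod_{j \in N \setminus \{i\}} F_j\!\left(\tfrac{\beta_i}{\beta_j}\,u\right) du,
\]
which already appears as equation~(\ref{eq:piint}) in the main text (and is obtained by conditioning on $u_i = u$ and using independence across agents), and to conclude continuity via the dominated convergence theorem.

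First, I would verify pointwise continuity of the integrand in $\vec{\beta}$. Since each $\mathcal{D}_j$ is nonatomic, its CDF $F_j$ is continuous on all of $\mathbb{R}$; in particular, no special treatment is needed when the argument $(\beta_i/\beta_j)\,u$ leaves the support interval $[a_j, b_j]$, because $F_j$ simply saturates at $0$ or $1$ there. For every fixed $u \in [0,1]$, the map $\vec{\beta} \mapsto (\beta_i/\beta_j)\,u$ is continuous on $\mathbb{R}^n_{>0}$ (the denominator is bounded away from $0$), so composing with $F_j$ and taking the finite product over $j \ne i$ preserves continuity. Multiplying by the $\vec{\beta}$-independent factor $f_i(u)$ therefore yields an integrand $h(u, \vec{\beta}) := f_i(u) \prod_{j \ne i} F_j((\beta_i/\beta_j)\,u)$ that is continuous in $\vec{\beta}$ for every $u$.

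Second, I would exhibit a dominating function. Since $F_j \le 1$ everywhere, we have $|h(u, \vec{\beta})| \le f_i(u)$ for all $\vec{\beta} \in \mathbb{R}^n_{>0}$, and $\int_0^1 f_i(u)\,du = 1 < \infty$, so $f_i$ is an integrable dominator independent of $\vec{\beta}$. For any $\vec{\beta}^* \in \mathbb{R}^n_{>0}$ and any sequence $\vec{\beta}^{(k)} \to \vec{\beta}^*$, pointwise continuity gives $h(u, \vec{\beta}^{(k)}) \to h(u, \vec{\beta}^*)$ for every $u$, and the dominated convergence theorem yields $p_i(\vec{\beta}^{(k)}) \to p_i(\vec{\beta}^*)$. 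Since $\vec{\beta}^*$ was arbitrary, $p_i$ is continuous on $\mathbb{R}^n_{>0}$.

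No step here is a real obstacle; the only subtlety is ensuring the integral identity is legitimate and that continuity of $F_j$ at the endpoints of its support is not an issue, both of which are immediate from nonatomicity of the $\mathcal{D}_j$. If a stronger (Lipschitz) modulus of continuity were desired on compact subsets of $\mathbb{R}^n_{>0}$, one could upgrade the argument by invoking the $(p,q)$-PDF-boundedness assumption, under which each $F_j$ is $q$-Lipschitz and the product is jointly Lipschitz in $\vec{\beta}$ on any set where the ratios $\beta_i/\beta_j$ are bounded; but continuity on $\mathbb{R}^n_{>0}$ is all this lemma requires.
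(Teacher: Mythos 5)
Your proof is correct and follows essentially the same route as the paper: both start from the integral representation of $p_i$, use that each $F_j$ is continuous (nonatomicity) together with continuity of $\vec{\beta}\mapsto(\beta_i/\beta_j)u$ on $\mathbb{R}^n_{>0}$, and then pass the limit through the integral. The only difference is cosmetic: the paper first telescopes the difference of products into a sum of single-factor differences before taking the limit inside the integral, whereas you skip that decomposition and justify the interchange directly by dominated convergence with the (trivially integrable) dominator $f_i$. Your version is if anything a bit cleaner, since the paper's interchange of limit and integral is left implicit while yours is explicit.
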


\begin{proof}
We have
\begin{align*}
    p_i(\beta_1,\dots,\beta_n) &= \int_{0}^{1} f_i(u)\, \prod_{j\ne i}F_j\left(\frac{\beta_i}{\beta_j}u\right)\, du, 
\end{align*}
then
\begin{displaymath}
\begin{aligned}
    \left|p_i\left(\beta'_1,\cdots,\beta'_n\right) - p_i(\beta_1,\cdots,\beta_n)\right| 
    &\leq \int_{0}^{1} f_i(u)\cdot
    \left|\, \prod_{j\ne i}F_j\left(\frac{\beta'_i}{\beta'_j}u\right)-\prod_{j\ne i}F_j\left(\frac{\beta_i}{\beta_j}u\right)\right|\, du \\
    &\leq \sum_{j\ne i}\int_{0}^{1} f_i(u)\cdot
    \left| F_j\left(\frac{\beta'_i}{\beta'_j}u\right)-F_j\left(\frac{\beta_i}{\beta_j}u\right)\right|\, du.
\end{aligned}
\end{displaymath}
For nonatomic distribution $\mathcal{D}_j$, its cumulative distribution function $F_j(\cdot)$ is continuous, also the function $g_u(x, y) = \frac{xu}{y}$ is continuous when $x, y\ne 0$. Thus 
\begin{align*}
    \lim_{(\beta'_i, \beta'_j)\rightarrow(\beta_i, \beta_j)}F_j\left(\frac{\beta'_i}{\beta'_j}u\right) = F_j\left(\frac{\beta_i}{\beta_j}u\right),
\end{align*}
and we have
\begin{align*}
    &\lim_{(\beta'_1, \cdots, \beta'_n)\rightarrow(\beta_1, \cdots, \beta_n)}\left|\, p_i(\beta'_1,\cdots,\beta'_n) - p_i(\beta_1,\cdots,\beta_n)\,\right| \\
    &\leq \sum_{j\ne i}\int_{0}^{1} f_i(u) \lim_{(\beta'_i, \beta'_j)\rightarrow(\beta_i, \beta_j)} \left|F_j\left(\frac{\beta'_i}{\beta'_j}u\right) - F_j\left(\frac{\beta_i}{\beta_j}u\right)\right| du = 0.
\end{align*}
Therefore function $p_i(\cdot)$ is continuous on $\mathbb{R}^n_{>0}$.

\end{proof}

\subsection{Proof for Existence of Multipliers with Sperner's Lemma}
\label{app:sperner}
Here we show a detailed proof for existence of equalizing multipliers with Sperner's Lemma.
Without loss of generality we assume all multipliers add up to 1, then $(\beta_1, \beta_2, \dots, \beta_n)$ falls in a $(n-1)$-dimensional simplex $S$. Now we define the coloring function $f: S\rightarrow N$, which maps each set of multipliers to an agent with the highest probability of having the largest scaled utility under the set of multipliers:
\begin{displaymath}
f(\beta_1, \dots, \beta_n) \in \mathop{\arg\max}_{i\in N}\mathbb{P}\left[\beta_i u_i = \max_{j\in N}\beta_j u_j\right].
\end{displaymath}
It is clear that the $n$ vertices of the simplex are colored with $n$ different ``colors'', since agent $i$ will have probability 1 of having the largest scaled utility when $\beta_i=1$ and $\beta_j=0, \forall j\ne i$.

Now we divide the simplex $S$ into smaller simplices with (at most) half the diameter of previous simplex. Let $\mathcal{S}_1$ denote the set of smaller simplices. 
Then we further divide each simplex in $\mathcal{S}_1$ into even smaller simplices with half diameter, and we denote them by $\mathcal{S}_2$. We repeat the procedure and divide the original simplex into $\mathcal{S}_1, \mathcal{S}_2, \mathcal{S}_3, \dots$ containing smaller and smaller simplices.

By \emph{Sperner's Lemma}, in each $\mathcal{S}_i\, (i\geq1)$, there are always an odd number of simplicies that are colored with $n$ colors, indicating the existence of a simplex $\bm{s_i}\in \mathcal{S}_i$ whose vertices are colored with all $n$ colors.
Now consider the sequence of such simplices: $\bm{s_1}, \bm{s_2}, \bm{s_3}, \dots$, and let the $i$th vertex be the vertex that is mapped to $i$ by $f$. We can represent each simplex with a $n^2$-dimensional vector: 
\begin{align*}
    (\bm{\beta^1}, \bm{\beta^2}, \dots, \bm{\beta^n}) = 
    (\beta^1_1, \dots, \beta^1_n; \dots; \beta^n_1, \dots, \beta^n_n)\in \mathbb{R}^{n^2}.
\end{align*}
where $\bm{\beta^i}$ is the vector of multipliers in the $i$th vertex and $\beta^i_j$ is the $j$th multiplier in the $i$th vertex. Let $\bm{s^i_j}$ denote the subvector $\bm{\beta^i}$ in $\bm{s_j}$.

Since all of them are bounded between $(0, 0, \dots, 0)$ and $(1, 1, \dots, 1)$, by \emph{Bolzano-Weierstrass Theorem}, there must be a convergent subsequence $\bm{s_{a_1}}, \bm{s_{a_2}}, \bm{s_{a_3}}, \dots$ that converges to a simplex $(\bm{\beta^{1^*}}, \bm{\beta^{2^*}}, \dots, \bm{\beta^{n^*}})$ in the space:
\begin{align*}
    \lim_{t\rightarrow\infty}\bm{s_{a_t}} = (\bm{\beta^{1^*}}, \bm{\beta^{2^*}}, \dots, \bm{\beta^{n^*}}).
\end{align*}
Since the simplicies in $\{\bm{s_i}\}_{i\geq 0}$ get smaller and smaller with a ratio of $1/2$, we have
\begin{align*}
    \lim_{t\rightarrow\infty}\bm{s^1_{a_t}} = \lim_{t\rightarrow\infty}\bm{s^2_{a_t}} = \cdots = \lim_{t\rightarrow\infty}\bm{s^n_{a_t}}.
\end{align*}
Then it can be deduced that $\bm{\beta^{1^*}} = \cdots = \bm{\beta^{n^*}} = \bm{\beta^*}$.

We argue that $\bm{\beta^*}\in \mathbb{R}^n_{>0}$. 
Since otherwise, if $\beta^*_i=0$ for some $i\in N$, then $\beta^i_i$, the $i$th multiplier in the $i$th vertex grows arbitrarily small as the sequence $\{\bm{s_{a_t}}\}$ goes.
This would lead to the probability that agent $i$ having the largest scaled utility goes to 0, contradicting the fact that agent $i$ should have the highest probability of having largest scaled utility, with probability at least $1/n$.

Now we claim that all agents have equal probability of having the largest scaled utility under $\bm{\beta^*}$. Define 
\begin{align*}
p_i(\beta_1, \dots, \beta_n) = \mathbb{P}\left[\beta_i u_i = \max_{j\in N}\beta_j u_j\right].
\end{align*}
As shown in \cref{app:continuous}, $p_i(\cdot)$ is a continuous function in $\mathbb{R}^n_{>0}$. Then for all $i\in N$, 
\begin{align*}
    \frac{1}{n}\leq \lim_{t\rightarrow\infty}p_i(\bm{s^i_{a_t}}) = p_i(\lim_{t\rightarrow\infty}\bm{s^i_{a_t}}) = p_i(\bm{\beta^*}).
\end{align*}
Since $\sum_{i=1}^{n}p_i(\bm{\beta^*})=1$, we must have
\begin{align*}
    p_1(\bm{\beta^*}) = \cdots = p_n(\bm{\beta^*}) = \frac{1}{n}.
\end{align*}
This shows that $\bm{\beta^*}$ is the set of multipliers we are looking for, hence the existence.

\subsubsection{Algorithm Based on Sperner's-Lemma Proof}
We discuss here the degree to which the Sperner's argument can be implemented as an approximation algorithm.

\begin{proposition}
\label{prop:sperneralgo}
For any $\delta > 0$, a variant of the previous Sperner's argument computes a set of multipliers $\{\beta_i\}_{i \in N}$ such that $|p_i - p_j| \leq \delta$ in time $\textit{poly}(n) \, n! \, \left(\log(2 \, q) \, \left(1 + \frac{4 \, n \, q}{\delta}\right)\right)^n$.
\end{proposition}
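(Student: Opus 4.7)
The plan is to discretize the multiplier space on a logarithmic scale and then locate a panchromatic small simplex constructively. First, I would invoke the bound-on-multipliers property (\cref{lem:boundmult}) to argue that any candidate multipliers satisfying $|p_i - p_j| \leq \delta$ must have $\max_i \beta_i / \min_j \beta_j \leq 2q$, so after normalizing $\max_i \beta_i = 1$ it suffices to search the region $\beta_i \in [1/(2q), 1]$. Setting $\epsilon \coloneqq \delta/(4 \, n \, q)$ and discretizing each $\log_{1+\epsilon} \beta_i$ on the integer grid in $\{-K, \dots, 0\}$ with $K \coloneqq \lceil \log(2q)/\log(1+\epsilon)\rceil$ yields at most $(K+1)^n$ grid points, and $K \in \mathcal{O}(\log(2q) \, (1 + 4 \, n \, q/\delta))$ since $\log(1+x) \geq x/2$ for $x \leq 1$.

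Next, I would subdivide the simplex $S$ into small simplices whose vertices are the normalized multiplier vectors corresponding to grid points, using a Freudenthal-style triangulation that produces $\mathcal{O}(n! \, K^n)$ small simplices meeting face-to-face. Color each vertex by an agent achieving $\arg\max_i p_i$ at the corresponding multipliers, breaking ties arbitrarily. Because $p_i \to 0$ as $\beta_i \to 0$, the Sperner boundary condition (vertices on the face $\beta_i = 0$ are never colored $i$) is met, after possibly padding the log-grid with one additional layer on the ``small-$\beta_i$'' side where the color $i$ can be ruled out via \cref{lem:boundmult}. A panchromatic simplex can then be located by brute-force enumeration (or by the classical path-following algorithm for Sperner's lemma) in time $\textit{poly}(n) \, n! \, K^n$, with $n$ oracle queries per vertex to compute the colors; substituting the bound on $K$ yields the claimed running time $\textit{poly}(n) \, n! \, \bigl(\log(2\,q) \, (1 + 4 \, n \, q/\delta)\bigr)^n$.

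Finally, I would verify correctness via the step guarantee (\cref{lem:smallstep}). Any two vertices of a single Freudenthal simplex differ by precisely one ``step,'' namely scaling a subset $S \subseteq N$ of multipliers by a factor of $(1+\epsilon)$, so the step guarantee gives $|p_j(v) - p_j(v')| \leq 2 \, q \, \epsilon$ for every agent $j$. At the vertex $v_i$ colored with agent $i$ we have $p_i(v_i) \geq 1/n$; hence at any other vertex $v$ of the simplex, $p_i(v) \geq 1/n - 2\,q\,\epsilon$, and since $\sum_i p_i(v) = 1$ this forces $p_j(v) \leq 1/n + 2 \, (n-1) \, q \, \epsilon$ for every $j$, yielding $|p_i(v) - p_j(v)| \leq 2 \, n \, q \, \epsilon \leq \delta/2 \leq \delta$, as desired. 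The main obstacle will be setting up a triangulation that simultaneously respects the multiplicative structure needed to apply the step guarantee tightly and admits the simplicial face-to-face structure required for Sperner's lemma on $S$: a uniform Euclidean refinement is too coarse near the faces of $S$ (where small $\beta_i$ are highly sensitive in log-scale), so a logarithmic refinement\,---\,effectively carrying out the paper's recursive refinement scheme explicitly\,---\,is essential to obtain the claimed $4\,n\,q/\delta$ dependence rather than a worse $q^2$ scaling.
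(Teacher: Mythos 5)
Your high-level plan matches the paper's: work on a logarithmic $(1+\epsilon)$-grid, use a Freudenthal (Kuhn) triangulation so that each small simplex is a chain of single-coordinate unit steps, color each vertex by the agent with maximal resulting probability, rule out boundary colors via the bound-on-multipliers lemma, locate a panchromatic simplex by enumeration, and argue correctness from the step guarantee. Your parameter choices and the final runtime arithmetic line up with the paper's up to constants.

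There is, however, a gap that you yourself flag in your final paragraph and then leave unresolved, and it is precisely the gap the paper's proof is organized around. You propose to triangulate the probability simplex $S$ by mapping log-grid points onto it via normalization, but you never establish that those normalized points form a simplicization of $S$, and the paper explicitly notes that there is no obvious simplicization of a simplex already for $n = 4$. The paper sidesteps the simplex entirely: it fixes $\beta_n = 1$, works in the $(n-1)$-dimensional box $\{-K,\dots,K\}^{n-1}$ of log-exponents, partitions it into unit cubelets, subdivides each cubelet into $(n-1)!$ Freudenthal simplices, and applies a cube version of Sperner's lemma. For that version the boundary condition is two-sided: on the face $p_i = -K$ the color is never $i$ (because $\beta_i/\beta_n \leq 1/(2q)$ forces $p_n > p_i$ by \cref{lem:boundmult}), and on the face $p_i = K$ the color is never $n$ (symmetrically). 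Your setup uses a one-sided grid $\{-K,\dots,0\}$ with only the one-sided condition ``$\beta_i$ small $\Rightarrow$ color is not $i$,'' which is the boundary condition appropriate for a simplex, not a cube; on a cube it does not by itself guarantee a panchromatic cell. Fixing a distinguished agent's multiplier, using the two-sided grid, and invoking the cube-Sperner statement is the piece needed to close the argument; everything else in your proposal (the $\epsilon = \Theta(\delta/(nq))$ discretization, the one-step-per-pair structure of Freudenthal simplices, the step-guarantee bound $|p_i(v) - p_i(v')| \leq 2q\epsilon$, the enumeration cost $\textit{poly}(n)\, n!\, K^n$) is the same as the paper's.
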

\begin{proof}
To make the existence proof algorithmic, a key question is how to discretize the space of multipliers into simplices.
Such a simplicization should be easy to traverse algorithmically and its simplices should describe sufficiently compact sets of multipliers such that any panchromatic simplex should allow to uniformly bound how far the probabilities are from $1/n$.
As \citet{Papadimitriou94} describes, there is already no obvious simplicization when $n=4$; for example, the regular tetrahedron cannot be partitioned into multiple regular tetrahedra.
Following Papadimitriou's discussion, we apply Sperner's lemma to a hypercube rather than a simplex:

Let $\epsilon>0$ denote a small constant, to be determined later.
Let our set of points be the grid $G \coloneqq \{ \bm{p} \in \mathbb{Z}^{n-1} \mid ||\bm{p}||_{\infty} \leq \lceil \frac{\log(2 \, q)}{\log(1+\epsilon)} \rceil\}$.
We can partition this grid first in cubelets of the shape $[0, 1]^{n-1} + \bm{p} \subseteq G$ for some $\bm{p} \in G$, and subdivide each of these cubelets into $(n-1)!$ simplices.\footnote{For an excellent exposition, see \url{https://people.csail.mit.edu/costis/6896sp10/lec6.pdf}, accessed on January 11, 2022.}
We map each point $\bm{p} = (p_1, \dots, p_{n-1})$ in the grid $G$ to a color in $[n]$;
specifically, we choose the color $\argmax_{i \in N} p_i((1+\epsilon)^{p_1}, (1 + \epsilon)^{p_2}, \dots, (1 + \epsilon)^{p_{n-1}}, 1)$ with some canonical tie breaking.
If, for some $\bm{p} \in G$ and $1 \leq i \leq n-1$, we have $p_i = - \lceil \frac{\log(2 \, q)}{\log(1+\epsilon)} \rceil$, then $i$'s multiplier is at most $\frac{1}{2\,q}$ whereas agent $n$'s multiplier is $1$; by \cref{lem:boundmult}, $n$'s probability of being the largest is strictly larger than $i$'s, which means that $\bm{p}$'s color is not $i$.
Similarly, if $p_i = \lceil \frac{\log(2 \, q)}{\log(1+\epsilon)} \rceil$, then $i$'s probability is strictly larger than $n$'s by \cref{lem:boundmult}, and therefore $\bm{p}$'s color is not $n$.
Given these observations, there must exist a panchromatic simplex, which in turn must be contained in a cubelet $C = [0,1]^{n-1} + \bm{p}$ for a $\bm{p} \in G$.
Fix the multipliers $\{\beta_i\}_{i \in N}$ by setting $\beta_n \coloneqq 1$, and $\beta_i \coloneqq (1 + \epsilon)^{p_i}$ for all $1 \leq i \leq n-1$.
Recall that some point in $C$ is colored $n$, which means that $n$ has the largest probability for the corresponding multipliers, and in particular has a probability of at least $1/n$.
Since $n$'s multiplier $\beta_n$ is equal to the multiplier of this point, and since all other multipliers $\beta_i$ are at most the corresponding multiplier at this point, it follows that $p_n(\beta_1, \dots, \beta_n) \geq 1/n$.
Similarly, fix some agent $1 \leq i \leq n - 1$.
Since some point in $C$ has color $i$, the point $\bm{p} + \bm{e_i}$ must also have color $i$, where $\bm{e_i}$ is the unit vector in dimension $i$.
It follows that $p_i(\beta_1, \dots, \beta_{i-1}, (1 + \epsilon) \, \beta_i), \beta_{i+1}, \dots, \beta_n) \geq 1/n$.
By \cref{lem:lipschitzmultipliers}, $p_i(\beta_1, \dots, \beta_n) \geq 1/n - 2 \, q \, \epsilon$.
These lower bounds also allow us to upper bound the probabilities; for each $i \in N$, $p_i(\beta_1, \dots, \beta_n) \leq 1 - (n-1) \, (1/n - 2 \, q \, \epsilon) = 1/n + 2 \, (n - 1) \, q \, \epsilon$.
If we set $\epsilon = \frac{\delta}{2 \, n \, q}$, then it holds for all $i,j \in N$ that
\[|p_i(\beta_1, \dots, \beta_n) - p_j(\beta_1, \dots, \beta_n)| \leq \delta.\]
It remains to bound the running time of this algorithm. To find the panchromatic simplex, we traverse the simplices, visiting each at most once. Within each simplex, we query the oracle to determine the color of the new vertex and perform polynomial-time work to move on to the next simplex. The bottleneck of the computation is that we might have to traverse nearly all vertices, which leads to an overall running time in
\[ \textit{poly}(n) \, (n-1)! \, \left(\frac{2 \, \log(2 \, q) + 1}{\log(1 + \frac{\delta}{2 \, n \, q})}\right)^{n-1} \leq \textit{poly}(n) \, n! \, \left((2 \, \log(2 \, q) + 1) \, \left(1/2 + \frac{2 \, n \, q}{\delta}\right)\right)^n. \qedhere \]
\end{proof}

\subsection{Proof of Properties of the $p_i$}

\label{app:boundmult}
\begin{lemma}
\label{lem:localmonotonicity}
Fix multiplier sets $\vec{\beta}, \vec{\beta}' \in \mathbb{R}_{>0}^n$ and an agent $i$.
If $\frac{\beta_i'}{\beta_i} \geq \frac{\beta_j'}{\beta_j}$ for all $j \neq i$, then $p_i(\vec{\beta}') \geq p_i(\vec{\beta})$.
\end{lemma}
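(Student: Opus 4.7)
The plan is to exploit the fact that $p_i(\vec{\beta})$, as expressed in \cref{eq:piint}, depends on the multiplier vector only through the ratios $\beta_i/\beta_j$ for $j \neq i$. Concretely, the integrand is $f_i(u) \prod_{j \neq i} F_j\bigl((\beta_i/\beta_j)\, u\bigr)$, so if every ratio $\beta_i/\beta_j$ weakly increases when moving from $\vec{\beta}$ to $\vec{\beta}'$, then each factor in the product weakly increases pointwise in $u$ (since each $F_j$ is a CDF and hence nondecreasing), the whole integrand weakly increases pointwise, and the integral weakly increases by monotonicity of the Lebesgue integral.

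The first step is therefore to translate the hypothesis into this ``ratio'' form. The assumption $\beta_i'/\beta_i \geq \beta_j'/\beta_j$ for every $j \neq i$ rearranges (all multipliers are positive, so no sign issues) to
\[
\frac{\beta_i'}{\beta_j'} \;\geq\; \frac{\beta_i}{\beta_j} \qquad \text{for all } j \neq i.
\]
That is exactly the statement that every relevant ratio weakly grows.

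Second, I would plug this into the integral representation. For each fixed $u \in [0,1]$ and each $j \neq i$, monotonicity of the CDF $F_j$ gives $F_j\bigl((\beta_i'/\beta_j')\,u\bigr) \geq F_j\bigl((\beta_i/\beta_j)\,u\bigr) \geq 0$. Taking the product over $j \neq i$ of nonnegative, weakly larger quantities preserves the inequality, and multiplying by the nonnegative density $f_i(u)$ does too. Integrating over $u \in [0,1]$ yields $p_i(\vec{\beta}') \geq p_i(\vec{\beta})$.

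There is no real obstacle here; the argument is essentially a change of variables that turns the hypothesis on multiplier growth into a pointwise monotonicity statement for the integrand, and then CDF monotonicity finishes the job. The only thing worth noting is that the statement as given uses the form in \cref{eq:piint}, which already implicitly assumes $\mathcal{D}_i$ has support contained in $[0,1]$; nothing more about the distributions (such as $(p,q)$-PDF-boundedness or interval support) is needed for this lemma.
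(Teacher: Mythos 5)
Your proposal is correct and is essentially the paper's own (one-line) argument, spelled out in more detail: rearrange the hypothesis to $\beta_i'/\beta_j' \geq \beta_i/\beta_j$, then invoke monotonicity of each $F_j$ and of the integral. No gaps.
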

\begin{proof}
In \cref{eq:piint}, observe that the $F_j$ are monotone increasing and that $\beta_i' / \beta_j' = (\beta'_i / \beta_i)/(\beta'_j / \beta_j)\cdot\beta_i / \beta_j\geq \beta_i / \beta_j$, $\forall j$.
\end{proof}

\begin{restatable}{lemma}{lipschitzmultipliers}
\label{lem:lipschitzmultipliers}
Fix a set of multipliers $\vec{\beta}$, an agent $i$, and $\epsilon > 0$.
Let $\vec{\beta}'$ denote a set of multipliers such that $\beta'_i \coloneqq (1 + \epsilon) \, \beta_i$ and $\beta_j' \coloneqq \beta_j$ for all $j \neq i$. Then, $p_i(\vec{\beta}') \leq p_i(\vec{\beta}) + 2 \, q \, \epsilon$.
\end{restatable}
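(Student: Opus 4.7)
My plan is to express the probability difference as the probability of a single ``transition'' event and then bound the measure of that event using the density upper bound $q$.

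First, I would couple the two probabilities on a single sample space: draw $u_1,\dots,u_n$ independently from their distributions $\mathcal{D}_j$, and let $A \coloneqq \{\beta_i u_i \geq \beta_j u_j \text{ for all } j \neq i\}$ and $A' \coloneqq \{(1+\epsilon)\beta_i u_i \geq \beta_j u_j \text{ for all } j \neq i\}$. Since boosting $\beta_i$ only enlarges the winning region of $i$, $A \subseteq A'$, so
\[
  p_i(\vec{\beta}') - p_i(\vec{\beta}) \;=\; \mathbb{P}[A' \setminus A].
\]

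Second, I would characterize this set cleanly. Writing $X \coloneqq \beta_i u_i$ and $Y \coloneqq \max_{j \neq i} \beta_j u_j$, the event $A$ is $\{X \geq Y\}$ while $A'$ is $\{(1+\epsilon) X \geq Y\}$, so
\[
  A' \setminus A \;=\; \bigl\{ Y/(1+\epsilon) \;\leq\; X \;<\; Y \bigr\}.
\]
By the independence of $u_i$ from the $u_j$ with $j \neq i$, the variables $X$ and $Y$ are independent, and by the $(p,q)$-PDF-boundedness of $\mathcal{D}_i$ on $[a_i,b_i] \subseteq [0,1]$, the density of $X$ is bounded above by $q/\beta_i$ on its support. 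Conditioning on $Y = y$ and using this density bound on the interval of length $y\,\epsilon/(1+\epsilon)$,
\[
  \mathbb{P}\!\left[\,y/(1+\epsilon) \leq X < y \,\right] \;\leq\; \tfrac{q}{\beta_i} \cdot \tfrac{y\,\epsilon}{1+\epsilon}.
\]

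Third, I would use that on the event of interest $X \geq Y/(1+\epsilon)$, together with $X \leq \beta_i$ (because $u_i \leq 1$), forces $Y \leq (1+\epsilon)\beta_i$. Plugging this bound on $Y$ into the previous display gives the conditional probability bound $q\,\epsilon$, and integrating over $Y$ preserves it, yielding $p_i(\vec{\beta}') - p_i(\vec{\beta}) \leq q\,\epsilon \leq 2q\,\epsilon$. The main (and only) subtlety is making sure the density bound $q/\beta_i$ on $X$ is applied only where valid and that $Y$ is bounded correctly on the relevant event; everything else is routine. The factor of two in the statement leaves comfortable slack, which is convenient since, e.g., replacing $\epsilon/(1+\epsilon)$ by $\epsilon$ for readability already costs nothing with this slack.
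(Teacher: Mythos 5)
Your proof is correct and takes a genuinely different route from the paper's. The paper writes the increment as an integral $\int_0^1 f_i(u)\,\big(\pi((1+\epsilon)u) - \pi(u)\big)\,du$, where $\pi(t) = \prod_{j \neq i} F_j(\beta_i t / \beta_j)$, partitions $[0,1]$ into geometric bands $[(1+\epsilon)^{-t-1}, (1+\epsilon)^{-t}]$, and bounds the resulting telescoping sum, with the factor of $2$ in $2q\epsilon$ arising from splitting into even- and odd-indexed terms. You instead identify the increment as the probability of the ``transition strip'' $\{Y/(1+\epsilon) \le X < Y\}$ with $X = \beta_i u_i$ and $Y = \max_{j \neq i} \beta_j u_j$, and bound it directly for each fixed $Y$ using the density cap $q/\beta_i$ on $X$ and the observation that $X \le \beta_i$ forces $Y \le (1+\epsilon)\beta_i$ on the strip. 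This is sound (the independence of $X$ from $Y$ is exactly what makes the conditioning clean, and the density bound extends trivially to all of $\mathbb{R}$ since it is zero off the support) and it actually yields the tighter bound $q\epsilon$. Your argument is shorter, avoids the analytic telescoping machinery, and makes the geometric content — that increasing $\beta_i$ shifts the winning region by a thin wedge of bounded measure — more transparent.
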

\begin{proof}
For convenience, set $p_i \coloneqq p_i(\beta_1, \dots, \beta_n)$ and $p_i' \coloneqq p_i(\beta_1, \dots, \beta_{i-1}, (1 + \epsilon) \, \beta_i, \beta_{i+1}, \dots, \beta_n)$.
Moreover, define a function $\pi : \mathbb{R} \to \mathbb{R}$ such that $\pi(t) \coloneqq \prod_{j \neq i} F_j(\frac{\beta_i}{\beta_j} \, t)$.
Observe that $\pi$ is monotone increasing and its range lies within $[0,1]$.
Using these definitions,
\begin{align}
    p_i' - p_i &= \int_{0}^1 f_i(u) \, \big(\pi((1+\epsilon) \, u) - \pi(u)\big) \, du \notag \\
    &= \sum_{t=0}^{\infty} \int_{(1+\epsilon)^{-t-1}}^{(1+\epsilon)^{-t}} f_i(u) \, \big(\pi((1+\epsilon) \, u) - \pi(u)\big) \, du \notag \\
    &\leq \sum_{t=0}^{\infty} \left(\big(\pi((1+\epsilon)^{-t+1}) - \pi((1 + \epsilon)^{-t-1})\big) \, \int_{(1+\epsilon)^{-t-1}}^{(1+\epsilon)^{-t}} f_i(u) du \right) \notag \\
    &\leq \sum_{t=0}^{\infty} \left(\big(\pi((1+\epsilon)^{-t+1}) - \pi((1 + \epsilon)^{-t-1})\big) \, \int_{(1+\epsilon)^{-t-1}}^{(1+\epsilon)^{-t}} q \, du \right) \notag \\
    &= q \, \epsilon \, \sum_{t=0}^{\infty} \big(\pi((1+\epsilon)^{-t+1}) - \pi((1 + \epsilon)^{-t-1})\big) \, (1 + \epsilon)^{-t-1} \label{eq:telescope}
\end{align}
By the monotonicity of $\pi$, all coefficients $\pi((1+\epsilon)^{-t+1}) - \pi((1 + \epsilon)^{-t-1})$ are nonnegative.
Moreover, if we add up these coefficients only for the even $t$, this is a telescoping series
\begin{align*}
    \sum_{t=0,2,4,\dots} \pi((1+\epsilon)^{-t+1}) - \pi((1 + \epsilon)^{-t-1}) = \lim_{t \to \infty} \pi(1+\epsilon) - \pi((1 + \epsilon)^{-2\,t-1}) \leq 1.
\end{align*}
Thus, the even summands of \cref{eq:telescope} are a ``subconvex'' combination of the $(1+\epsilon)^{-t-1}$, and are therefore upper bounded by the largest such term:
\[ \sum_{t=0,2,4,\dots} \big(\pi((1+\epsilon)^{-t+1}) - \pi((1 + \epsilon)^{-t-1})\big) \, (1 + \epsilon)^{-t-1} \leq (1 + \epsilon)^{-1}. \]
Applying the same reasoning to the odd summands, we obtain
\[ \sum_{t=1,3,5,\dots} \big(\pi((1+\epsilon)^{-t+1}) - \pi((1 + \epsilon)^{-t-1})\big) \, (1 + \epsilon)^{-t-1} \leq (1 + \epsilon)^{-2}. \]
By plugging these last two equations into \cref{eq:telescope}, we conclude that
\[ p_i' - p_i \leq q \, \epsilon \, \big((1 + \epsilon)^{-1} + (1 + \epsilon)^{-2} \big) \leq 2 \, q \, \epsilon. \qedhere \]
\end{proof}

\begin{restatable}{lemma}{lemboundmult}
\label{lem:boundmult}
For two agents $i$ and $j$ and $\vec{\beta} \in \mathbb{R}_{>0}^n$, if $\smash{\frac{\beta_i}{\beta_j}} \geq 2 \, q$, then $p_i(\vec{\beta}) > p_j(\vec{\beta})$.
\end{restatable}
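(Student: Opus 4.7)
The plan is to compare $p_i(\vec{\beta})$ and $p_j(\vec{\beta})$ via matched one-sided estimates that exploit the gap $\beta_i/\beta_j \geq 2q$ together with the density upper bound $q$. Set $c \coloneqq \beta_j/\beta_i$, so $c \leq 1/(2q)$; since any $(p,q)$-PDF-bounded distribution supported in $[0,1]$ must have $q \geq 1$ (its density, bounded by $q$, integrates to $1$ on an interval of length at most $1$), we also have $c \leq 1/2$. Two complementary observations will drive the argument: first, whenever $u_i \geq c$ we have $\beta_i u_i \geq \beta_j \geq \beta_j u_j$, so $i$'s scaled utility automatically beats $j$'s; second, for $\beta_j u_j$ to beat $\beta_i u_i$ one needs $u_i < c u_j \leq c$, which happens with probability at most $F_i(c) \leq qc \leq 1/2$ by the density bound.

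I would turn these observations into explicit inequalities. Let $\Pi \coloneqq \prod_{k \neq i,j} F_k(\beta_j/\beta_k)$. Restricting the integral in \cref{eq:piint} to $u \geq c$, where the factor $F_j((\beta_i/\beta_j)\,u) = F_j(u/c)$ equals $1$, and applying monotonicity of each $F_k$ to $(\beta_i/\beta_k)\,u \geq (\beta_i/\beta_k)\,c = \beta_j/\beta_k$ yields
\[
p_i(\vec{\beta}) \;\geq\; \int_c^{1} f_i(u) \prod_{k \neq i, j} F_k\!\left(\tfrac{\beta_i}{\beta_k}\, u\right) du \;\geq\; (1 - F_i(c))\,\Pi.
\]
Conversely, bounding $F_i(cu) \leq F_i(c)$ and $F_k((\beta_j/\beta_k)\,u) \leq F_k(\beta_j/\beta_k)$ for $u \leq 1$ in the analogous integral for $p_j$ gives $p_j(\vec{\beta}) \leq F_i(c)\,\Pi$. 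Subtracting and invoking $F_i(c) \leq qc \leq 1/2$ produces $p_i(\vec{\beta}) - p_j(\vec{\beta}) \geq (1 - 2 F_i(c))\,\Pi \geq 0$.

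The principal obstacle is upgrading this weak inequality to the claimed \emph{strict} one. In the generic regime where $\beta_i/\beta_j > 2q$ strictly and $\Pi > 0$, strictness is immediate. The delicate cases are: \emph{(i)} the boundary $\beta_i/\beta_j = 2q$, where $1-2qc$ vanishes and I would recover slack from the observation that $u_j < 1$ on a positive-measure set (by nonatomicity), making $F_i(cu) < F_i(c)$ almost everywhere and thereby tightening the upper bound on $p_j$; and \emph{(ii)} the degenerate regime $\Pi = 0$, where some $k \neq i, j$ satisfies $\beta_k a_k \geq \beta_j$ and hence $p_j = 0$, so the task reduces to showing $p_i > 0$ directly, which I would obtain by replacing the integration threshold $c$ with $\max\{c,\; \max_k (\beta_k a_k)/\beta_i\}$ and using the density lower bound $p$ to guarantee a strictly positive contribution from the interval above this threshold.
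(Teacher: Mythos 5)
Your main bounds are the paper's bounds rewritten in CDF notation. Unpacking the integrals, $1-F_i(c) = \mathbb{P}[\beta_i u_i \geq \beta_j]$, $F_i(c) = \mathbb{P}[\beta_j \geq \beta_i u_i]$, and $\Pi = \mathbb{P}[\beta_j \geq \max_{k\neq i,j}\beta_k u_k]$, so your two inequalities $p_i \geq (1-F_i(c))\,\Pi$ and $p_j \leq F_i(c)\,\Pi$ are exactly the lower bound on $p_i$ and upper bound on $p_j$ that the paper derives by conditioning, and the closing comparison $F_i(c) \leq q\,c \leq 1/2$ is also the paper's final step. So the core of your argument is essentially the same as the paper's.

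Where you go beyond the paper is in taking strictness seriously, and you are right to be uneasy: the paper asserts the upper bound on $p_j$ with a strict ``$<$'' and attributes it only to ``$u_j \leq 1$,'' which does not give strictness by itself (if, say, $F_i(c)=0$ or $\Pi=0$, both the bound and $p_j$ are exactly zero). However, your sketched repair of case~(ii) cannot succeed, and the gap is shared by the paper: if some third agent $k$ satisfies $\beta_k a_k \geq \beta_i b_i$ (take, e.g., three agents uniform on $[1/2,1]$, so $q=2$, with $\beta_1=4$, $\beta_2=1$, $\beta_3=1000$), then agent $k$ wins every item, so $p_i = p_j = 0$ and the claimed conclusion $p_i > p_j$ is false outright; shifting the lower integration limit to $\max\{c,\max_k \beta_k a_k/\beta_i\}$ does not help because that threshold exceeds $b_i$ and the remaining integral is empty. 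The two places where the paper invokes this lemma in the analysis of Algorithm~\ref{alg:multipliers} happen to steer clear of this regime (there one is comparing against the agent with maximal or minimal $p_\cdot$, with additional membership in $Z_h$ or $Z_\ell$), but strictly speaking both your write-up and the lemma statement need an extra hypothesis ruling out the degenerate $p_i = p_j = 0$ case, after which the weak-to-strict case analysis you outline does go through.
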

\begin{proof}
Suppose there is a pair of $i, j\in N$ such that $\beta_i/\beta_j> 2q$. Then we we have the following inequalities for the probability $p_i, p_j$ of $i$ and $j$ getting each item:
\begin{align*}
    &p_i = \mathbb{P}\left[\beta_i u_i=\max_{k\in N}\beta_k u_k\right] 
    \geq \mathbb{P}\left[\beta_i u_i=\max_{k\in N}\beta_k u_k\,\land\, \beta_i u_i\geq\beta_j\right] \\
    &= \mathbb{P}\left[\beta_iu_i\geq\beta_j\right]\cdot\mathbb{P}\left[\beta_i u_i=\max_{k\in N}\beta_k u_k\, \bigg|\, \beta_i u_i\geq\beta_j\right] 
    \geq \mathbb{P}\left[\beta_iu_i\geq\beta_j\right]\cdot\mathbb{P}\left[\beta_j\geq\max_{k\in N,\ k\ne i, j}\beta_k u_k\right],
\end{align*}
and 
\begin{align*}
    &p_j = \mathbb{P}\left[\beta_j u_j=\max_{k\in N}\beta_k u_k\right] 
    = \mathbb{P}\left[\beta_j u_j\geq \beta_i u_i\right]\cdot\mathbb{P}\left[\beta_j u_j=\max_{k\in N}\beta_k u_k\, \bigg|\, \beta_j u_j\geq \beta_i u_i\right] \\
    &< \mathbb{P}\left[\beta_j\geq \beta_i u_i\right]\cdot\mathbb{P}\left[\beta_j\geq\max_{k\in N,\ k\ne i, j}\beta_k u_k\right], 
\end{align*}
where the last inequality follows from $u_j\leq1$. Since
\begin{align*}
    \mathbb{P}\left[\beta_i u_i\leq \beta_j\right] = \mathbb{P}\left[u_i\leq \beta_j/\beta_i\right]\leq\mathbb{P}\left[u_i\leq1/(2q)\right]\leq 1/2, 
\end{align*}
we have
\begin{align*}
    \mathbb{P}\left[\beta_i u_i\geq\beta_j\right] \geq \mathbb{P}\left[\beta_j\geq \beta_i u_i\right]\quad \Rightarrow \quad p_i > p_j.
\end{align*}
\end{proof}
Here we extend \cref{lem:boundmult} to bound the ratios under approximated equalizing multipliers.
\begin{corollary}
If the set of multipliers $\vec{\beta}$ satisfies that $|p_i-1/n|<1/(2\, n)$ for all $i\in N$, then the ratio $\beta_{i}/\beta_{j}$ between any pair of agents $i, j$ is at most $4\, q$.
\label{cor:boundratio}
\end{corollary}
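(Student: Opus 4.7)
The plan is to adapt the proof of \cref{lem:boundmult} with a tighter numerical constant, and then derive a contradiction from the window around $1/n$ in which all $p_i$ must lie. Assume for contradiction that there exist agents $i \neq j$ with $\beta_i/\beta_j > 4\,q$. Then $\beta_j/\beta_i < 1/(4\,q)$, and since $f_i \leq q$ on $[0,1]$ we get $F_i(\beta_j/\beta_i) \leq q \cdot 1/(4\,q) = 1/4$. Thus $\mathbb{P}[\beta_i u_i \geq \beta_j] \geq 3/4$ while $\mathbb{P}[\beta_j \geq \beta_i u_i] \leq 1/4$.

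Next, I would reuse the two bracketings from the proof of \cref{lem:boundmult}: letting $Q \coloneqq \mathbb{P}[\beta_j \geq \max_{k \neq i,j} \beta_k u_k]$, we have
\[ p_i \geq \mathbb{P}[\beta_i u_i \geq \beta_j] \cdot Q \geq \tfrac{3}{4}\,Q, \qquad p_j \leq \mathbb{P}[\beta_j \geq \beta_i u_i] \cdot Q \leq \tfrac{1}{4}\,Q. \]
Combining these yields $p_i \geq 3\,p_j$. (Note that $Q > 0$ since $p_i \geq 1/(2\,n) > 0$, so this inequality is meaningful.)

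Finally I would invoke the hypothesis $|p_k - 1/n| < 1/(2\,n)$ for all $k$, which forces $p_i < 3/(2\,n)$ and $p_j > 1/(2\,n)$. In particular $3\,p_j > 3/(2\,n) > p_i$, contradicting $p_i \geq 3\,p_j$ from the previous step. Hence no such pair can exist, so $\beta_i/\beta_j \leq 4\,q$ for every $i,j \in N$.

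The only real substance is the quantitative strengthening of \cref{lem:boundmult}: the original lemma only ruled out ratios $\geq 2\,q$ because it compared $p_i > p_j$, which is implied by exact equalization. Under approximate equalization we only know $p_i / p_j < 3$, so we need the ratio $\beta_i/\beta_j$ to be large enough to push $\mathbb{P}[\beta_i u_i \geq \beta_j]$ past $3/4$; choosing the threshold $4\,q$ is exactly what achieves this, and there is no further obstacle.
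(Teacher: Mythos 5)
Your proof is correct and follows the same route as the paper's: assume $\beta_i/\beta_j > 4\,q$, bound $\mathbb{P}[\beta_i u_i \leq \beta_j]$ by $1/4$ via the PDF upper bound, reuse the two bracketings from \cref{lem:boundmult} to get $p_i \geq 3\,p_j$, and contradict $1/(2n) < p_i, p_j < 3/(2n)$. If anything you are slightly more careful than the paper's write-up, which starts from ``$\beta_i/\beta_j > 3\,q$'' (an apparent typo, since that only yields $\beta_j/\beta_i < 1/(3q)$, which is too weak to give the needed $1/4$ bound and the factor-of-3 gap) where your threshold of $4\,q$ is the one that actually makes the argument close.
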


\begin{proof}
If $\beta_{i}/\beta_{j}>3\, q$ for some $i, j\in N$, then 
\begin{align*}
    \mathbb{P}\left[\beta_i u_i\leq \beta_j\right] = \mathbb{P}\left[u_i\leq \beta_j/\beta_i\right]\leq\mathbb{P}\left[u_i<1/(4q)\right]< 1/4, 
\end{align*}
Then following the inequalities in the above proof for \cref{lem:boundmult}, we would have $p_i>3\, p_j$, which contradicts to the fact that $1/(2\,n)<p_i, p_j<3/(2\,n)$.
\end{proof}

\begin{lemma}
\label{lem:smallstep}
Let $\vec{z} \in \mathbb{Z}^n$, $\epsilon > 0$, and $S \subseteq N$. For all $i \in S$,
\[ p_i\big((1+\epsilon)^{\vec{z}}\big) \leq p_i\big((1+\epsilon)^{\vec{z} + \bone_S}\big) \leq p_i\big((1+\epsilon)^{\vec{z}}\big) + 2 \, q \, \epsilon, \]
and, for all $i \notin S$,
\[ p_i\big((1+\epsilon)^{\vec{z}}\big) - 2 \, q \, \epsilon \leq p_i\big((1+\epsilon)^{\vec{z} + \bone_S}\big) \leq p_i\big((1+\epsilon)^{\vec{z}}\big). \]
\end{lemma}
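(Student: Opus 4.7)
My plan is to prove each of the four inequalities in the statement separately, using two tools: the integral formula $p_i(\vec{\beta}) = \int_0^1 f_i(u) \prod_{j \neq i} F_j(\beta_i/\beta_j \, u) \, du$, which implies that $p_i$ weakly decreases whenever the $\beta_j$'s grow relative to $\beta_i$; and \cref{lem:lipschitzmultipliers}, which controls by $2 \, q \, \epsilon$ the change in $p_i$ when $i$'s \emph{own} multiplier is scaled by $(1+\epsilon)$.

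Two of the four inequalities follow from monotonicity alone. For $i \in S$, the transition from $\vec{z}$ to $\vec{z}+\bone_S$ multiplies $\beta_i$ by $(1+\epsilon)$, which is the maximum multiplicative factor among all agents, so $p_i$ weakly increases by \cref{lem:localmonotonicity}; this gives the lower bound for $i \in S$. For $i \notin S$, the same transition leaves $\beta_i$ fixed while weakly increasing the other multipliers, so every factor $F_j(\beta_i/\beta_j \, u)$ in the integral weakly decreases, making $p_i$ weakly decrease; this gives the upper bound for $i \notin S$.

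For the Lipschitz-type upper bound for $i \in S$, I will insert an intermediate vector $\vec{\beta}''$ that differs from $(1+\epsilon)^{\vec{z}}$ only in agent $i$'s coordinate, which is scaled up by $(1+\epsilon)$. Then \cref{lem:lipschitzmultipliers} directly yields $p_i(\vec{\beta}'') \leq p_i((1+\epsilon)^{\vec{z}}) + 2\,q\,\epsilon$. Passing from $\vec{\beta}''$ to $(1+\epsilon)^{\vec{z}+\bone_S}$ only further increases the multipliers of the agents in $S \setminus \{i\}$, which by integral monotonicity weakly decreases $p_i$. Chaining the two yields the desired bound.

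The main obstacle is the lower bound for $i \notin S$, because \cref{lem:lipschitzmultipliers} only controls changes driven by $i$'s own multiplier, whereas here only other agents' multipliers are touched. My plan is to exploit the scale invariance of the resulting probabilities: dividing all multipliers by $(1+\epsilon)$ gives $p_i((1+\epsilon)^{\vec{z}+\bone_S}) = p_i((1+\epsilon)^{\vec{z}-\bone_{N \setminus S}})$, so in this rewritten form agent $i$'s own multiplier has been decreased by $(1+\epsilon)$. I can then mimic the previous step: introduce an intermediate $\vec{\gamma}$ differing from $(1+\epsilon)^{\vec{z}}$ only in $i$'s coordinate, apply \cref{lem:lipschitzmultipliers} in reverse to obtain $p_i(\vec{\gamma}) \geq p_i((1+\epsilon)^{\vec{z}}) - 2\,q\,\epsilon$, and observe that passing from $\vec{\gamma}$ to $(1+\epsilon)^{\vec{z}-\bone_{N \setminus S}}$ only further decreases the multipliers of the agents in $N \setminus S \setminus \{i\}$, which by integral monotonicity weakly increases $p_i$. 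Concatenating the two inequalities completes the fourth bound.
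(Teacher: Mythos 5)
Your proof is correct and takes essentially the same route as the paper: the paper's proof is a one-line sketch saying the lemma ``follows from repeated application of Lemma~\ref{lem:localmonotonicity} and Lemma~\ref{lem:lipschitzmultipliers}, as well as the observation that the effect on the resulting multipliers is equal whether we multiply all $\beta_i$ for $i \in S$ by $1+\epsilon$ or whether we divide all $\beta_i$ for $i \notin S$ by $1+\epsilon$ instead.'' You have simply spelled out the four chained inequalities that this sketch refers to, including the scale-invariance rewriting for the $i \notin S$ lower bound.
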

\begin{proof}
This follows from repeated application of \cref{lem:localmonotonicity} and \cref{lem:lipschitzmultipliers}, as well as the observation that the effect on the resulting multipliers is equal whether we multiply all $\beta_i$ for $i \in S$ by $1 + \epsilon$ or whether we divide all $\beta_i$ for $i \notin S$ by $1 + \epsilon$ instead.
\end{proof}

\subsection{Uniqueness of Equalizing Multipliers}
\label{app:unique}
First, we formalize the notion of local strict monotonicity as the following lemma.
\begin{lemma}
Assuming the distributions for all agents have interval support. For any agent $j\in N$, let $p_j>0,\, p'_j$ be the probability that agent $j$ receives each item under $\{\beta_i\}_{i\in N}$ and $\{\beta'_i\}_{i\in N}$. If $\forall i\in N$, $\beta'_j/\beta'_i\geq \beta_j/\beta_i$ and there exists some $k\in N$ such that $p_k>0$ and $\beta'_j/\beta'_k> \beta_j/\beta_k$, then $p'_j>p_j$.
\label{lem:strict}
\end{lemma}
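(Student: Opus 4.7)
The plan is to expand $p'_j - p_j$ as an integral of a pointwise nonnegative function (via \cref{eq:piint}) and exhibit a nonempty open sub-interval of $(a_j, b_j)$ on which that function is strictly positive; interval support of $\mathcal{D}_j$ will then upgrade this to positive $f_j$-mass and force $p'_j > p_j$.

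Concretely, I would first write
\[
p'_j - p_j = \int_0^1 f_j(u)\left[\prod_{i \ne j} F_i\!\left(\tfrac{\beta'_j}{\beta'_i}\,u\right) - \prod_{i \ne j} F_i\!\left(\tfrac{\beta_j}{\beta_i}\,u\right)\right]du.
\]
The bracket is pointwise $\geq 0$ by the hypothesis $\beta'_j/\beta'_i \geq \beta_j/\beta_i$ and monotonicity of each $F_i$ (essentially \cref{lem:localmonotonicity}). To isolate the strict improvement against $k$, I would lower-bound the first product by keeping the original factors for all $i \ne k$ and only substituting the larger new factor at position $k$, obtaining
\[
\text{bracket} \;\geq\; \Bigl(F_k\bigl(\tfrac{\beta'_j}{\beta'_k}\,u\bigr) - F_k\bigl(\tfrac{\beta_j}{\beta_k}\,u\bigr)\Bigr) \prod_{i \ne j,\,k} F_i\bigl(\tfrac{\beta_j}{\beta_i}\,u\bigr).
\]

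Next I would identify an open set $I \subseteq (a_j, b_j)$ on which this lower bound is strictly positive. Interval support makes each $F_i$ strictly increasing on $[a_i, b_i]$, so the $F_k$-gap is positive iff $a_k \beta'_k/\beta'_j < u < b_k \beta_k/\beta_j$, and the residual product is positive iff $u > a_i\beta_i/\beta_j$ for every $i \ne j,k$. Because $\beta'_j/\beta'_k > \beta_j/\beta_k$ forces $a_k\beta'_k/\beta'_j \leq a_k\beta_k/\beta_j \leq \max_{i \ne j} a_i\beta_i/\beta_j$, that left endpoint is subsumed, so I may take $I = (a_j, b_j) \cap \bigl(\max_{i \ne j} a_i\beta_i/\beta_j,\; b_k\beta_k/\beta_j\bigr)$.

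Finally I would show $I \ne \emptyset$, which is where the two positivity hypotheses enter. Inspecting the integral formula under interval support yields the clean characterization that $p_\ell > 0$ iff $a_i\beta_i < b_\ell\beta_\ell$ for every $i \ne \ell$. Applying this with $\ell = j$ gives $\max_{i \ne j} a_i\beta_i/\beta_j < b_j$; applying it with $\ell = k$ gives $a_i\beta_i < b_k\beta_k$ for every $i \ne k$, which combined with $a_k < b_k$ delivers both $\max_{i \ne j} a_i\beta_i/\beta_j < b_k\beta_k/\beta_j$ and $a_j < b_k\beta_k/\beta_j$. Hence $I$ is a nonempty open sub-interval of $(a_j, b_j)$, and interval support of $\mathcal{D}_j$ gives it positive $f_j$-mass, completing the proof.

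The main obstacle is the careful bookkeeping in the non-emptiness check: the strictly improved ratio holds only against $k$, so the only strict $F$-gap lives at $i=k$, and every upper-bound comparison for $I$ must therefore be routed through $b_k$ (the ``old'' upper end of $F_k$'s support) rather than through any expression involving the new multipliers. One must separately invoke $p_j > 0$ to guarantee that $j$'s own support overlaps the collective winning region, since $p_k > 0$ alone does not rule out $I$ being empty when $(a_j, b_j)$ lies entirely outside that region.
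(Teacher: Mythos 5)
Your proof is correct and follows essentially the same route as the paper's: both identify the open interval $\bigl(\max_{i\in N} a_i\beta_i/\beta_j,\ \min\{b_j,\ b_k\beta_k/\beta_j\}\bigr)$, use $p_j>0$ and $p_k>0$ to show it is nonempty, observe that $f_j$ and all factors $F_i$ are positive there with the $k$-factor strictly increased, and conclude strict growth of the integral. The only cosmetic difference is that you telescope the product difference through the $k$-coordinate and state an explicit algebraic characterization of $p_\ell>0$, whereas the paper argues both nonemptiness claims by contradiction and compares the full products directly on the interval.
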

\begin{proof}
First we can express $p'_j, p_j$ as follows:
\begin{align*}
    &p'_j = \mathbb{P}\left[\beta'_j u'_j=\max_{i\in N}\{\beta'_i u'_i\}\right] = \int_{0}^{1}f_j(u) \prod_{i\in N/\{j\}}F_i\left(\frac{\beta'_j}{\beta'_i}u\right) du, \\
    &p_j = \int_{0}^{1}f_j(u) \prod_{i\in N/\{j\}}F_i\left(\frac{\beta_j}{\beta_i}u\right) du.
\end{align*}
Suppose $\mathcal{D}_j$'s support interval is $[\underline{u_j}, \overline{u_j}]$. Now take 
\begin{align*}
    \underline{u} = \max_u\{u: \exists i\in N,\, F_i\left(\frac{\beta_j}{\beta_i}u\right)=0\}.
\end{align*}
From $F_j\left(\frac{\beta_j}{\beta_j}\underline{u_j}\right)=0$, then by definition of $\underline{u}$, it is true that $\underline{u_j}\leq\underline{u}$.
We also have $\underline{u}<\overline{u_j}$, since otherwise we can find agent $i\in N$, such that $F_i\left(\frac{\beta_j}{\beta_i}u\right)=0$ for all possible value of $u_j=u\in[\underline{u_j}, \overline{u_j}]$, making $p_j=0$. Then we take 
\begin{align*}
    \overline{u} = \min_u\{u: F_k\left(\frac{\beta_j}{\beta_k}u\right)=1\}.
\end{align*}
We argue that $\overline{u}>\underline{u}$. Consider otherwise, then we can find $u_0\in [\overline{u}, \underline{u}]$ and $i\in N$, where
\begin{align*}
    &F_i\left(\frac{\beta_j}{\beta_i}u_0\right)=0,\, F_k\left(\frac{\beta_j}{\beta_k}u_0\right)=1 \\
    &\Rightarrow 
    u_i \geq \frac{\beta_j}{\beta_i}u_0,\,
    u_k \leq \frac{\beta_j}{\beta_k}u_0
    \Rightarrow \beta_i u_i\geq \beta_k u_k,
\end{align*}
which will make $p_k=0$ since all distributions are non-atomic, contradicting our assumption that $p_k>0$. Combining the earlier arguments, we know that $\max\{\underline{u_j}, \underline{u}\} < \min\{\overline{u_j}, \overline{u}\}$. Then for any $u\in  (\max\{\underline{u_j}, \underline{u}\}, \min\{\overline{u_j}, \overline{u}\})=\mathcal{I}$,
\begin{align*}
    &f_j(u)>0, \quad F_i\left(\frac{\beta_j}{\beta_i}u\right) > 0,\, \forall i\in N, \\
    &0<F_k\left(\frac{\beta_j}{\beta_k}u\right) < 1
    \Rightarrow F_k\left(\frac{\beta'_j}{\beta'_k}u\right) > F_k\left(\frac{\beta_j}{\beta_k}u\right).
\end{align*}
The last inequality follows from the fact that the derivative $F'_k(u)>0$ for $u$ that satisfies $0<F_k(u)<1$ (guaranteed by Interval support property, $0<F_k(u)<1$ just means $u$ is in the support interval), and that $\beta'_j/\beta'_k>\beta_j/\beta_k$. Then 
\begin{align*}
    \int_{u\in\mathcal{I}}f_j(u) \prod_{i\in N/\{j\}}F_i\left(\frac{\beta'_j}{\beta'_i}u\right) du 
    > \int_{u\in\mathcal{I}}f_j(u) \prod_{i\in N/\{j\}}F_i\left(\frac{\beta_j}{\beta_i}u\right) du.
\end{align*}
For $u$ in the rest of the range, we have that
\begin{align*}
\frac{\beta'_j}{\beta'_i}\geq \frac{\beta_j}{\beta_i}
\Rightarrow
F_i\left(\frac{\beta'_j}{\beta'_i}u\right)\geq F_i\left(\frac{\beta_j}{\beta_i}u\right),\, \forall i\in N/\{j\}.
\end{align*}
Then the integral in this range for $p'_j$ is greater or equal to that for $p_j$. Hence we have showed the strict ordering $p'_j>p_j$.
\end{proof}
Now we prove uniqueness of equalizing multipliers. Suppose there are two different sets of equalizing multipliers, namely $\vec{\beta}, \vec{\beta'}$ (assume they are both normalized by setting $\beta_1=1$).
We can find the $i=\arg\max_{j\in N}\beta'_j/\beta_j$, w.l.o.g. assume $\beta'_{i}/\beta_{i}>1$, otherwise we simply exchange $\vec{\beta}$ and $\vec{\beta'}$ and the maximum ratio must be larger than $1$ since the ratios cannot all be 1.
Now we know that for any $j\in N$, $\beta'_i/\beta'_j = (\beta'_i/\beta_i)/(\beta'_j/\beta_j)\cdot\beta_i/\beta_j \geq \beta_i/\beta_j$, while $\beta'_i/\beta'_1>\beta_i/\beta_1$. Then by the local strict monotonicity \cref{lem:strict}, $i$'s probability under $\vec{\beta'}$ is strictly larger than that under $\vec{\beta}$, contradicting to them both being $1/n$ under equalizing multipliers.

\subsection{Inequalities for Expectations}
\label{app:exp}
\begin{lemma}
For any pair of agents $i, j\in N$, the following inequalities between expectations hold (assuming that the conditions in the conditional expectations can be met):
\begin{equation}
    \mathbb{E}\left[u_i\, \bigg|\, \beta_j u_j = \max_{k\in N}\beta_k u_k\right] \leq \mathbb{E}\left[u_i\right]
\label{eq:exp1}
\end{equation}
and
\begin{equation}
\mathbb{E}\left[u_i\, \bigg|\, \beta_i u_i=\max_{k\in N}\beta_k u_k\right] \geq \mathbb{E}\left[u_i\, \big|\, \beta_i u_i \geq \beta_j u_j\right].
\label{eq:exp2}
\end{equation}
\end{lemma}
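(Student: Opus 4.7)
The plan is to derive both inequalities from \emph{Chebyshev's correlation inequality} (sometimes called the one-dimensional FKG inequality): for any probability distribution $Q$ on $\mathbb{R}$ and any two monotone functions $r, s: \mathbb{R} \to \mathbb{R}$, $\mathbb{E}_Q[r \, s] \geq \mathbb{E}_Q[r] \, \mathbb{E}_Q[s]$ when $r$ and $s$ are comonotone, and the reverse inequality when they are anti-comonotone. The key observation is that, because the $u_k$ are mutually independent, the conditional probability of each conditioning event given $u_i = u$ is a simple monotone function of $u$.

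For \cref{eq:exp1}, let $E \coloneqq \{\beta_j u_j = \max_{k \in N} \beta_k u_k\}$ and define $g(u) \coloneqq \mathbb{P}[E \mid u_i = u]$. Conditioning on $u_i$ inside the numerator gives
\[ \mathbb{E}[u_i \mid E] = \frac{\mathbb{E}[u_i \, g(u_i)]}{\mathbb{E}[g(u_i)]}. \]
The function $g$ is nonincreasing in $u$, since enlarging $u$ only tightens the constraint $\beta_j u_j \geq \beta_i u$ while leaving the remaining constraints $\beta_j u_j \geq \beta_k u_k$ ($k \ne i, j$) unaffected. Since the identity map is nondecreasing, Chebyshev's inequality gives $\mathbb{E}[u_i \, g(u_i)] \leq \mathbb{E}[u_i] \cdot \mathbb{E}[g(u_i)]$; dividing by $\mathbb{E}[g(u_i)] = \mathbb{P}[E] > 0$ yields the claim.

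For \cref{eq:exp2}, decompose the stronger event as $E' = A \cap B$, where $A \coloneqq \{\beta_i u_i \geq \beta_j u_j\}$ and $B \coloneqq \{\beta_i u_i \geq \beta_k u_k \text{ for all } k \ne i, j\}$. Letting $h(u) \coloneqq \mathbb{P}[A \mid u_i = u]$ and $\phi(u) \coloneqq \mathbb{P}[B \mid u_i = u]$, independence of the $u_k$ yields $\mathbb{P}[E' \mid u_i = u] = h(u) \, \phi(u)$, and both $h$ and $\phi$ are nondecreasing in $u$. Introduce the tilted probability measure $\bar{\nu}$ on $[0,1]$ whose density is proportional to $h(u) \, f_i(u)$; then $\mathbb{E}[u_i \mid A] = \mathbb{E}_{\bar\nu}[u_i]$, while $\mathbb{E}[u_i \mid E'] = \mathbb{E}_{\bar\nu}[u_i \, \phi(u_i)] / \mathbb{E}_{\bar\nu}[\phi(u_i)]$. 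Since the identity map and $\phi$ are both nondecreasing, Chebyshev's inequality applied under $\bar\nu$ gives $\mathbb{E}_{\bar\nu}[u_i \, \phi(u_i)] \geq \mathbb{E}_{\bar\nu}[u_i] \cdot \mathbb{E}_{\bar\nu}[\phi(u_i)]$, which rearranges to $\mathbb{E}[u_i \mid E'] \geq \mathbb{E}[u_i \mid A]$ as desired.

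The main subtlety I anticipate is the tilted measure $\bar\nu$ in \cref{eq:exp2}, which differs from the original $\mathcal{D}_i$; however, Chebyshev's correlation inequality is distribution-free and only requires monotonicity of the two functions, so this causes no trouble. Notably, the argument uses neither interval support nor $(p, q)$-PDF-boundedness of the distributions: only independence across agents and the monotonicity of $F_k$ (implicit in the monotonicity of $g, h, \phi$) are needed.
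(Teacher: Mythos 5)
Your proof is correct, but it takes a genuinely different route from the paper's. The paper proves \cref{eq:exp1} by first conditioning on $u_j$ (the value of the ``winning'' agent's utility): given $u_j = u$ and the winning event, independence implies the remaining conditioning on $u_i$ is just $\beta_i u_i \leq \beta_j u$, and $\mathbb{E}[u_i \mid u_i \leq c] \leq \mathbb{E}[u_i]$ is immediate; it then integrates over $u_j$. For \cref{eq:exp2}, the paper conditions on $v_{i,j} = \max_{k \neq i,j} \beta_k u_k$, reducing to $\mathbb{E}[u_i \mid \beta_i u_i \geq \beta_j u_j, \, \beta_i u_i \geq c] \geq \mathbb{E}[u_i \mid \beta_i u_i \geq \beta_j u_j]$, which is again a conditioning-on-a-tail observation under the $A$-conditioned measure. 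You instead condition on $u_i$, express both conditional expectations as ratios of the form $\mathbb{E}[u_i \cdot w(u_i)]/\mathbb{E}[w(u_i)]$ for a monotone weight $w$, and invoke Chebyshev's correlation inequality (one-dimensional FKG) once per inequality, with the mild extra wrinkle of the tilted measure $\bar\nu$ for \cref{eq:exp2}. Both proofs rely on the same underlying facts (independence of the $u_k$, monotonicity of the CDFs), but yours unifies both inequalities under a single classical tool and makes the likelihood-ratio-tilting structure explicit, whereas the paper's is a bare-hands conditioning argument that avoids citing FKG. Your observation that neither interval support nor PDF-boundedness is needed matches the paper, which also states the lemma without those hypotheses.
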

\begin{proof}
Let $v_j=\mathbbm{1} \left[\beta_j u_j = \max_{k\in N}\beta_k u_k\right]$, which is a random variable taking value from $\{0, 1\}$.
Then we have
\begin{align*}
    \mathbb{E}\left[u_i\, \bigg|\, \beta_j u_j = \max_{k\in N}\beta_k u_k\right]
    = \mathbb{E}\left[u_i\, \big|\, v_j=1\right]
    = \int_{0}^{1}f_{u_j|v_j=1}(u)\cdot \mathbb{E}\left[u_i\, \big|\, v_j=1, u_j=u\right] du.
\end{align*}
Given $v_j=1$, for all $u\in[0, 1]$, it holds that
\begin{align*}
    \mathbb{E}\left[u_i\, \big|\, v_j=1, u_j=u\right]
    = \mathbb{E}\left[u_i\, \big|\, \beta_i u_i\leq \beta_j u\right]
    \leq \mathbb{E}\left[u_i\right].
\end{align*}
Thus
\begin{align*}
    \int_{0}^{1}f_{u_j|v_j=1}(u)\cdot \mathbb{E}\left[u_i\, \big|\, v_j=1, u_j=u\right] du\leq 
    \int_{0}^{1}f_{u_j|v_j=1}(u)\cdot \mathbb{E}\left[u_i\right] du
    = \mathbb{E}\left[u_i\right].
\end{align*}
Therefore \cref{eq:exp1} holds.

For \cref{eq:exp2}, let $v_{i,j}=\max_{k\in N/\{i, j\}}\beta_k u_k$ be a random variable taking value in $[0, 1]$, and $v_i=\mathbbm{1} \left[\beta_i u_i = \max_{k\in N}\beta_k u_k\right]$.
Then by substituting the condition we have
\begin{align*}
    &\mathbb{E}\left[u_i\, \bigg|\, \beta_i u_i=\max_{k\in N}\beta_k u_k\right]
    = \mathbb{E}\left[ u_i\, \bigg|\, v_i=1\right] 
    = \mathbb{E}\left[ u_i\, \bigg|\, \beta_i u_i\geq \beta_j u_j, \beta_i u_i\geq v_{i,j}\right] \\
    &= \int_{0}^{1}f_{v_{i,j}|v_i=1}(u)\cdot\mathbb{E}\left[ u_i\, \bigg|\, \beta_i u_i\geq \beta_j u_j, \beta_i u_i\geq v_{i,j}, v_{i,j}=u\right]du.
\end{align*}
Since $v_{i,j}, u_i, u_j$ are independent, for all $u\in[0,1]$ it holds that
\begin{align*}
    \mathbb{E}\left[ u_i\, \bigg|\, \beta_i u_i\geq \beta_j u_j, \beta_i u_i\geq v_{i,j}, v_{i,j}=u\right]
    = \mathbb{E}\left[ u_i\, \bigg|\, \beta_i u_i\geq \beta_j u_j, \beta_i u_i\geq u\right]
    \geq \mathbb{E}\left[ u_i\, \bigg|\, \beta_i u_i\geq \beta_j u_j\right].
\end{align*}
Hence
\begin{align*}
    &\int_{0}^{1}f_{v_{i,j}|v_i=1}(u)\cdot\mathbb{E}\left[ u_i\, \bigg|\, \beta_i u_i\geq \beta_j u_j, \beta_i u_i\geq v_{i,j}, v_{i,j}=u\right]du\\
    &\geq \int_{0}^{1}f_{v_{i,j}|v_i=1}(u)\cdot\mathbb{E}\left[ u_i\, \bigg|\, \beta_i u_i\geq \beta_j u_j\right]du
    = \mathbb{E}\left[ u_i\, \bigg|\, \beta_i u_i\geq \beta_j u_j\right].
\end{align*}
Therefore \cref{eq:exp2} holds.
\end{proof}

\subsection{Positive Gap Between Conditional Expectations}
\label{app:positivegap}
\begin{restatable}{proposition}{proppositivegap}
\label{prop:positivegap}
Fix a set of agents whose utility distributions have interval support, and let $\{\beta_i\}_{i \in N}$ denote a set of multipliers such that $\left|p_i-1/n\right|< 1/(2\,n)$ for all $i\in N$.
Then, for all $i \neq j$, \[\mathbb{E}\left[u_i \middle| \beta_i u_i = \max_{k \in N} \beta_k u_k \right] > \mathbb{E}\left[u_i \middle| \beta_j u_j = \max_{k \in N} \beta_k u_k \right].\]
\end{restatable}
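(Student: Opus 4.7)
The plan is to derive the desired strict inequality by chaining the two (non-strict) inequalities \eqref{eq:exp1} and \eqref{eq:exp2} established in \cref{app:exp}, bridged by a strict middle inequality
\[ \mathbb{E}\bigl[u_i \mid \beta_i u_i \geq \beta_j u_j\bigr] > \mathbb{E}[u_i]. \]
Applying \eqref{eq:exp2} gives $\mathbb{E}[u_i \mid \beta_i u_i = \max_k \beta_k u_k] \geq \mathbb{E}[u_i \mid \beta_i u_i \geq \beta_j u_j]$, and applying \eqref{eq:exp1} gives $\mathbb{E}[u_i] \geq \mathbb{E}[u_i \mid \beta_j u_j = \max_k \beta_k u_k]$, so it suffices to prove the strict middle step.

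To prove the middle step, I would write the conditional expectation as
\[ \mathbb{E}\bigl[u_i \mid \beta_i u_i \geq \beta_j u_j\bigr] = \frac{\mathbb{E}[u_i\, g(u_i)]}{\mathbb{E}[g(u_i)]}, \qquad g(u) := F_j\!\left(\tfrac{\beta_i}{\beta_j} u\right), \]
and observe that $g$ is a non-decreasing function of $u$. Since $u \mapsto u$ is also non-decreasing, Chebyshev's sum inequality (equivalently, an FKG-type argument) yields $\mathrm{Cov}(u_i, g(u_i)) \geq 0$, and the inequality is strict as soon as $g$ is non-constant on $[a_i, b_i]$ on a set of positive $f_i$-measure. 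A short rearrangement then shows that strict positivity of this covariance is equivalent to $\mathbb{E}[u_i \mid \beta_i u_i \geq \beta_j u_j] > \mathbb{E}[u_i]$.

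The main obstacle\,---\,and the only place where the hypothesis $|p_k - 1/n| < 1/(2n)$ is used\,---\,is ruling out the two degenerate cases in which $g$ could be constant on the support of $u_i$. If $g \equiv 0$ on $[a_i, b_i]$, then $\beta_i b_i \leq \beta_j a_j$, so $\beta_j u_j > \beta_i u_i$ almost surely by non-atomicity, which forces $p_i = 0$ and contradicts $p_i > 1/(2n)$. Symmetrically, if $g \equiv 1$ on $[a_i, b_i]$, then $\beta_j b_j \leq \beta_i a_i$, giving $p_j = 0$ and contradicting $p_j > 1/(2n)$. In the remaining case, $g$ takes both values in $(0, 1)$ and values near $0$ or $1$ on subintervals of $[a_i, b_i]$; since $\mathcal{D}_j$ has interval support, $F_j$ is strictly increasing on $[a_j, b_j]$, so $g$ is strictly increasing on the subinterval where $\tfrac{\beta_i}{\beta_j} u \in (a_j, b_j)$, yielding $\mathrm{Cov}(u_i, g(u_i)) > 0$.

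Combining the three steps gives
\[ \mathbb{E}\bigl[u_i \mid \beta_i u_i = \max_k \beta_k u_k\bigr] \geq \mathbb{E}\bigl[u_i \mid \beta_i u_i \geq \beta_j u_j\bigr] > \mathbb{E}[u_i] \geq \mathbb{E}\bigl[u_i \mid \beta_j u_j = \max_k \beta_k u_k\bigr], \]
as desired. Note that this proof only uses interval support; the $(p,q)$-PDF-boundedness assumption is not needed here, but will be invoked later (in \cref{lem:boundgap}) to turn this qualitative gap into a uniform constant gap $C_{p,q}$.
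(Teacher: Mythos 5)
Your proof is correct, but it takes a genuinely different route from the paper's. The paper reduces, via the same two inequalities from the auxiliary lemma (\cref{eq:exp1}, \cref{eq:exp2}), to showing $\mathbb{E}[X \mid X > Y] > \mathbb{E}[X]$ for $X = \beta_i u_i$, $Y = \beta_j u_j$; it then conditions on $Y = y$, establishes the pointwise bound $\mathbb{E}[X \mid X > y] > \mathbb{E}[X]$ for $y$ in the open interval $\mathcal{I}$ where the supports of $X$ and $Y$ overlap (and $\geq$ for $y \notin \mathcal{I}$), and integrates over $y$ via total expectation. You instead identify
\[
\mathbb{E}\bigl[u_i \mid \beta_i u_i \geq \beta_j u_j\bigr] - \mathbb{E}[u_i] \;=\; \frac{\operatorname{Cov}\bigl(u_i, g(u_i)\bigr)}{\mathbb{E}[g(u_i)]}, \qquad g(u)=F_j\!\bigl(\tfrac{\beta_i}{\beta_j}u\bigr),
\]
and invoke the Chebyshev/Harris correlation inequality for two nondecreasing functions of a single random variable. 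This buys you a cleaner, more conceptual argument: the positivity of the gap is reduced to the covariance between two comonotone transforms of $u_i$, and the condition $|p_k - 1/n| < 1/(2n)$ enters only to rule out $g \equiv 0$ and $g \equiv 1$ on the support. The paper's version is more self-contained (it does not appeal to a named inequality) and its decomposition over $\mathcal{I}$ foreshadows the machinery later refined in \cref{lem:boundgap} to extract the quantitative constant $C_{p,q}$; your covariance formulation could also be pushed to a quantitative bound, but that would require separately lower-bounding $\operatorname{Cov}(u_i, g(u_i))$ and upper-bounding $\mathbb{E}[g(u_i)]$.

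Two small points of imprecision, neither fatal: the phrase ``$g$ is non-constant on $[a_i,b_i]$ on a set of positive $f_i$-measure'' is not quite well-formed---what you need is that $g(Z_1) \neq g(Z_2)$ with positive probability for independent $Z_1, Z_2 \sim \mathcal{D}_i$, which follows since $g$ is continuous, nondecreasing, nonconstant on $[a_i,b_i]$, and $\mathcal{D}_i$ has full support on that interval. Likewise, ``$g$ takes both values in $(0,1)$ and values near $0$ or $1$ on subintervals'' is garbled; the clean statement is that once $g\not\equiv 0$ and $g\not\equiv 1$, the preimage $\{u \in (a_i,b_i) : \tfrac{\beta_i}{\beta_j}u \in (a_j, b_j)\}$ is a nonempty open subinterval (else one of the two excluded degeneracies would occur), and $g$ is strictly increasing there because $F_j$ is strictly increasing on $(a_j, b_j)$ under interval support.
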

\begin{proof}
Let $X \coloneqq \beta_i u_i$ and $Y \coloneqq \beta_j u_j$ denote the scaled random variables, then it still holds that: (a) $X$ and $Y$ are independent, (b) $X$ and $Y$ have interval support, and (c) both $\mathbb{P}\left[X > Y\right]$ and $\mathbb{P}\left[Y > X\right]$ are at least $1/(2\,n)>0$.
From \cref{app:exp}, we know that
\begin{align*}
    &\mathbb{E}\left[u_i\, \bigg|\, \beta_i u_i=\max_{k\in N}\beta_k u_k\right] \geq \mathbb{E}\left[u_i\, |\, \beta_i u_i\geq\beta_j u_j\right]
    = \frac{1}{\beta_i}\mathbb{E}\left[X\, \bigg|\, X> Y\right], \\
    &\mathbb{E}\left[u_i\, \bigg|\, \beta_j u_j=\max_{k\in N}\beta_k u_k\right]  \leq \mathbb{E}\left[u_i\right]
    = \frac{1}{\beta_i}\mathbb{E}\left[X\right].
\end{align*}
Then it suffices to show that $\mathbb{E}\left[X \mid X > Y \right] > \mathbb{E}\left[X\right]$.

Let $\mathcal{I}$ denote the intersection of the support intervals of $X$ and $Y$, excluding both endpoints.
This intersection is a nonempty interval and both $\mathbb{P}\left[X \in \mathcal{I}\right]$ and $\mathbb{P}\left[Y \in \mathcal{I}\right]$ must have positive probability, since, else, one variable's support would entirely lie below or above the other variable's support, which would contradict the above observation that $0 < \mathbb{P}\left[X>Y\right] < 1$.
Since for all $y\in \mathcal{I}$, 
\begin{align*}
    \mathbb{E}\left[X\right] = 
    \underbrace{\mathbb{P}\left[X>y\right]}_{>0} \cdot \underbrace{\mathbb{E}\left[X\, |\, X > y\right]}_{>y} + \underbrace{\mathbb{P}\left[X\leq y\right]}_{=1-\mathbb{P}\left[X>y\right] > 0} \cdot \underbrace{\mathbb{E}\left[X\, |\, X \leq y\right]}_{\leq y}
    \, \Rightarrow\,  \mathbb{E}\left[X\, |\, X > y\right] > \mathbb{E}\left[X\right].
\end{align*}
Hence
\begin{align*}
    \mathbb{E}\left[X\, |\, X > Y, Y\in \mathcal{I}\right]
    &= \int_{0}^{\beta_j}f_{Y|Y<X, Y\in\mathcal{I}}(y)\cdot\mathbb{E}\left[X\, |\, X > Y, Y\in \mathcal{I}, Y=y\right] dy \\
    &= \int_{0}^{\beta_j}f_{Y|Y<X, Y\in\mathcal{I}}(y)\cdot\mathbb{E}\left[X\, |\, X > y\right] dy \\
    &> \int_{0}^{\beta_j}f_{Y|Y<X, Y\in\mathcal{I}}(y)\cdot\mathbb{E}\left[X\right] dy 
    = \mathbb{E}\left[X\right].
\end{align*}
For all $y\notin \mathcal{I}$, $\mathbb{E}\left[X\, |\, X > y\right] \geq \mathbb{E}\left[X\right]$, from which we have
\begin{align*}
    \mathbb{E}\left[X\, |\, X > Y, Y\notin \mathcal{I}\right]
    &= \int_{0}^{\beta_j}f_{Y|Y<X, Y\notin\mathcal{I}}(y)\cdot\mathbb{E}\left[X\, |\, X > Y, Y\notin \mathcal{I}, Y=y\right] dy \\
    &= \int_{0}^{\beta_j}f_{Y|Y<X, Y\notin\mathcal{I}}(y)\cdot\mathbb{E}\left[X\, |\, X > y\right] dy \\
    &\geq \int_{0}^{\beta_j}f_{Y|Y<X, Y\notin\mathcal{I}}(y)\cdot\mathbb{E}\left[X\right] dy 
    = \mathbb{E}\left[X\right].
\end{align*}
Then, we can bound
\begin{align*}
    \mathbb{E}\left[X\, |\, X > Y\right] 
    ={} &\underbrace{\mathbb{P}\left[Y\in \mathcal{I} \mid X > Y \right]}_{>0} \cdot \underbrace{\mathbb{E}\left[X\, |\, X > Y, Y\in \mathcal{I}\right]}_{> \mathbb{E}[X]} \\
    &+ \underbrace{\mathbb{P}\left[Y \notin \mathcal{I} \mid X > Y \right]}_{= 1 - \mathbb{P}\left[Y\in \mathcal{I} \mid X > Y \right]} \cdot \underbrace{\mathbb{E}\left[X\, |\, X > Y, Y\notin \mathcal{I}\right]}_{\geq \mathbb{E}[X]} \\
    >{} &\mathbb{E}\left[X\right],
\end{align*}
which shows a positive gap.

\end{proof}

\subsection{Discussion of $(p,q)$-PDF-Boundedness}
\label{app:pdfboundeddisc}
The above proof in \cref{app:positivegap} shows a gap for any particular set of agents, but, to bound the probability of EF when the number and distributions of agents change, we need a uniform constant lower bound for all $n$ and all utility distributions involved.
Obtaining such a bound requires some additional restriction on which distributions are allowed, and $(p, q)$-PDF-boundedness is a natural choice for this:
On the one hand, excessively low probability densities cn make the gap grow arbitrarily small, although still $>0$. To see this, consider a variant of the counter-example where agent~A’s distribution is uniform on $[1/4, 3/4]$ and agent~B’s utility is uniformly distributed on $[0, 1/4]\cup[3/4, 1]$, in which we increase the density of agent B's distribution in $[1/4, 3/4]$ to a small, positive constant $\epsilon$. It is easy to see that, in this example, allowing arbitrarily low (positive) densities can cause the gap to become arbitrarily small.
On the other hand, the gap could vanish as a result of excessively high rather than low densities. Indeed, in a scenario where agent~A's utility is uniform on $[0,1]$ and agent B's distribution is uniform on $[1/2-\epsilon, 1/2 + \epsilon]$, as $\epsilon \to 0^+$ and agent B's density grows unboundedly, the positive gap for this agent goes to zero.
Assuming that all densities (in the support) lie between some constants $p>0$ and $q$ avoids these problematic cases.

\subsection{Proof of \cref{lem:boundgap}}
Before proving the constant gap, we first show a lower bound on the length of the intersection between support intervals.
\begin{lemma}
Suppose the set of multipliers $\vec{\beta}$ satisfies that $|p_i-1/n|<1/(2\, n)$ for all $i\in N$.
For any $i, j\in N$ and $h(u)=f_j\left(\frac{\beta_i}{\beta_j}u\right)$, the interval $I^* = \text{supp}(h)\cap\text{supp}(f_i)$ has length at least $L_{q}$, which only depends on $q$.
\label{lem:overlap}
\end{lemma}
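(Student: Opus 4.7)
The plan is to exploit the multiplier ratio bound from \cref{cor:boundratio} together with the length lower bound on support intervals implied by $(p,q)$-PDF-boundedness. Since $f_i$ is bounded above by $q$ and integrates to one, its interval support $[a_i,b_i]$ satisfies $b_i - a_i \geq 1/q$, and likewise $b_j - a_j \geq 1/q$. Because $\text{supp}(h) = [\beta_j a_j/\beta_i,\, \beta_j b_j/\beta_i]$ has length $\beta_j (b_j - a_j)/\beta_i$, combining with $\beta_j/\beta_i \geq 1/(4q)$ (which follows by applying \cref{cor:boundratio} with the roles of $i$ and $j$ swapped) yields $|\text{supp}(h)| \geq 1/(4q^2)$. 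Note also that $q \geq 1$ since $q(b_i - a_i) \geq 1$ and $b_i - a_i \leq 1$.

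Next I would establish non-emptiness of $\text{supp}(h) \cap \text{supp}(f_i)$. If $\text{supp}(h)$ lay strictly above $\text{supp}(f_i)$, i.e., $\beta_j a_j > \beta_i b_i$, then $\beta_j u_j > \beta_i u_i$ almost surely, forcing $p_i = 0$ and contradicting $p_i > 1/(2n) > 0$; the symmetric argument using $p_j > 0$ rules out $\text{supp}(h)$ lying below $\text{supp}(f_i)$. I would then split on the relative positions of the two intervals: in either containment case, the intersection equals the shorter interval and so has length at least $\min(1/q,\, 1/(4q^2)) = 1/(4q^2)$. The remaining case is partial overlap, say $a_i \leq c \leq b_i \leq d$ with $c = \beta_j a_j/\beta_i$ and $d = \beta_j b_j/\beta_i$, where the intersection is $[c, b_i]$.

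The main obstacle is producing a quantitative, $n$-free lower bound on $b_i - c$ in the partial-overlap case. The natural estimate $p_i \leq \mathbb{P}[\beta_i u_i \geq \beta_j u_j] \leq q(b_i - c)$ combined with $p_i > 1/(2n)$ only yields $b_i - c > 1/(2qn)$, which is not independent of $n$. To obtain a bound depending only on $q$, I expect to leverage the geometric constraints $|\text{supp}(h)| \geq 1/(4q^2)$, $d \leq \beta_j/\beta_i \leq 4q$, $a_i \geq 0$, and $b_i \leq 1$, together with a symmetric analysis applied to $p_j$ (which in this configuration yields an analogous lower bound on $d - a_i$). Combining both estimates with the length lower bounds on $\text{supp}(f_i)$ and $\text{supp}(h)$ should force the partial-overlap configuration itself to have a length of order $1/q^2$, yielding $L_q$ of the form $1/(4q^2)$ or $\Theta(1/q^2)$ uniformly. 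If this geometric route falls short, a fallback would be to sharpen \cref{cor:boundratio} under the hypothesis $|p_i - 1/n| < 1/(2n)$ to push the ratio closer to $1$, tightening the partial-overlap geometry enough to rule out degenerate intersections.
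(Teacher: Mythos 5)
Your proposal handles the containment case correctly and, importantly, correctly isolates the partial-overlap case as the genuine obstacle: you observe that the direct estimate $p_i \leq q(b_i - c)$ combined with $p_i > 1/(2n)$ gives only an $n$-dependent bound. However, the routes you sketch for closing this gap would not succeed. The purely geometric constraints (lengths of the two supports, the bound $d \leq 4q$, and non-emptiness) do \emph{not} force a lower bound on the overlap: one can place $\text{supp}(f_i) = [0, 1/q]$ and $\text{supp}(h) = [1/q - \varepsilon,\, 1/q - \varepsilon + 1/(4q^2)]$ for arbitrarily small $\varepsilon > 0$, satisfying every geometric hypothesis. Likewise, a symmetric estimate on $p_j$ gives $d - a_i \geq 1/(2qn)$, still $n$-dependent, and sharpening the multiplier-ratio bound does not help because the problem lies in the \emph{positions} of the two scaled supports, not in the ratio $\beta_j/\beta_i$.

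The idea you are missing is to compare $p_i$ and $p_j$ \emph{to each other} rather than each to $1/n$ directly. Working in the scaled coordinates ($\widetilde{u}_k = \beta_k u_k$, normalized so $\min_k \beta_k = 1$), let $[a, b]$ be the intersection of $\text{supp}(\widetilde{f_i})$ and $\text{supp}(\widetilde{f_j})$ and suppose $b - a < 1/(4q)$. One then bounds
\[
p_j \leq \mathbb{P}[\beta_i u_i \leq b] \cdot \mathbb{P}\Bigl[\max_{k \neq i} \beta_k u_k \leq b\Bigr], \qquad
p_i \geq \mathbb{P}[\beta_i u_i \geq b] \cdot \mathbb{P}\Bigl[\max_{k \neq i} \beta_k u_k \leq b\Bigr],
\]
where the second factor is \emph{shared} and therefore cancels in the comparison. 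Since the density of $\widetilde{f_i}$ is at most $q/\beta_i$ and $\beta_i \geq 1$, the short overlap gives $\mathbb{P}[\beta_i u_i \leq b] < 1/4$, hence $p_i > 3\, p_j$, which contradicts $1/(2n) < p_i, p_j < 3/(2n)$. This "common background event" trick is what renders the argument $n$-free, and it is precisely the step absent from your proposal. Scaling back to the original coordinates costs another factor of at most $1/(4q)$, yielding $L_q = 1/(16 q^2)$.
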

\begin{proof}
We consider the random variables $u_i\sim D_i,\ u_j\sim D_j$ after scaled: $\widetilde{u_i}=\beta_i u_i,\ \widetilde{u_j}=\beta_j u_j$, which have PDF: $\widetilde{f_i}(u)=(1/\beta_i)\,f_i\left(u/\beta_i\right)$ and $\widetilde{f_j}(u)=(1/\beta_j)\,f_j\left(u/\beta_j\right)$.
Without loss of generality we will assume that the $\min_{k\in N}\beta_k=1$, then by \cref{cor:boundratio} we know that $\max_{k\in N}\beta_k\leq 4\,q$.

We will first prove that $\text{supp}(\widetilde{f_i})\cap \text{supp}(\widetilde{f_j})$ has lower bounded length. 
Note that $\widetilde{f_i}$ is $(p/\beta_i, q/\beta_i)$-PDF-bounded and $\widetilde{f_j}$ is $(p/\beta_j, q/\beta_j)$-PDF-bounded.

If one of the two support intervals contain the other, then the length of their intersection will be at least $\min\{\beta_i/q, \beta_j/q\}\geq 1/q$.

Otherwise, if $\text{supp}(\widetilde{f_i})$ lies on the right of $\text{supp}(\widetilde{f_j})$, suppose their intersection is $[a, b]$ (there will not be a vacant intersection since then $\beta_ju_j$ would always be smaller than $\beta_iu_i$), then it always holds that $\beta_iu_i\geq a,\, \beta_ju_j\leq b$. We claim that the length of $[a, b]$ is at least $1/(4\,q)$. Otherwise, consider 
\begin{align*}
    p_j &= \mathbb{P}\left[\beta_ju_j=\max_{k\in N}\beta_ku_k\right] \\
    &= \mathbb{P}\left[\beta_iu_i\leq b\right]\cdot\mathbb{P}\left[\beta_ju_j=\max_{k\in N}\beta_ku_k\, \bigg|\, \beta_iu_i\leq b\right] \\
    &\leq \mathbb{P}\left[\beta_iu_i\leq b\right]\cdot\mathbb{P}\left[\max_{k\in N, k\ne i}\beta_ku_k\leq b\right],
\end{align*}
while
\begin{align*}
    p_i &= \mathbb{P}\left[\beta_iu_i=\max_{k\in N}\beta_ku_k\right] \\
    &\geq \mathbb{P}\left[\beta_iu_i\geq b\right]\cdot\mathbb{P}\left[\beta_iu_i=\max_{k\in N}\beta_ku_k\, \bigg|\, \beta_ju_j\geq b\right] \\
    &\geq \mathbb{P}\left[\beta_iu_i\geq b\right]\cdot\mathbb{P}\left[\max_{k\in N, k\ne i}\beta_ku_k\leq b\right].
\end{align*}
If the length of $[a, b]$ is less than $1/(4\,q)$, we have
\begin{align*}
    \mathbb{P}\left[\beta_iu_i\leq b\right] = \mathbb{P}\left[\beta_iu_i\in [a, b]\right] < \frac{q}{\beta_i}\cdot\frac{1}{4q} \leq \frac{1}{4}, 
\end{align*}
which indicates that $p_i>3\, p_j$ while it should be true that $1/(2\,n)<p_i, p_j<3/(2\,n)$, hence the contradiction.
The argument is symmetric for the case where $\text{supp}(\widetilde{f_i})$ lies on the left of $\text{supp}(\widetilde{f_j})$, which also gives the same lower bound $1/(4\,q)$ on the length.

Therefore we conclude that $I = [a, b] = \text{supp}(\widetilde{f_i})\cap \text{supp}(\widetilde{f_j})$ has length at least $1/(4\,q)$.
Then $f_j$ is supported on $[a/\beta_j, b/\beta_j]$ and $f_i$ is supported on $[a/\beta_i, b/\beta_i]$. Thus
\begin{align*}
\forall u\in[\frac{a}{\beta_i}, \frac{b}{\beta_i}],\, \frac{\beta_i}{\beta_j}u\in [\frac{a}{\beta_j}, \frac{b}{\beta_j}],\, h(u)=f_j(\frac{\beta_i}{\beta_j}u)>0 \\
\Rightarrow [a/\beta_i, b/\beta_i]\subseteq\text{supp}(h),
\end{align*}
and
\begin{align*}
    \,\,\,\,\forall u\in[\frac{a}{\beta_i}, \frac{b}{\beta_i}],\ f_i(u)>0\, \Rightarrow\, [a/\beta_i, b/\beta_i]\subseteq\text{supp}(f_i).
\end{align*}
Therefore the interval $[a/\beta_i, b/\beta_j]\subseteq \text{supp}(h)\cap\text{supp}(f_i)$, with a length of at least $1/(4\,q)\cdot1/(4\,q)=1/(16\,q^2)$. This proves our lemma that the intersection $I^*$ has length at least $L_{q}=1/(16\,q^2)$.

\end{proof}

\label{app:boundgap}
\lemboundgap*
\begin{proof}
In \cref{app:exp} we show that
\begin{align*}
    \mathbb{E}\left[u_i\, \bigg|\, \beta_j u_j = \max_{k\in N}\beta_k u_k\right] \leq \mathbb{E}\left[u_i\right]
\end{align*}
and
\begin{align*}
\mathbb{E}\left[u_i\, \bigg|\, \beta_i u_i=\max_{k\in N}\beta_k u_k\right] \geq \mathbb{E}\left[u_i\, \big|\, \beta_i u_i \geq \beta_j u_j\right].
\end{align*}
Then it suffices to show that there is a constant gap between $\mathbb{E}\left[u_i\, |\, \beta_i u_i > \beta_j u_j\right]$ and $\mathbb{E}[u_i]$. Let
\begin{align*}
\mathcal{P} = \mathbb{P}\left[\beta_i u_i \geq \beta_j u_j\right]=\int_{0}^{1}f_i(u)\, F_j\left(\frac{\beta_i}{\beta_j}u\right) du,
\end{align*}
and 
\begin{align*}
\Delta\mathbb{E} = \mathbb{E}\left[u_i\, \big|\, \beta_i u_i \geq \beta_j u_j\right]-\mathbb{E}[u_i],
\end{align*}
then we have
\begin{align*}
    \Delta\mathbb{E} &= \frac{1}{\mathcal{P}}\int_{0}^{1}u\, f_i(u) \left(F_j\left(\frac{\beta_i}{\beta_j}u\right)-\mathcal{P}\right) du.
\end{align*}
Let $g(u)=F_j\left(\frac{\beta_i}{\beta_j}u\right)-\mathcal{P}$, and $h(u)=f_j\left(\frac{\beta_i}{\beta_j}u\right)$.
It is clear that $g(u)$ is monotonically increasing, moreover, we can lower bound the derivative of $g(u)$ on the support of $h(u)$, which is an interval and we denote this range as $\text{supp}(h)$:
\begin{align*}
g'(u) = \frac{\beta_i}{\beta_j}\, f_j\left(\frac{\beta_i}{\beta_j}u\right) \geq \frac{p}{4q},\ u\in \text{supp}(h).
\end{align*}
The inequality holds since $\beta_j/\beta_i\leq4\,q$ from \cref{cor:boundratio}.
Let $\text{supp}(f_i)$ denote the support interval of $f_i(u)$. 
In \cref{lem:overlap} we show a lower bound, $L_{q}$, which only depends on $q$, on the length of the interval $I^* = \text{supp}(h) \cap \text{supp}(f_i)$. 
We consider such interval $I^*=[l^*, r^*]\subseteq [0, 1]$ with midpoint $m^*=(l^* + r^*)/2$, and we know that $r^*-l^*\geq L_{q}$.
From the previous analysis, we know that for any $u\in I^*$, it always holds that $f_i(u)\geq p$ and $g'(u)\geq D_{p, q}=p/(4\,q)$.

Since
\begin{equation}
    \int_{0}^{1}f_i(u)\, g(u)\, du=0,
\label{eq:int}
\end{equation}
combined with $g(u)$'s continuity and monotonicity, there exists a point $u^*\in[0, 1]$ where $g(u^*)=0$. The interval $I^*$ must have at least half of its length that lies on the left or right side of $u^*$, without loss of generality we assume that $u^*\leq m^*$ and interval $[m^*, r^*]$ lies on the right of $u^*$.
Let
\begin{align*}
    c_1 = \int_{u^*}^{\frac{m^*+r^*}{2}}f_i(u)\, g(u)\, du,\ 
    c_2 = \int_{\frac{m^*+r^*}{2}}^{1}f_i(u)\, g(u)\, du.
\end{align*}
When $u\in[u^*, \frac{m^*+r^*}{2}]$, we have $f_i(u)\geq0,\, g(u)\geq g(u^*)=0$, hence $c_1\geq0$. While $u\in[\frac{m^*+r^*}{2}, r^*]\subseteq I^*$, we have that $f_i(u)\geq p$ and
\begin{align*}
    g(u)&\geq g\left(\frac{m^*+r^*}{2}\right) \geq g(m^*)+\frac{r^*-m^*}{2}\cdot D_{p, q} \\
    &\geq g(u^*)+\frac{L_{p}}{4}\cdot D_{p, q} = \frac{L_{p}\, D_{p, q}}{4}.
\end{align*}
Then we can lower bound $c_2$ by a positive constant $G_{p, q}$:
\begin{align*}
    c_2 \geq \int_{\frac{m^*+r^*}{2}}^{r^*}f_i(u)\, g(u)\, du \geq \frac{r^*-m^*}{2}\cdot p\cdot \frac{L_{q}\, D_{p, q}}{4} 
    \geq \frac{p\, L^2_{q}\, D_{p, q}}{16} = G_{p,q}.
\end{align*}
\cref{eq:int} indicates that
\begin{align*}
    \int_{0}^{u^*}f_i(u)\, g(u)\, du = -(c_1+c_2).
\end{align*}
Then we have
\begin{align*}
    &\int_{0}^{u^*}u\, f_i(u)\, g(u)\, du \geq -u^*\, (c_1+c_2), \\
    &\int_{u^*}^{\frac{m^*+r^*}{2}}u\, f_i(u)\, g(u)\, du \geq u^*\, c_1, \\
    &\int_{\frac{m^*+r^*}{2}}^{1}u\, f_i(u)\, g(u)\, du \geq \frac{m^*+r^*}{2}\, c_2,
\end{align*}
and we can lower bound $\Delta\mathbb{E}$ by
\begin{align*}
    \Delta\mathbb{E} &\geq -u^*(c_1+c_2)+u^*c_1+\frac{m^*+r^*}{2}c_2 \\
    &= \left(\frac{m^*+r^*}{2}-u^*\right)c_2 \geq \frac{L_{q}\,G_{p, q}}{4}.
\end{align*}
Therefore we have finished the proof with 
\begin{align*}
    C_{p, q} = \frac{L_{p}\,G_{p, q}}{4} = \frac{p^2}{2^{20} \,q^7}\leq1.
\end{align*}
\end{proof}

\subsection{Envy-free: Combining Previous Results}
\label{app:combine}
For any two agents $i, j\in N$, and each item $\alpha\in M$, let $X_\alpha$ denote its contribution to $u_i(A_i)$ and $Y_\alpha$ denote its contribution to $u_i(A_j)$.
In particular, $X_\alpha=0$ if $\alpha\notin A_i$ and $X_\alpha=u_i(\alpha)$ if $\alpha\in A_i$; while $Y_\alpha=0$ if $\alpha\notin A_j$ and $Y_\alpha=u_i(\alpha)$ if $\alpha\in A_j$.
When the set of equalizing multipliers are approximated such that $|p_i-1/n|\leq\delta=\min(C_{p, q}/(4\,n), 1/(4\,n))$ for all $i\in N$, following from the deduced constant gap in \cref{lem:boundgap}, we have
\begin{align*}
    \mathbb{E}\left[X_\alpha\right] &= \mathbb{P}\left[\alpha\in A_i\right] \cdot \mathbb{E}\left[u_i(\alpha)\, \big|\, \alpha\in A_i\right] \\
    &= \mathbb{P}\left[\beta_i u_i = \max_{k\in N}\beta_k u_k\right]\cdot \mathbb{E}\left[u_i\, \bigg|\, \beta_i u_i = \max_{k\in N}\beta_k u_k\right] \\
    &\geq (\frac{1}{n}-\delta)\cdot\mathbb{E}\left[u_i\, \bigg|\, \beta_i u_i = \max_{k\in N}\beta_k u_k\right], \\
    \mathbb{E}\left[Y_\alpha\right] &= \mathbb{P}\left[\alpha\in A_j\right] \cdot \mathbb{E}\left[u_i(\alpha)\, \big|\, \alpha\in A_j\right] \\
    &= \mathbb{P}\left[\beta_j u_j = \max_{k\in N}\beta_k u_k\right]\cdot \mathbb{E}\left[u_i\, \bigg|\, \beta_j u_j = \max_{k\in N}\beta_k u_k\right] \\
    &\leq (\frac{1}{n}+\delta)\cdot\mathbb{E}\left[u_i\, \bigg|\, \beta_j u_j = \max_{k\in N}\beta_k u_k\right], \\
    &\quad\Rightarrow \mathbb{E}\left[X_\alpha\right] - \mathbb{E}\left[Y_\alpha\right] \geq C_{p, q}/n-2\delta\geq C_{p, q}/(2n).
\end{align*}
Note that $X_\alpha, Y_\alpha$ are independently and identically distributed for all $\alpha\in M$, and
\begin{align*}
    u_i(A_i) = \sum_{\alpha\in M}X_\alpha, \quad
    u_i(A_j) = \sum_{\alpha\in M}Y_\alpha.
\end{align*}
Thus we can bound $u_i(A_i)$ by Chernoff's bound: for any $\alpha$, 
\begin{align*}
    &\mathbb{P}\left[u_i(A_i) < \left(1-\frac{C_{p, q}}{4\, n\, \mathbb{E}[X_\alpha]}\right) m\, \mathbb{E}\left[X_\alpha\right]\right] \\
    &\leq \exp\left( -\frac{m\, C^2_{p, q}}{32\, n^2\, \mathbb{E}\left[X_\alpha\right]}\right)
    \leq \exp\left(-\frac{m\, C^2_{p, q}}{32\, n}\right).
\end{align*}
where the last inequality follows from $\mathbb{E}\left[X_\alpha\right]\leq 1/n$. Similarly we can bound $u_i(A_j)$:
\begin{align*}
    &\mathbb{P}\left[u_i(A_j) > \left(1+\frac{C_{p, q}}{4\, n\, \mathbb{E}\left[Y_\alpha\right]}\right) m\, \mathbb{E}\left[Y_\alpha\right]\right] \\
    &\leq \exp\left( -\frac{m\, C^2_{p, q}}{48\, n^2\, \mathbb{E}\left[Y_\alpha\right]}\right)
    \leq \exp\left(-\frac{m\, C^2_{p, q}}{48\, n}\right).
\end{align*}
Then we can use union bound to bound the probability $\mathcal{P}_{ij}$ that neither of the above two events happen. With probability 
\begin{align*}
    &\mathcal{P}_{ij} \geq 1 - \exp\left(-\frac{m\, C^2_{p, q}}{32\, n}\right) - \exp\left(-\frac{m\, C^2_{p, q}}{48\, n}\right) \\
    &\geq 1-\frac{2}{n^2}\exp\left(2\, \log n-\frac{m\, C^2_{p, q}}{32\, n}\right),
\end{align*}
we have $u_i(A_i)\geq u_i(A_j)$.
Again we use union bound on the probability that for any $i, j\in N, u_i(A_i)\geq u_i(A_j)$ (it just means that the allocation is envy-free):
\begin{align*}
    \mathcal{P} \geq 1-2\, \exp\left(2\, \log n-\frac{m\, C^2_{p, q}}{32\, n}\right),
\end{align*}
which suggests that the allocation is envy-free with probability at least $1-2\,\exp\left(2\, \log n-\mathit{const}(p, q)\, m/n\right)$.

\section{Negative Result for Round Robin Algorithm}
\label{app:neg}
Suppose agent 1 has uniform distribution on $[0.6, 1]$, and agent $n$ has uniform distribution on $[0, 1]$. Assume in the round robin algorithm, there are a total of $t$ rounds. Agent 1 gets the first item, and agent $n$ gets the last item. With probability $p_1=1/3$, event A happens, where agent $n$ values the item that agent 1 gets in the first round at least $2/3$.

Consider the last two items that agent $n$ gets, let $X_{t-1}, X_{t}$ denote agent $n$'s utility on them. From Lemma 3.2 of \citet{MS20}, we know their distribution is $X_{t-1}\sim\mathcal{D}^{\max(n+1)}_{\leq X_{t-2}}, X_t\sim \mathcal{D}^{\max(1)}_{\leq X_{t-1}}$, where $\mathcal{D}^{\max(k)}_{\leq T}$ denote the distribution of the maximum of $k$ samples drawn from $\mathcal{D}$ truncated at $T$. $\mathcal{D}^{\max(n+1)}_{\leq X_{t-2}}$ is stochastically dominated by $\mathcal{D}^{\max(n+1)}$, which is the $(n+1)$th order statistic of $(n+1)$ samples. Hence, with probability at least $p_2=(1/3)^{n+1}$, event B happens, where $X_{t-1}\leq 1/3$, and since $X_t\leq X_{t-1}$ we also have $X_t\leq 1/3$. 

The probability that both events happen is at least $p_1\cdot p_2=(1/3)^{n+2}$, since event A and event B are independent. 
When both event A and event B happen, consider agent $1$ trading the first item she gets for the last two items agent $n$ gets, then agent $1$'s utility strictly increase, since $0.6+0.6>1$; at the same time, agent $n$'s utility does not decrease, since $X_{t-1}+X_{t}\leq 2/3$. Then the original allocation is Pareto dominated by the allocation after this trade. Therefore, when $n=\Theta(1)$, this means that with constant probability, the allocation with round robin algorithm is not Pareto-optimal, no matter how large $m$ is.
As the dash-dotted line in \cref{fig:experiments} suggests, for larger $m$, such trade for Pareto improvement is more prevalent.

\section{Details on Empirical Results}
\label{app:experiment1}

\subsection{Setup}
Code for all our experiments can be found at
\url{https://github.com/pgoelz/asymmetric}.
We implemented the approximate multiplier algorithm and the main experiments in Python (3.7.10).
We rely on Scipy for evaluating integrals (\texttt{integrate.quad}) and for the PDF and CDF of the beta distributions.
We optimize fractional Maximum Nash Welfare using cvxpy (1.1.14), which in turn calls the MOSEK solver (9.2.9).
Integer MNW allocations are found using the Baron (2020.4.14), called through pyomo (6.1.2).
To check Pareto-optimality, we use the Gurobi solver (9.0.3).
Finally, we use Numpy in version 1.17.3.

All experiments were run on a MacBook Pro with a 3.1 GHz Dual-Core i5 processor and 16 GB RAM, running macOS 10.15.7.
To verify the allocation probabilities, and for \cref{fig:percentile}, we use Mathematica 13.0.0 on the same machine (Mathematica code is included in the above Git repository).

\subsection{Utility Distributions}
The ten utility distributions that we study in the body of the paper all come from a parametric family, which we will call peak distributions.
Specifically, each peak distribution $\mathcal{D}^\mathit{peak}_a$ is parameterized by its \emph{peak} $a \in (0, 1)$, and has the PDF $f_a$ where
\[ f_a(x) = \begin{cases}
1/10 + (18/10) \, x / a & \text{if $x \leq a$} \\
1/10 + (18/10) \, (1 - x) / (1 - a) & \text{else}.
\end{cases}\]
This PDF grows linearly from $f_a(0) = 1/10$ to $f_a(a) = 19/10$ and decreases in another linear segment from there to $f_a(1) = 1/10$.
Thus, the distributions are all $(1/10, 19/10)$-PDF-bounded as claimed, and have different means and skews depending on their peak.
We study the distributions $\{\mathcal{D}^\mathit{peak}_{i/11}\}_{i=1, \dots, 10}$.

\subsection{Computing Maximum Nash Welfare}
As we mention in the body, finding discrete MNW allocations with BARON ran too slowly for our experiments.
For example, allocating $50$ random items took on average 5 minutes each time (over 50 random instances); since we estimate probabilities by sampling 1\,000 random instances, this datapoint (with still modest $m$) would take about three days to compute.
Given that finding a MNW allocation is NP-complete, this failure is to be expected.

The main trick for computing MNW in the literature, due to \citet{CKM+19}, assumes that the utilities of agents can only take on a small set of integer values.
Unfortunately, this trick is not applicable in our setting, since we have many more items and since discretizing the utilities is likely to introduce violations of PO.

Instead, we propose to optimize a convex program to find the \emph{fractional} allocation with maximal Nash welfare, which is much more efficient.
Then, we round the fractional allocation into a proper allocation, by assigning each item to the agent who receives
the largest share of the item in the fractional allocation.
Performing any such rounding on an fPO fractional allocation preserves fractional Pareto optimality, which is easy to see given the characterization of fPO given in our Model section.

\subsection{Divisibility by $n$}
\label{app:experiment2}
In the experiment displayed in \cref{fig:experiments} in the body of the paper, we evaluate the following sequence of items: $m \in \{10, 20, 100, 200, 500, 1\,000, 2\,000, 5\,000, 10\,000\}$.
This progression is a natural way to explore the space of $m$ on a logarithmic axis, but comes with one caveat: All $m$ are divisible by $n=10$, a special case in which envy-free allocations are known to appear at lower $m$ than in the general case \citep{MS19}.
To verify that our empirical findings are robust to $m$ that are not multiples of $n$, we repeat the experiment for the first values of $m$, but shifting each $m$ by $3$ as follows: $m \in \{13, 23, 53, 103, 203, 503, 1003\}$.
\begin{figure}[h]
    \centering
    \includegraphics[width=.5\linewidth]{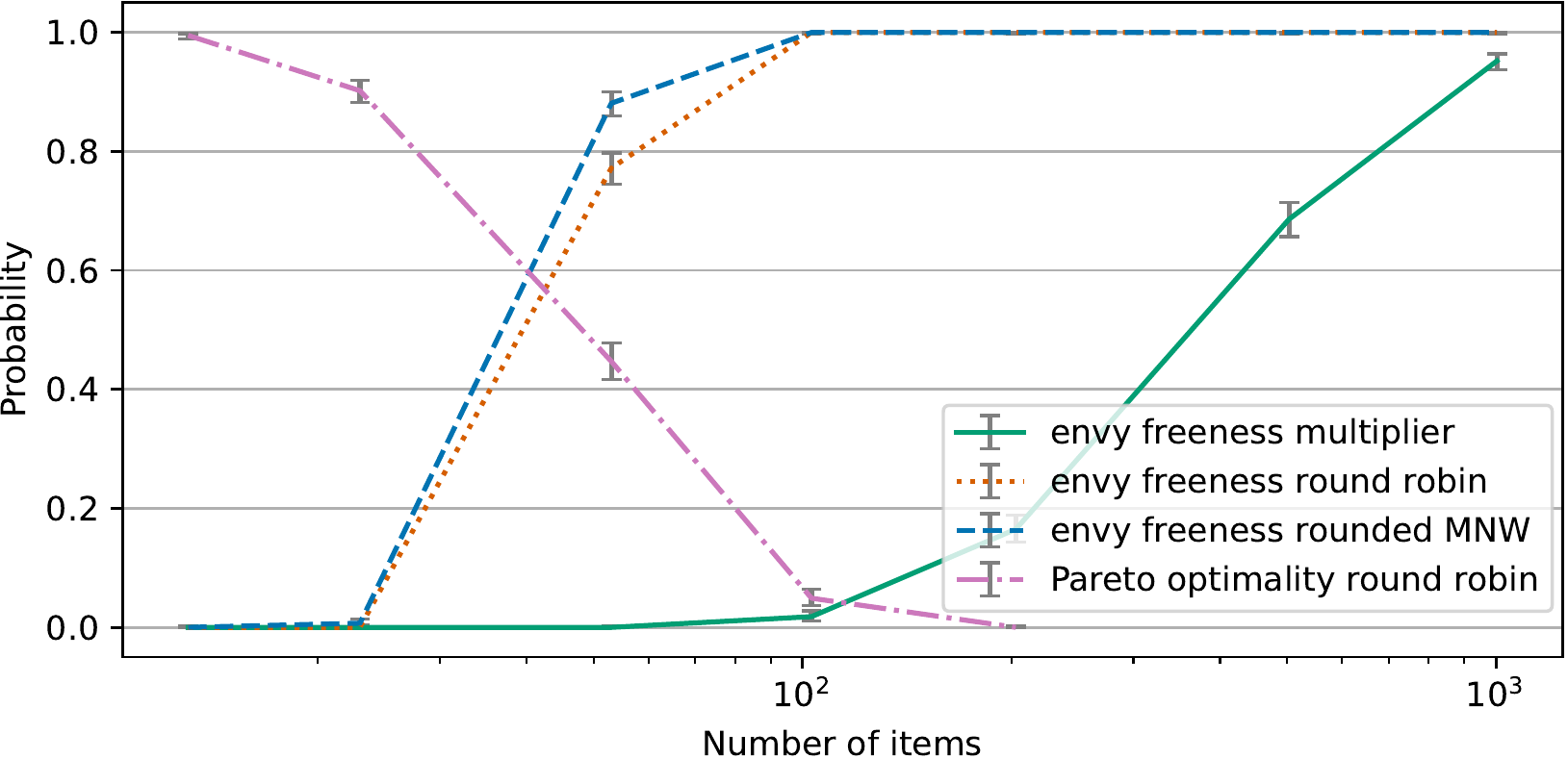}
    \caption{Version of the experiment in \cref{fig:experiments}, but with $m \equiv 3 \pmod{10}$.}
    \label{fig:experiments_offset3}
\end{figure}
We see that this shift in $m$ causes the round robin allocations and the rounded MNW allocations to converge towards envy-freeness at a slightly slower rate, and also makes the round robin algorithm be even less likely to be Pareto-optimal.
The large trends identified in the body of the paper all persist, and we see no notable difference due to the offset when $m \geq 200$.

\subsection{Experiments with Beta Distributions}
\label{app:betaexperiments}
We also study the five utility distributions in \cref{fig:scaling}:
\[ \mathcal{D}_A = \mathit{Beta}(1/2, 1/2),\quad \mathcal{D}_B = \mathit{Beta}(1, 3), \quad \mathcal{D}_C = 
\mathit{Beta}(2, 5), \quad\mathcal{D}_D = \mathit{Beta}(2, 2),\quad \mathcal{D}_E = \mathit{Beta}(5, 1). \]
We chose them based on the illustration displayed on top of the Wikipedia page on Beta distributions\footnote{See \url{https://en.wikipedia.org/wiki/Beta_distribution}, accessed on January 11, 2022. The figure is \url{https://commons.wikimedia.org/wiki/File:Beta_distribution_pdf.svg}.} at the time of writing.
Note that the distributions are not $(p,q)$-PDF-bounded for any $p,q$; however, running our algorithm with a value of $q=5$, which holds for most agents, produced multipliers of an accuracy of $10^{-5}$ in 29 seconds.

\begin{figure}[h]
\centering
    \includegraphics[width=.5\linewidth]{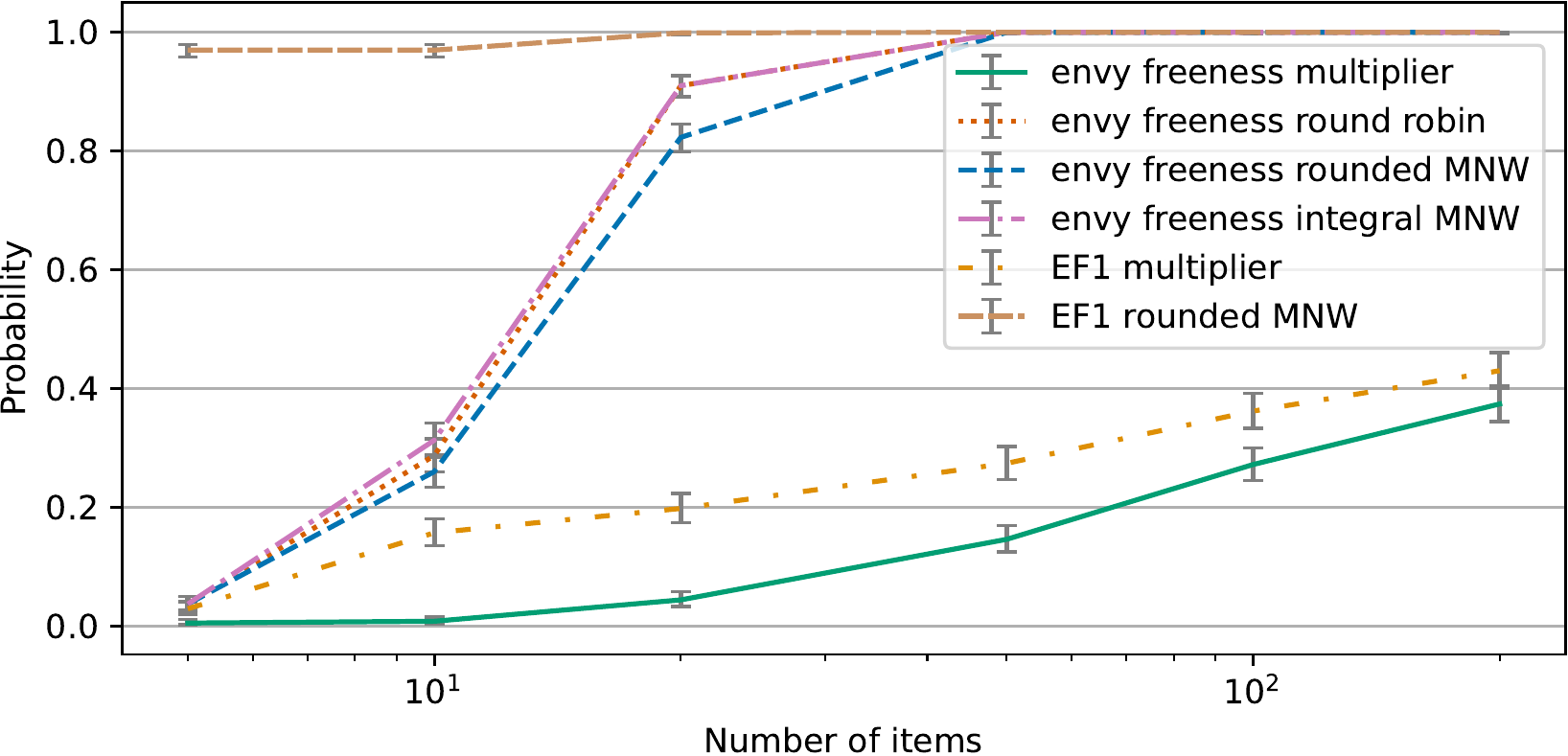}
    \caption{Version of the experiment in \cref{fig:experiments}, but with the five beta distributions and integral MNW.}
    \label{fig:experimentmnw}
\end{figure}
Since BARON ran sufficiently fast on the given five agents, this plot contains both the integral MNW allocation and the rounded fractional MNW allocation.
The figure also includes lines for when envy-freeness up to one good (EF1) holds.
This is always true for the integral MNW and round robin, the rounded MNW satisfies it almost always in our experiments, and the multiplier allocation satisfies it at a similar rate as EF.
\fi

\end{document}